\newif\ifhyper\IfFileExists{hyperref.sty}{\hypertrue}{\hyperfalse}
\ifhyper\usepackage{hyperref}\fi
\newtheorem{theorem}{Theorem}
\newtheorem{lemma}[theorem]{Lemma}
\newtheorem{proposition}[theorem]{Proposition}
\newtheorem{claim}[theorem]{Claim}
\newtheorem{fact}[theorem]{Fact}
\newtheorem{obs}[theorem]{Observation}
\newtheorem{definition}[theorem]{Definition}
\newcommand{\eps}{\epsilon}
\newcommand{\pr}{\mathbf{Pr}}
\newcommand{\etal}{{\em et al.\ }}
\newcommand{\equ}[1]{

\begin{equation}
#1
\end{equation}}
\newcommand{\equn}[1]{$$ #1 $$}
\newcommand{\sgn}{\mathrm{sign}}
\newcommand{\sign}{\mathrm{sign}}
\renewcommand{\span}{\mathrm{span}}
\newcommand{\ignore}[1]{}
\newcommand{\cref}[1]{Corollary~\ref{cor:#1}}
\newcommand{\bits}{\{-1,1\}}
\newcommand{\bn}{\bits^n}
\newcommand{\R}{{\mathbb{R}}}
\newcommand{\Z}{{\mathbb Z}}
\newcommand{\E}{\operatorname{{\bf E}}}
\newcommand{\littlesum}{\mathop{\textstyle \sum}}
\newcommand{\poly}{\mathrm{poly}}
\newcommand{\quasipoly}{\mathrm{quasipoly}}
\newcommand{\wh}{\widehat}
\newcommand{\eqdef}{\stackrel{\textrm{def}}{=}}
\newcommand{\mychows}{\vec{\chi}}
\renewcommand{\Pr}{\operatorname{{\bf Pr}}}
\newcommand{\dist}{\mathrm{dist}}
\newcommand{\dchow}{d_{\mathrm{Chow}}}
\newcommand{\h}{\mathbf{H}}
\newcommand{\bA}{\mathbf{A}}
\newcommand{\mcwa}{{\mathcal W}}
\newcommand{\opt}{\mathsf{opt}}
\newcommand{\newad}[1]{{#1}}
\newcommand{\chowallow}{\kappa}
\newcommand{\alg}{\mathcal{A}}
\renewcommand{\subsection}{\@startsection{subsection}{2}{0pt}{-6pt}{-5pt}{\normalsize\bf}}
\renewcommand{\subsubsection}{\@startsection{subsubsection}{3}{0pt}{-12pt}{-5pt}{\normalsize\bf}}
\newcommand{\rnote}[1]{\footnote{{\bf [[Rocco: {#1}\bf ]] }}}
\newcommand{\anote}[1]{\footnote{{\bf [[Anindya: {#1}\bf ]] }}}
\date{}
\begin{document}

\title{
Nearly optimal solutions for the Chow Parameters Problem and low-weight
approximation of halfspaces
}

\author{Anindya De\thanks{{\tt anindya@cs.berkeley.edu}.  Research supported by NSF award CCF-0915929, CCF-1017403 and CCF-1118083.}\\
University of California, Berkeley\\
\and
Ilias Diakonikolas\thanks{{\tt ilias@cs.berkeley.edu}.  Research supported by a Simons Postdoctoral Fellowship.}\\
University of California, Berkeley\\
\and
Vitaly Feldman\thanks{{\tt vitaly@post.harvard.edu.}} \\
IBM Almaden Research Center\\
\and Rocco A.\ Servedio\thanks{{\tt rocco@cs.columbia.edu}. Supported by NSF grants CNS-0716245, CCF-0915929, and CCF-1115703.}\\
Columbia University\\
}

\maketitle

\setcounter{page}{0}

\thispagestyle{empty}

~
\vskip -.5in
~

\begin{abstract}
The \emph{Chow parameters} of a Boolean function $f: \{-1,1\}^n \to \{-1,1\}$ are its $n+1$ degree-0 and
degree-1 Fourier coefficients.  It has been known since 1961 \cite{Chow:61, Tannenbaum:61} that the (exact values of the) Chow parameters of
any linear threshold function $f$ uniquely specify $f$ within the space of all Boolean functions, but until
recently \cite{OS11:chow} nothing was known about efficient algorithms for \emph{reconstructing} $f$ (exactly or approximately) from exact or approximate values of its Chow parameters.  We refer to this reconstruction problem as the \emph{Chow Parameters Problem.}

Our main result  is a new algorithm for the Chow Parameters Problem which,
given (sufficiently accurate approximations to) the Chow parameters of any linear threshold function $f$, runs in time
$\tilde{O}(n^2)\cdot (1/\eps)^{O(\log^2(1/\eps))}$ and
with high probability outputs a representation of an LTF  $f'$ that is $\eps$-close to $f$.
The only previous algorithm \cite{OS11:chow} had running time $\poly(n) \cdot 2^{2^{\tilde{O}(1/\eps^2)}}.$

As a byproduct of our approach, we show that for any linear threshold function $f$ over $\{-1,1\}^n$,
there is a linear threshold function $f'$ which is $\eps$-close to $f$ and has all weights that are integers at most $\sqrt{n} \cdot (1/\eps)^{O(\log^2(1/\eps))}$.
This significantly improves the best previous result of~\cite{DiakonikolasServedio:09} which gave
a $\poly(n) \cdot 2^{\tilde{O}(1/\eps^{2/3})}$ weight bound, and is close to the
known lower bound of
$\max\{\sqrt{n},$ $(1/\eps)^{\Omega(\log \log (1/\eps))}\}$ \cite{Goldberg:06b,Servedio:07cc}.
Our techniques also yield improved algorithms for related problems in learning theory.

In addition to being significantly stronger than previous work, our results
are obtained using conceptually simpler proofs.
The two main ingredients underlying our results are (1) a new structural
result showing that for $f$ any linear threshold function and $g$ any bounded
function, if the Chow parameters of $f$ are close to the Chow
parameters of $g$ then $f$ is close to $g$; (2) a new boosting-like algorithm
that given approximations to the Chow parameters of a linear threshold function outputs a bounded
function whose Chow parameters are close to those of $f$.
\end{abstract}

\newpage

\section{Introduction}\label{sec:intro}

\subsection{Background and motivation.} \label{sec:background}
A \emph{linear threshold function}, or LTF, over $\{-1,1\}^n$ is a Boolean function $f: \{-1,1\}^n \to \{-1,1\}$ of the form $$f(x) = \sgn\left( \littlesum_{i=1}^n w_ix_i - \theta \right),$$ where $w_1, \ldots, w_n, \theta \in \R$. The function $\sign(z)$ takes value $1$ if $z \geq 0$ and
takes value $-1$ if $z < 0$; the $w_i$'s are the \emph{weights} of $f$ and $\theta$ is the \emph{threshold}.
Linear threshold functions have been intensively studied for decades in many different fields.
They are variously known as  ``halfspaces'' or ``linear separators'' in machine learning and computational learning theory, ``Boolean threshold functions,'' ``(weighted) threshold gates'' and
``(Boolean) perceptrons (of order 1)'' in computational complexity, and as ``weighted majority games''
in voting theory and the theory of social choice.  Throughout this paper we shall refer to them simply as LTFs.

The \emph{Chow parameters} of a function $f: \{-1,1\}^n \to \R$ are the $n+1$
values
\[
\wh{f}(0)= \E[f(x)], \quad \wh{f}(i) = \E[f(x)x_i] \text{~for $i=1,\dots,n$},
\]
i.e. the $n+1$ degree-0 and degree-1 Fourier coefficients of $f$.  (Here and throughout the paper, all probabilities and expectations are with respect to the uniform distribution over
$\{-1,1\}^n$ unless otherwise indicated.)
It is easy to see that
in general the Chow parameters of a Boolean function may provide very little information about $f$; for example, any parity function on at least two variables has all its Chow parameters equal to 0.  However, in
a surprising result, C.-K.\ Chow \cite{Chow:61} showed that the Chow parameters of
an LTF $f$ \emph{uniquely} specify $f$ within the space of all Boolean functions mapping $\{-1,1\}^n \to \{-1,1\}.$
Chow's proof (given in Section~\ref{ssec:chow}) is simple and elegant, but is completely non-constructive; it does not give any clues as to how one might use the Chow parameters to find $f$ (or an LTF that is close to $f$).  This naturally gives rise to the following algorithmic question, which we refer to as the ``Chow Parameters Problem:''

\begin{quote}
{\bf The Chow Parameters Problem} (rough statement):  Given (exact or approximate) values for the Chow
parameters of an unknown LTF $f$, output an (exact or approximate) representation of $f$ as
$\sign(v_1 x_1 + \cdots + v_n x_n - \theta').$
\end{quote}

\noindent
{\bf Motivation and Prior Work.}  We briefly survey some previous research on the Chow Parameters problem (see Section~1.1 of \cite{OS11:chow} for a more detailed and extensive account).  Motivated by applications in electrical engineering, the Chow Parameters Problem was intensively studied in the
1960s and early 1970s;
several researchers suggested heuristics of various sorts
\cite{Kaszerman:63,Winder:63,KaplanWinder:65,Dertouzos:65} which
were experimentally analyzed in \cite{Winder:69}. See
\cite{Winder:71} for a survey covering much of this early work and
\cite{Baugh:73, Hurst:73} for some later work from this
period.

Researchers in game theory and voting theory rediscovered Chow's theorem
in the 1970s \cite{Lapidot:72}, and the theorem and related results
have been the subject of study in those communities down to the present
\cite{DubeyShapley:79,EinyLehrer:89,TaylorZwicker:92,Freixas:97,
Leech:03,Carreras:04,FM:04,TT:06,APL:07}.  Since the Fourier
coefficient $\wh{f}(i)$ can be viewed as representing the ``influence''
of the $i$-th voter under voting scheme $f$ (under the ``Impartial Culture
Assumption'' in the theory of social choice, corresponding
to the uniform distribution over inputs $x \in \{-1,1\}^n$), the
Chow Parameters Problem corresponds to designing
a set of weights for $n$ voters so that each individual voter has a
certain desired level of influence over the final outcome.

In the 1990s and 2000s several researchers in learning theory
considered the Chow Parameters Problem.
Birkendorf et al.\ \cite{BDJ+:98} showed that the
Chow Parameters Problem
is equivalent to the problem of efficiently learning
LTFs under the uniform distribution in the ``1-Restricted
Focus of Attention (1-RFA)''
model of Ben-David and Dichterman \cite{BenDavidDichterman:98}
(we give more details on this learning model in
Section~\ref{sec:learn}).
Birkendorf et al.\ showed that if $f$ is an LTF with integer weights
of magnitude at most $\poly(n)$,
then estimates of the Chow parameters that are
accurate to within an additive $\pm \eps/\poly(n)$ information-theoretically
suffice to specify the halfspace $f$ to within $\eps$-accuracy.
Other information-theoretic results of this flavor
were given by \cite{Goldberg:06b,Servedio:07cc}.
In complexity theory several generalizations of
Chow's Theorem were given in \cite{Bruck:90,RSO+:95}, and
the Chow parameters play an important role in a recent study
\cite{CHIS:10} of the approximation-resistance of linear threshold predicates
in the area of hardness of approximation.

Despite this considerable interest in the Chow Parameters Problem from
a range of different communities, the first provably effective
and efficient algorithm for the Chow Parameters Problem was only
obtained fairly recently.
\cite{OS11:chow} gave a
$\poly(n) \cdot 2^{2^{\tilde{O}(1/\eps^2)}}$-time algorithm which, given
sufficiently accurate estimates of the Chow parameters of an unknown
$n$-variable LTF $f$, outputs an LTF $f'$ that has
$\Pr[f(x) \neq f'(x)] \leq \eps.$

\medskip

\subsection{Our results.} \label{ssec:results}

In this paper we give a significantly improved algorithm for the Chow Parameters Problem, whose running time dependence on $\eps$ is almost doubly exponentially better than the \cite{OS11:chow} algorithm.  Our main result is the following:

\begin{theorem} [Main, informal statement] \label{thm:main}
There is an $\tilde{O}(n^2) \cdot (1/\eps)^{O(\log^2(1/\eps))}  \cdot \log(1/\delta)$-time algorithm $\alg$ with the following property:
 Let $f : \bn \to \bits$ be an LTF and let $0 <
\eps,\delta < 1/2$.  If $\alg$ is given as input $\eps,\delta$ and (sufficiently precise estimates of) the Chow parameters of $f$, then $\alg$ outputs integers $v_1,\dots,v_n,\theta$ such that with probability at least $1 - \delta$, the linear threshold function
$f^{\ast}=\sign(v_1 x_1 + \cdots + v_n x_n - \theta)$
satisfies $\Pr_x [f(x) \neq f^*(x)] \leq \eps.$
\end{theorem}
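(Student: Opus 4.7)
The plan follows the two-ingredient decomposition highlighted in the abstract. Given estimates $\widetilde{\alpha}\approx\widehat{f}$ of the Chow parameters of $f$, the algorithm first produces a bounded real-valued function $g:\bn\to[-1,1]$ whose Chow vector satisfies $\|\widehat{g}-\widehat{f}\|_2\le\kappa$ for a suitably chosen $\kappa=\kappa(\eps)$. A structural theorem then guarantees $\|g-f\|_1\le\eps$, and since $f\in\bits$ and $|g|\le 1$ we have $|f-g|\ge 1$ on every disagreement $\sgn(g)\ne f$, so $\Pr[\sgn(g)\ne f]\le\eps$. Naturally $g$ will take the form $g(x)=P_{[-1,1]}(L(x))$ for a linear form $L$; the output LTF $f^*$ is a suitable integer-rounding of $\sgn(L)=\sgn(g)$.

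To build $g$ I would run a boosting / projected-gradient iteration. Starting from $L_0\equiv 0$, at round $t$ set $g_t=P_{[-1,1]}(L_t)$, estimate $\widehat{g_t}$ by sampling, let $\Delta^{(t)}=\widetilde{\alpha}-\widehat{g_t}$, and update
\[
L_{t+1}(x)=L_t(x)+\lambda\sum_{i=0}^{n}\Delta_i^{(t)}x_i,
\]
where $x_0\equiv 1$ absorbs the threshold and $\lambda$ is a small step size. A convex potential $\Phi_t=\E[F(L_t)]-\langle L_t,f\rangle$, with $F$ a primitive of the clip $P_{[-1,1]}$, decreases by $\Omega(\lambda\|\Delta^{(t)}\|_2^2)$ per round while remaining bounded below; hence $\|\Delta^{(t)}\|_2\le\kappa$ is reached after $T=O(1/\kappa^2)$ rounds. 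Each round costs $\tO(n)$ for the update plus a $\poly(n,1/\kappa)\cdot\log(1/\delta)$ sampling cost to estimate $\widehat{g_t}$.

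The heart of the argument is the structural theorem: for any LTF $f$ and any $g:\bn\to[-1,1]$, $\|\widehat{f}-\widehat{g}\|_2\le\kappa$ forces $\|f-g\|_1\le\eta(\kappa)$, and we need $\kappa=\eps^{O(\log^2(1/\eps))}$ to suffice for $\eta(\kappa)\le\eps$. The natural route is the identity $\E[(f-g)L]=\sum_{i=0}^{n}v_i(\widehat{f}(i)-\widehat{g}(i))$ for $L=\sum_i v_i x_i$, bounded by $\kappa\|L\|_2$ via Cauchy--Schwarz, for a cleverly chosen test $L$ (typically the linear form defining $f$ itself). On the disagreement region this expectation also lower-bounds a weighted disagreement mass, and converting to the unweighted $L_1$ distance requires anti-concentration of $L$ on the disagreement set. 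This is handled by the critical-index / head-and-tail decomposition combined with a Berry--Esseen argument on the regular tail; the $\log^2(1/\eps)$ exponent traces back to the head size required for $\eps$-regularity.

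Finally, to convert the real-weight LTF $\sgn(L_T)$ into one with small integer weights, rescale $L_T$ and round each coefficient; a standard perturbation argument using anti-concentration of the linear form controls the fraction of inputs on which the sign flips and yields weights of magnitude $\sqrt{n}\cdot(1/\eps)^{O(\log^2(1/\eps))}$. The total running time is dominated by $T\cdot\tO(n)\cdot\poly(n,1/\kappa)\cdot\log(1/\delta)=\tO(n^2)\cdot(1/\eps)^{O(\log^2(1/\eps))}\log(1/\delta)$. The main obstacle throughout is the quantitative structural theorem: obtaining only a quasipolynomial-in-$1/\kappa$ dependence for $\eta$, rather than the exponential dependence implicit in previous approaches, is exactly what drives the near-doubly-exponential speedup over the algorithm of \cite{OS11:chow}.
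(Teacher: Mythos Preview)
Your high-level architecture matches the paper exactly, and your description of the algorithmic ingredient (projected-gradient/boosting to produce an LBF $g=P_{[-1,1]}(L)$ whose Chow vector is $\kappa$-close to $\widehat f$, then take $f^*=\sign(L)$) is essentially the paper's {\tt ChowReconstruct}; your convex potential $\Phi_t=\E[F(L_t)]-\langle L_t,f\rangle$ is a clean variant of the potential the paper actually uses, and either analysis gives the $O(1/\kappa^2)$ iteration bound. The LBF-to-LTF conversion you give is also the paper's.

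The gap is in your sketch of the structural theorem. The route you describe---pair $f-g$ against the linear form $L$ defining $f$, bound $\E[(f-g)L]\le\kappa\|L\|_2$ by Cauchy--Schwarz, lower-bound the same expectation by a margin-weighted disagreement mass, and then invoke anti-concentration of $L$ (via critical index / Berry--Ess{\'e}en) to pass to $\|f-g\|_1$---is precisely the approach of \cite{OS11:chow}, and the paper explicitly explains why it stalls: to make that argument work one must replace $f$ by a nearby LTF $f'$ with anti-concentration radius $r$, and the Hamming error $\dist(f,f')$ one incurs is exponentially small in $1/r$ while the anti-concentration one gains is only $\tilde{O}(1/\sqrt{\log(1/r)})$. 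Balancing these gives $\dist(f,g)\le\tilde{O}(1/\sqrt{\log(1/\kappa)})$, i.e.\ one needs $\kappa$ doubly-exponentially small in $1/\eps$, not $\kappa=\eps^{O(\log^2(1/\eps))}$. Your claim that ``the $\log^2(1/\eps)$ exponent traces back to the head size required for $\eps$-regularity'' does not survive this accounting.

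What the paper does instead is abandon the Chow-proof template and follow (a sharpened version of) Goldberg's geometric argument. The key new lemma says: if an $\alpha$-fraction of hypercube vertices lie within Euclidean distance $\beta\le\alpha^{O(\log(1/\alpha))}$ of a hyperplane $\h$, then there is a hyperplane $\h'$ containing all but a tiny fraction of them (the critical-index machinery is used \emph{here}, to make Goldberg's original $\beta\le 1/\quasipoly(n)$ bound independent of $n$, not to get anti-concentration of the test linear form). One then runs an iterative argument through a decreasing chain of affine subspaces of $\R^n$, at each stage either exhibiting a direction along which the centers of mass of $\{x:f(x)=1,\,g(x)<1\}$ and $\{x:f(x)=-1,\,g(x)>-1\}$ are far apart (which lower-bounds $\dchow(f,g)$ via Proposition~\ref{prop:euclidean}), or dropping the dimension by one; after $O(\log(1/\eps))$ stages Odlyzko's bound on hypercube points in a subspace forces termination. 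This is where the $\eps^{O(\log^2(1/\eps))}$ genuinely comes from.

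A minor point: you do not need a separate rounding-plus-anti-concentration step to get integer weights. The paper builds the integrality into the boosting algorithm itself by snapping each estimated Chow parameter to a grid of width $\Theta(\kappa/\sqrt{n})$ before forming the update; the cumulative weight vector is then automatically an integer multiple of that grid size, and summing the step norms over $O(1/\kappa^2)$ rounds gives the $\sqrt{n}\cdot(1/\eps)^{O(\log^2(1/\eps))}$ bound directly.
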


Thus we obtain an efficient randomized polynomial approximation scheme (ERPAS)
with a {\em quasi-polynomial} dependence on $1/\eps$. We note that for the subclass of LTFs
with integer weights of magnitude at most $\poly(n)$,
our algorithm runs in $\poly(n/\eps)$ time, i.e. it is a {\em fully} polynomial randomized approximation scheme (FPRAS)
(see Section~\ref{ssec:main-proofs} for a formal statement).
Even for this restricted subclass of LTFs, the algorithm of~\cite{OS11:chow} runs in time doubly exponential in $1/\eps$.

Our main result has a range of interesting implications in learning theory. First, it directly gives an efficient algorithm for learning LTFs in the uniform distribution $1$-RFA
model. Second, it yields a very fast agnostic-type algorithm for learning LTFs in the standard uniform distribution PAC model. Both these algorithms run in time quasi-polynomial in $1/\eps$. We elaborate on these learning applications in Section~\ref{sec:learn}.

An interesting feature of our algorithm is that it outputs an LTF with integer weights of magnitude at most $\sqrt{n} \cdot (1/\eps)^{O(\log^2(1/\eps))}$. Hence, as a corollary of our approach, we obtain essentially optimal bounds on approximating arbitrary LTFs using LTFs with small integer weights.  It has been known since the 1960s that every $n$-variable LTF $f$ has an exact representation $\sign(w \cdot x - \theta)$ in which all the weights $w_i$ are integers satisfying $|w_i| \leq 2^{O(n \log n)}$, and H{\aa}stad \cite{Hastad:94} has shown that there is an $n$-variable LTF $f$ for which {\em any} integer-weight representation must have each
$|w_i| \geq2^{\Omega(n \log n)}.$  However, by settling for an approximate representation
(i.e. a representation $f' = \sign(w \cdot x - \theta)$ such that $\Pr_{x} [f(x) \neq f'(x)] \leq \eps$), it is possible to get away with much smaller integer weights.  Servedio \cite{Servedio:07cc} showed that every LTF $f$ can be $\eps$-approximated using integer weights each at most $\sqrt{n} \cdot 2^{\tilde{O}(1/\eps^2)}$, and this bound was subsequently improved (as a function of $\eps$) to $n^{3/2} \cdot 2^{\tilde{O}(1/\eps^{2/3})}$ in \cite{DiakonikolasServedio:09}.  (We note that ideas and tools that were developed in work on low-weight approximators for LTFs have proved useful in a range of other contexts, including hardness of approximation \cite{FGRW09}, property testing \cite{MORS:10}, and explicit constructions of pseudorandom objects \cite{DGJ+:10}.)

Formally, our approach to proving Theorem~\ref{thm:main} yields the following nearly-optimal weight bound on $\eps$-approximators for LTFs:

\begin{theorem} [Low-weight approximators for LTFs] \label{thm:lowwt}
Let $f: \bn \to \bits$ be any LTF.  There is an LTF
$f^{\ast} = \sign(v_1 x_1 + \cdots + v_n x_n - \theta)$ such that $\Pr_x [f(x) \neq f^{\ast}(x)] \leq \eps$ and the weights $v_i$ are integers
that satisfy 
$$\littlesum_{i=1}^n v_i^2 = n \cdot  (1/\eps)^{O(\log^2(1/\eps))}.$$
\end{theorem}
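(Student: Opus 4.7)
My plan is to derive Theorem~\ref{thm:lowwt} directly from the algorithmic Theorem~\ref{thm:main}. Because the statement is purely existential, I can evaluate the algorithm $\mathcal{A}$ of Theorem~\ref{thm:main} on the \emph{exact} Chow vector $\vec{\chi}(f) = (\wh{f}(0),\wh{f}(1),\ldots,\wh{f}(n))$, which is a well-defined function of $f$ itself. The algorithm then outputs integer weights $v_1,\ldots,v_n$ and a threshold $\theta$ such that $f^{\ast} = \sgn(v\cdot x - \theta)$ satisfies $\Pr_x[f(x) \neq f^{\ast}(x)] \leq \eps$. What remains is to verify the weight bound $\sum_i v_i^2 \leq n\cdot(1/\eps)^{O(\log^2(1/\eps))}$.

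To do this I would inspect the output of $\mathcal{A}$ carefully. As sketched in the abstract, $\mathcal{A}$ is a boosting-style procedure that iteratively builds a \emph{real-valued} weight vector $w$ by adding scaled residual Chow vectors $\vec{\chi}(f) - \vec{\chi}(h^{(t)})$ over $T = O(\log^2(1/\eps))$ iterations, at which point the structural theorem (ingredient (1) in the abstract) guarantees that $\sgn(w\cdot x - \theta)$ is $\eps$-close to $f$ as a Boolean function. By Parseval, each residual Chow vector has $\ell_2$ norm at most $2$, so the triangle inequality (combined with the step-size bound from the boosting analysis) gives $\|w\|_2 \leq (1/\eps)^{O(\log^2(1/\eps))}$.

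The final step is to round $w$ to integer weights while preserving $\eps$-closeness to $f$. I would scale $w$ by a factor $N$ and round each coordinate to the nearest integer; the resulting LTF disagrees with $\sgn(w\cdot x - \theta)$ only on the ``bad'' set $\{x : |w\cdot x - \theta| \leq O(\sqrt{n}/N)\}$. To absorb this set into the $\eps$ error budget I would invoke anticoncentration: for a \emph{regular} LTF, Berry--Esseen yields $\Pr_x[|w\cdot x - \theta| \leq \tau] \leq O(\tau/\|w\|_2)$, which permits $N \sim \sqrt{n}/\eps$ up to polylog factors; for irregular LTFs I would first peel off the heavy coordinates using the critical-index decomposition of \cite{Servedio:07cc}, handling them separately at no extra cost in $\sum v_i^2$. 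Taking $N$ of order $\sqrt{n}/\eps$ times the appropriate polylog factor then yields $\sum_i v_i^2 \leq N^2\|w\|_2^2 + n \leq n\cdot(1/\eps)^{O(\log^2(1/\eps))}$.

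\textbf{The main obstacle} is the interaction between the boosting analysis and the rounding step: a naive rounding argument loses a factor of $n$ and yields only $\sum v_i^2 \sim n^2 \cdot \poly(1/\eps)$. Avoiding this requires exploiting the sharper anticoncentration of \emph{regular} LTFs (as in \cite{Servedio:07cc}) together with the critical-index decomposition, so that the scaling factor can be kept at $\sqrt{n}/\eps$ rather than the naive $n/\eps$---this is where the linear-in-$n$ (rather than quadratic-in-$n$) dependence in the weight bound is earned.
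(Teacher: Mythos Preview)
Your high-level plan---run the algorithm and read off integer weights---is correct, but the route you propose for obtaining the integer weights diverges from the paper and contains a real gap.

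\textbf{What the paper does.} The paper does not perform any post-hoc rounding. Instead, Theorem~\ref{thm:alg} ({\tt ChowReconstruct}) is engineered so that its output LBF $g$ already has weight vector of the form $\kappa v$ with $v$ an \emph{integer} vector satisfying $\|v\| = O(\sqrt{n}/\Delta^3)$, where $\Delta$ is the Chow-distance accuracy. This is achieved inside the boosting loop: at each step the estimated Chow parameters of the current hypothesis are rounded so that the increment $(\vec{\alpha} - \tilde{\chi}_{g_t})/2$ is coordinatewise an integer multiple of $\kappa = \eps/(4\sqrt{n+1})$. Summing over $T = O(1/\Delta^2)$ steps (each contributing $\ell_2$ norm $O(1)$) gives $\|w\| = O(1/\Delta^2)$, hence $\|v\| = \|w\|/\kappa = O(\sqrt{n}/\Delta^3)$. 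Setting $\Delta = \Theta(\chowallow(\eps))$ and taking $f^\ast = \sign(v \cdot x - \theta)$ immediately yields $\sum_i v_i^2 = O(n/\Delta^6) = n \cdot (1/\eps)^{O(\log^2(1/\eps))}$. No anti-concentration or critical-index argument is used; indeed the paper emphasizes (Section~\ref{ssec:results}) that this bypasses anti-concentration entirely.

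\textbf{The gap in your approach.} Your separate rounding step hinges on bounding $\Pr_x[|w\cdot x - \theta| \leq O(\sqrt{n}/N)]$ via anti-concentration of $w$. But $w$ is the weight vector produced by the boosting procedure, not a weight vector representing the original LTF $f$. There is no reason for $w$ to be regular, and the critical-index machinery of \cite{Servedio:07cc} decomposes the weights of $f$, not those of an arbitrary $w$ that merely computes a function close to $f$. ``Handling heavy coordinates of $w$ separately at no extra cost'' is exactly the delicate part, and you have not indicated how to do it; a naive treatment would lose the factor of $n$ you are trying to avoid. Even if this could be patched, it would reintroduce the anti-concentration arguments that the paper's approach was designed to eliminate.

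\textbf{A minor correction.} The number of boosting iterations is not $T = O(\log^2(1/\eps))$; to reach Chow distance $\chowallow(\eps) = \eps^{O(\log^2(1/\eps))}$ the algorithm runs for $O(\chowallow(\eps)^{-2}) = (1/\eps)^{O(\log^2(1/\eps))}$ steps. Your claimed bound $\|w\|_2 \leq (1/\eps)^{O(\log^2(1/\eps))}$ is correct, but for this reason rather than the one you gave.
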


The bound on the magnitude of the weights in the above theorem is optimal as a function of $n$ and nearly optimal as a function of $\eps$.
Indeed, as shown in~\cite{Hastad:94,Goldberg:06b}, in general any $\eps$-approximating LTF $f^{\ast}$ for an arbitrary $n$-variable
LTF $f$ may need to have integer weights at least $\max\{\Omega(\sqrt{n}),(1/\eps)^{\Omega(\log\log(1/\eps))}\}$.  Thus,
Theorem~\ref{thm:lowwt} nearly closes what was previously an almost exponential gap  between the known upper and lower bounds for this problem.
Moreover, the proof of Theorem~\ref{thm:lowwt} is constructive (as opposed e.g. to the one in~\cite{DiakonikolasServedio:09}),
i.e. there is a randomized $\poly(n) \cdot  (1/\eps)^{O(\log^2(1/\eps))}$-time algorithm that constructs an $\eps$-approximating LTF.

\medskip

\noindent {\bf Techniques.}  We stress that not only are the quantitative results of
Theorems~\ref{thm:main} and~\ref{thm:lowwt} dramatically stronger than
previous work, but the proofs are significantly more self-contained and
elementary as well. The \cite{OS11:chow}
algorithm relied heavily on several rather sophisticated
results on spectral properties of linear threshold functions; moreover, its proof
of correctness required a careful re-tracing of the
(rather involved) analysis of a fairly complex property testing algorithm
for linear threshold functions given in \cite{MORS:10}.  In contrast,
our proof of Theorem~\ref{thm:main} entirely bypasses these spectral
results and does not rely on \cite{MORS:10} in any way.
Turning to low-weight approximators, the improvement
from $2^{\tilde{O}(1/\eps^2)}$ in \cite{Servedio:07cc} to
$2^{\tilde{O}(1/\eps^{2/3})}$ in \cite{DiakonikolasServedio:09}
required a combination of  rather delicate linear programming arguments and
powerful results on the anti-concentration of sums of independent
random variables due to Hal{\'a}sz \cite{Hal:77}.  In contrast, our proof of Theorem~\ref{thm:lowwt}
bypasses anti-concentration entirely and does not require any sophisticated linear programming arguments.


Two main ingredients underlie the proof of Theorem~\ref{thm:main}.
The first is a new structural result relating the ``Chow distance''
and the ordinary (Hamming) distance between two functions $f$ and $g$, where $f$ is an LTF and $g$ is an arbitrary bounded function.
The second is a new and simple algorithm which, given (approximations to) the Chow parameters of an arbitrary Boolean function $f$,
efficiently construct a ``linear bounded function'' (LBF) $g$ --  a certain type of bounded function --
whose ``Chow distance'' from $f$ is small.
We describe each of these contributions in more detail below.

\subsection{The main structural result.}\label{sec:structuralresult}

In this subsection we first give the necessary definitions regarding Chow parameters and Chow distance,
and then state Theorem~\ref{thm:dchow-vs-dh}, our main structural
result.

\ignore{

\subsubsection{LTFs and LPFs}

\begin{definition}  Fix $X \subseteq \R^n.$  A \emph{linear threshold function}, or LTF,
over $X$ is a function $f: X \to \{-1,1\}$ of the form $f(x) = \sgn(\littlesum_{i=1}^n w_ix_i - \theta)$, where $w_1, \ldots, w_n, \theta \in \R$. The function $\sign(z)$ takes value $1$ if $z \geq 0$ and
takes value $-1$ if $z < 0.$  The $w_i$'s are the \emph{weights} of $f$ and $\theta$ is the \emph{threshold}.
\end{definition}
(We note that while Theorem~\ref{thm:dchow-vs-dh} is about  LTFs over the domain $X = \{-1,1\}^n$, in its proof we will need to work with LTFs over domains of the form $X = \{-1,1\}^i \times \R^{j}$.  If we do not specify the domain for an LTF $f$ then the implied domain is $\{-1,1\}^n.$)

We next define \emph{linear projection functions} (LPFs), which are a generalization of LTFs:

\begin{definition}
A \emph{linear projection function}, or LPF, over $\{-1,1\}^n$ is a function $g:\bits \to [-1,1]$ of the form $g(x) = P (\littlesum_{i=1}^n v_i x_i - \nu)$, where $v_1, \ldots, v_n, \nu \in \R$.
The function $P: \R \to [-1,1]$ is defined as $P(u) = u$ for $u \in [-1,1]$ and $P(u)  = \sgn(u)$ otherwise (i.e. $P$ ``projects'' its argument onto the interval $[-1,1]$).
\end{definition}
It is easy to see that any LTF over $\{-1,1\}^n$ is equivalent to some LPF.  In the other direction we have the following simple fact:

\begin{fact} \label{fact:LPF-cc}
Any LPF $g(x) = P (\littlesum_{i=1}^n v_i x_i - \nu)$ can be expressed
as a convex combination of finitely many LTFs, all of which have the same weights as $g$ (but potentially different
thresholds).
That is, there exists $r \in \Z^+$, $\{\lambda_j, \nu_j \}_{j=1}^{r}$ with $\lambda_j \geq 0$, $\nu_j \in \R$ for all $j \in [r]$,
and $\littlesum_{j=1}^{r} \lambda_j = 1$, such that $g(x) = \littlesum_{j=1}^{r} \lambda_j \cdot \sgn( \sum_{i=1}^n v_i x_i - \nu_j)$ for all $x \in \bits^n.$
\end{fact}

Fact~\ref{fact:LPF-cc} can be straightforwardly proved by induction on the number of points $x \in \{-1,1\}^n$
that have $|\littlesum_{i=1}^n v_i x_i - \nu| \in (-1,1)$; we omit the details.\rnote{Or should we write out a proof of this fact?  Seems it would have more notation than ideas.}

}

\subsubsection{Chow parameters and distance measures.}

We formally define the Chow parameters of a function on $\{-1,1\}^n$:

\begin{definition} \label{def:chow}
Given any function $f : \bn \to \R$, its \emph{Chow Parameters}
are the rational numbers $\wh{f}(0), \wh{f}(1), \dots, \wh{f}(n)$ defined by
$ \wh{f}(0) = \E[f(x)],$ $\wh{f}(i) = \E[f(x)x_i]$  for $1 \leq i \leq n$.
We say that the \emph{Chow vector} of $f$ is $ \mychows_f = (\wh{f}(0), \wh{f}(1), \dots, \wh{f}(n)).$
\end{definition}

The Chow parameters naturally induce a distance measure between functions $f,g$:

\begin{definition}  Let $f, g : \bn \to \R$. We define
the \emph{Chow distance} between $f$ and $g$ to be
$ \dchow(f,g) \eqdef  \| \mychows_f  - \mychows_g \|_2$,
i.e. the Euclidean distance between the Chow vectors.
\end{definition}

This is in contrast with the familiar $L_1$-distance between functions:

\begin{definition}
The \emph{distance}
between  two functions
$f, g : \bn \to \R$ is defined as $\dist(f,g) \eqdef \E [|f(x)-g(x)|].$
If $\dist(f,g) \leq \eps$, we say that $f$ and $g$ are \emph{$\eps$-close}.
\end{definition}

We note that if $f,g$ are Boolean functions with range $\{-1,1\}$ then $\dist(f, g)=2\Pr[f(x) \neq g(x)]$ and thus $\dist$ is equivalent (up to a factor of 2) to the familiar Hamming distance.
\ignore{(In the
 proof of Theorem~\ref{thm:dchow-vs-dh} we will eventually need to generalize $\dchow$ and $\dist$ to
functions over domains $\{-1,1\}^i \times \R^{j}$, but we defer this until later.)}

\subsubsection{The main structural result:  small Chow-distance implies small distance.}
\label{sec:mainresult}

The following fact can be proved easily using basic Fourier analysis (see~Proposition 1.5 in~\cite{OS11:chow}):

\begin{fact} \label{fact:distances}
Let $f, g: \bn \to \R.$ We have that $\dchow(f,g) \leq 2 \sqrt{\dist(f,g)}$.
\end{fact}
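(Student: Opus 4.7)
The plan is to derive the inequality in two short steps, using Parseval's identity and a simple pointwise comparison between $|f-g|^2$ and $|f-g|$.

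First, I would apply Parseval. Set $h=f-g$ so that $\widehat{h}(i)=\widehat{f}(i)-\widehat{g}(i)$ for every index $i$. Then the Chow distance squared is just the sum of the squared degree-$0$ and degree-$1$ Fourier coefficients of $h$, which is trivially bounded by the sum over all Fourier coefficients of $h$:
\[
\dchow(f,g)^2 \;=\; \sum_{i=0}^{n} \widehat{h}(i)^2 \;\le\; \sum_{S\subseteq[n]} \widehat{h}(S)^2 \;=\; \E[h(x)^2] \;=\; \E[(f(x)-g(x))^2].
\]

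Second, since in the context where this fact is used $f$ and $g$ are bounded functions with range contained in $[-1,1]$, we have $|f(x)-g(x)|\le 2$ pointwise, so $(f(x)-g(x))^2 \le 2\,|f(x)-g(x)|$ for every $x$. Taking expectations yields $\E[(f-g)^2] \le 2\,\E[|f-g|] = 2\,\dist(f,g)$. Combining with the previous display gives
\[
\dchow(f,g)^2 \;\le\; 2\,\dist(f,g) \;\le\; 4\,\dist(f,g),
\]
whence $\dchow(f,g)\le 2\sqrt{\dist(f,g)}$ as claimed. (In fact the stronger constant $\sqrt{2}$ holds, but the factor of $2$ in the statement is more than sufficient.)

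There is no real obstacle here: the only subtlety is the implicit boundedness assumption on $f$ and $g$, which is standard in the setting of this paper (Boolean or $[-1,1]$-valued functions) and without which the inequality fails in general. Under that mild assumption the argument is a two-line application of Parseval plus the pointwise bound $(f-g)^2\le 2|f-g|$.
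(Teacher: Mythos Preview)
Your argument is correct and matches what the paper intends: it does not give a detailed proof, merely stating that the fact follows from ``basic Fourier analysis'' (citing Proposition~1.5 of \cite{OS11:chow}), and your Parseval-plus-pointwise-bound argument is exactly the standard way to establish this. Your observation that the boundedness hypothesis is implicitly needed (and that the constant can in fact be sharpened to $\sqrt{2}$) is also correct; the statement as literally written for arbitrary $\R$-valued $f,g$ is false, but in the paper's context only Boolean or $[-1,1]$-valued functions arise.
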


Our main structural result, Theorem~\ref{thm:dchow-vs-dh}, is essentially a converse which
bounds $\dist(f,g)$ in terms of $\dchow(f, g)$ when $f$ is an LTF and $g$ is any bounded function:

\begin{theorem} [Main Structural Result] \label{thm:dchow-vs-dh}
Let $f : \bn$ $\rightarrow \bits$ be an LTF and  $g :  \bn \to [-1,1]$ be any bounded function.
If $\dchow(f,g) \le \epsilon$ then $$\dist(f,g) \leq 2^{-\Omega \left( \sqrt[3]{\log(1/\epsilon)} \right)}.$$ \end{theorem}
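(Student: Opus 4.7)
The plan is to leverage a pointwise ``margin identity'' satisfied by any LTF together with Fourier orthogonality, and then to convert the resulting margin-weighted closeness into genuine $L_1$ closeness via an anti-concentration / critical-index argument. I would begin by fixing an arbitrary representation $f(x) = \sgn(L(x))$ with $L(x) = \sum_{i=1}^n w_i x_i - \theta$ and rescaling so that $\|w\|_2 = 1$. If $|\theta| \ge C \sqrt{\log(1/\eps)}$ for a suitable absolute constant $C$, a Hoeffding bound shows $f$ is $O(\eps)$-close to a constant, and a direct comparison of $\wh{f}(0)$ with $\wh{g}(0)$ then forces $g$ to be $O(\eps)$-close to the same constant; hence I may assume $|\theta| = O(\sqrt{\log(1/\eps)})$ from now on.

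The key observation is a pointwise \emph{sign identity}: since $f(x) = \sgn(L(x))$ and $g(x) \in [-1,1]$, a short case analysis shows that $f(x) - g(x)$ and $L(x)$ always have the same sign, so
$$ \E[(f-g)\cdot L] \;=\; \E\bigl[\,|f-g|\cdot|L|\,\bigr]. $$
On the other hand, $L$ is a degree-$\le 1$ polynomial, so the left-hand side expands in the Fourier basis as $-\theta\,(\wh{f}(0)-\wh{g}(0)) + \sum_i w_i\,(\wh{f}(i)-\wh{g}(i))$, which by Cauchy--Schwarz is at most $\sqrt{\|w\|_2^2 + \theta^2}\cdot\dchow(f,g)$. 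Combining the two gives $\E[|f-g|\cdot|L|] \;\le\; O(\eps\sqrt{\log(1/\eps)})$.

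Next I would convert this \emph{margin-weighted} bound into a genuine $L_1$ bound. For any threshold $\tau>0$, the trivial split
$$ \E[|f-g|] \;\le\; \frac{1}{\tau}\,\E[|f-g|\cdot|L|] \;+\; 2\Pr_x\bigl[\,|L(x)|<\tau\,\bigr] $$
reduces the task to controlling the anti-concentration of $L$. For $\sigma$-\emph{regular} LTFs, i.e.\ those with $\max_i |w_i|\le\sigma$, the Berry--Esseen theorem yields $\Pr[|L|<\tau] = O(\tau+\sigma)$, and optimizing $\tau$ completes the argument in this regime. For non-regular $f$, I would invoke the standard critical-index decomposition: let $k$ be the smallest index after which the tail of $w$ is $\sigma$-regular, condition on the first $k$ head coordinates, and observe that each restriction $f|_\rho$ is either close to a constant (handled as in the large-$|\theta|$ case above) or is a regular LTF on the remaining variables, to which the previous analysis applies.

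The main obstacle lies in this last step, because restrictions interact nontrivially with Chow parameters: a level-$1$ Chow coefficient of $f|_\rho$ involves Fourier coefficients of $f$ at all levels up to $k+1$, about which the hypothesis $\dchow(f,g)\le\eps$ says nothing directly. Handling this requires a careful second-moment averaging over $\rho\in\bits^k$, together with a joint optimization of the three interacting parameters (the regularity level $\sigma$, the margin threshold $\tau$, and the critical-index depth $k$); the balance of these three parameters is what ultimately produces the cubic-root exponent in the stated bound $2^{-\Omega(\log^{1/3}(1/\eps))}$.
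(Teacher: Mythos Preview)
Your approach via the margin identity $\E[(f-g)L] = \E[|f-g|\cdot|L|]$ together with the split $\E[|f-g|] \le \tau^{-1}\E[|f-g|\cdot|L|] + 2\Pr[|L|<\tau]$ is precisely the method underlying Theorem~1.6 of \cite{OS11:chow}, and as the paper discusses explicitly in Section~\ref{sec:outline}, this route is known to yield only $\dist(f,g) \le \tilde O(1/\sqrt{\log(1/\eps)})$, which is almost exponentially weaker than the target $2^{-\Omega(\sqrt[3]{\log(1/\eps)})}$. The bottleneck is the anti-concentration term $\Pr[|L|<\tau]$: the best one can guarantee for an arbitrary LTF (even after passing to a nearby representation, as \cite{OS11:chow} does) is $\Pr[|L'|\le\tau]\le\tilde O(1/\sqrt{\log(1/\tau)})$, and optimizing over $\tau$ then gives exactly the \cite{OS11:chow} bound and no better.

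The obstacle you flag in your final paragraph is real, and the fix you sketch does not work. If you keep the same linear form $L$ under each restriction $\rho$ of the head coordinates, then averaging $\Pr_{x_T}[|L_\rho(x_T)|<\tau]$ over $\rho$ simply reproduces the global quantity $\Pr_x[|L(x)|<\tau]$, so the restriction buys nothing for the anti-concentration term. If instead you renormalize or replace $L_\rho$ separately for each $\rho$, you lose the global margin bound: the restricted Chow distance $\dchow(f|_\rho,g|_\rho)$ involves Fourier coefficients of $f-g$ of degree up to $|H|+1$, and a second-moment average over $\rho$ of $\dchow(f|_\rho,g|_\rho)^2$ equals $\sum_{S\subseteq H,\,|T'|\le 1}\widehat{(f-g)}(S\cup T')^2$, which is not controlled by $\dchow(f,g)$. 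No joint optimization of $(\sigma,\tau,k)$ within this framework produces the cubic-root exponent.

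The paper's proof follows a completely different route, sharpening Goldberg~\cite{Goldberg:06b} rather than the original Chow argument. The key structural step (Lemma~\ref{lem:1}) is a geometric statement that bypasses anti-concentration entirely: if an $\alpha$-fraction of hypercube vertices lie within Euclidean distance $\beta = \alpha^{O(\log(1/\alpha))}$ of a hyperplane $\h$, then all but a negligible fraction of them lie \emph{exactly on} some other hyperplane $\h'$. The critical index does appear, but only inside the proof of this lemma, to reduce the ambient dimension before invoking Goldberg's theorem; it is not used to control $\Pr[|L|<\tau]$. The theorem is then proved contrapositively by an iterative argument that repeatedly applies Lemma~\ref{lem:1} to trap the disagreement region in affine subspaces of strictly decreasing dimension; after roughly $\log(1/\eps)$ rounds the subspace is too low-dimensional to contain $\Omega(\eps)2^n$ hypercube points (Fact~\ref{fac:affine}), forcing one of the intermediate steps to witness a large projection of $\mu_+-\mu_-$ and hence a large Chow distance.
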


Since Chow's theorem says that if $f$ is an LTF and $g$ is any bounded
function then $\dchow(f,g)=0$ implies that $\dist(f,g)=0,$
Theorem~\ref{thm:dchow-vs-dh} may be viewed as a ``robust'' version of Chow's Theorem.
Note that the assumption that $g$ is bounded is necessary for the above statement, since the function
$g(x) = \littlesum_{i=0}^n \wh{f}(i) x_i$ (where $x_0 \equiv 1$) has $\dchow(f,g)=0$, but may have $\dist(f,g) = \Omega(1)$.
Results of this sort but with weaker quantitative bounds were given earlier in
\cite{BDJ+:98,Goldberg:06b,Servedio:07cc,OS11:chow}; we discuss the relationship between Theorem~\ref{thm:dchow-vs-dh} and some of this prior work below.

\medskip

\noindent {\bf Discussion.}  Theorem~\ref{thm:dchow-vs-dh} should be contrasted with Theorem~1.6 of \cite{OS11:chow},
the main structural result of that paper. That theorem says that for $f: \bn \to \bits$ any LTF and $g: \bn \to [-1,1]$
any bounded function\footnote{The theorem statement in \cite{OS11:chow} actually
requires that $g$ have range $\{-1,1\}$, but the proof is easily seen to extend to $g: \{-1,1\}^n \to
[-1,1]$ as well.}, if $\dchow(f,g) \leq \eps$ then $\dist(f,g) \leq \tilde{O}(1/\sqrt{\log(1/\eps)})$.
Our new Theorem~\ref{thm:dchow-vs-dh} provides a bound on
$\dist(f,g)$ which is almost exponentially stronger than the \cite{OS11:chow} bound.

Theorem~\ref{thm:dchow-vs-dh} should also be contrasted with Theorem~4
(the main result) of \cite{Goldberg:06b},
which says that for $f$ an $n$-variable LTF and $g$ any Boolean
function, if $\dchow(f,g) \leq (\eps/n)^{O(\log(n/\eps)\log(1/\eps))}$
then $\dist(f,g) \leq \eps.$ Phrased in this way, Theorem~\ref{thm:dchow-vs-dh}
says that for $f$ an LTF and $g$ any bounded function, if
$\dchow(f,g) \leq \eps^{O(\log^2(1/ \eps))}$ then $\dist(f,g) \leq \eps$.
So our main structural result may be viewed as an improvement
of Goldberg's result that removes its dependence on $n$.  Indeed, this is not a coincidence;
Theorem~\ref{thm:dchow-vs-dh} is proved by carefully extending and strengthening Goldberg's
arguments using the ``critical index'' machinery developed in recent studies of structural properties
of LTFs \cite{Servedio:07cc,OS11:chow,DGJ+:10}.

It is natural to wonder whether the conclusion of
Theorem~\ref{thm:dchow-vs-dh} can be strengthened to ``$\dist(f,g)
\leq \eps^c$'' where $c>0$ is some absolute constant.  We show that
no such strengthening is possible, and in fact, no conclusion of the
form ``$\dist(f,g) \leq 2^{-\gamma(\eps)}$'' is possible for any function
$\gamma(\eps) = \omega(\log(1/\eps)/\log\log(1/\eps))$; we prove this in
Section~\ref{sec:nearoptchow}.

\subsection{The algorithmic component.} \label{ssec:alg}

A straightforward inspection of the arguments in \cite{OS11:chow} shows that by
using our new Theorem~\ref{thm:dchow-vs-dh} in place of Theorem~1.6 of that paper
throughout, the running time of the \cite{OS11:chow}
algorithm can be improved to $\poly(n) \cdot 2^{(1/\eps)^{O(\log^2(1/\eps))}}.$
This is already a significant improvement over the
$\poly(n) \cdot 2^{2^{\tilde{O}(1/\eps^2)}}$ running time of
\cite{OS11:chow}, but is significantly worse than the
$\poly(n) \cdot (1/\eps)^{O(\log^2(1/\eps))}$ running time which is
our ultimate goal.

The second key ingredient of our results is a new algorithm for constructing an LTF
from the (approximate) Chow parameters of an LTF $f$. The previous approach to this problem \cite{OS11:chow} constructed an LTF with Chow parameters close to $\vec{\chi}_f$ directly and applied the structural result to the constructed LTF. Instead, our approach is based on the insight that it is substantially easier to find a bounded real-valued function $g$ that is close to $f$ in Chow distance. The structural result can then be applied to $g$ to conclude that $g$ is close to $f$ in $L_1$-distance.
The problem with this idea is, of course, that we need an LTF that is close to $f$ and not a general bounded function. However, we show that it is possible to find $g$ which is a ``linear bounded function'' (LBF), a type of bounded function closely related to LTFs. An LBF can then be easily converted to an LTF with only a small increase in distance from $f$. We now proceed to define the notion of an LBF and state our main algorithmic result formally.
We first need to define the notion of a projection:
\begin{definition} \label{def:projection}
For a real value $a$, we denote its projection to $[-1,1]$ by $P_1(a)$. That is, $P_1(a) = a$ if $|a| \leq 1$ and $P_1(a) = \sign(a)$, otherwise.
\end{definition}
\begin{definition} \label{def:lbf}
A function $g:\{-1,1\}^n \rightarrow [-1,1]$ is referred to as a {\em linear bounded function} (LBF) if there exists a vector of real values $w = (w_0,w_1,\ldots,w_n)$ such that
$g(x)  = P_1(w_0 + \littlesum_{i=1}^n w_i x_i)$. The vector $w$ is said to represent $g$.
\end{definition}

We are now ready to state our main algorithmic result:

\begin{theorem}[Main Algorithmic Result] \label{thm:alg}
There exists a randomized algorithm {\tt ChowReconstruct} that for every Boolean function $f: \{-1,1\}^n \rightarrow \{-1,1\}$,
given $\eps>0,\delta>0$ and a vector $\vec{\alpha} = (\alpha_0,\alpha_1,\ldots,\alpha_n)$ such that $\|\mychows_f - \vec{\alpha}\| \leq \eps$,
with probability at least $1-\delta$, outputs an LBF $g$ such that $\|\mychows_f - \mychows_g\| \leq 6 \eps$.
The algorithm runs in time $\tilde{O}(n^2 \eps^{-4} \log{(1/\delta}))$.
Further, $g$ is represented by a weight vector $ \kappa v \in \R^{n+1}$, where $\kappa \in \R$ and $v$ is an integer vector
of length $\|v\| = O(\sqrt{n}/\eps^3)$.
\end{theorem}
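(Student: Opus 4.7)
The plan is to implement a projected-gradient / boosting-style algorithm that maintains a weight vector $w^{(t)} \in \R^{n+1}$ together with its induced LBF $g^{(t)}(x) = P_1(\langle w^{(t)}, \phi(x)\rangle)$, where $\phi(x) = (1,x_1,\dots,x_n)$. At each iteration we draw $M = \tilde{O}(n\eps^{-2})$ fresh uniform samples to estimate the Chow vector $\mychows_{g^{(t)}}$ by a vector $\vec{\beta}^{(t)}$ with $\|\vec{\beta}^{(t)} - \mychows_{g^{(t)}}\| \le \eps$ w.h.p., then perform the additive update $w^{(t+1)} = w^{(t)} + (\vec{\alpha} - \vec{\beta}^{(t)})$. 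The procedure halts as soon as $\|\vec{\alpha} - \vec{\beta}^{(t)}\| \le 4\eps$; the triangle inequality then yields $\|\mychows_f - \mychows_{g^{(t)}}\| \le 6\eps$ as required.

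The correctness analysis uses a convex potential. Let $F(z) = \int_0^z P_1(s)\,ds$, which is convex, nonnegative, has $1$-Lipschitz gradient $P_1$, and satisfies $|z| - \tfrac{1}{2} \le F(z) \le |z|$. Define
\[
\Phi(w) \;=\; \E_x\!\left[F(\langle w,\phi(x)\rangle)\right] - \langle w, \vec{\alpha}\rangle,
\]
for which $\nabla \Phi(w^{(t)}) = \mychows_{g^{(t)}} - \vec{\alpha}$, so the noisy update moves against $\nabla \Phi$. The standard descent lemma for functions with $1$-Lipschitz gradient implies that whenever the true discrepancy $\|\mychows_{g^{(t)}} - \vec{\alpha}\|$ exceeds $5\eps$, even the noisy update decreases $\Phi$ by $\Omega(\eps^2)$ after absorbing the sampling error $O(\eps)$ into the quadratic term of the descent inequality.

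To bound the number of iterations, I would produce a lower bound on $\Phi$. Using $F(z) \ge |z| - \tfrac{1}{2} \ge z f(x) - \tfrac{1}{2}$ pointwise and $\|\mychows_f - \vec{\alpha}\| \le \eps$, we get $\Phi(w) \ge -\tfrac{1}{2} - \eps\|w\|$. Combined with $\Phi(w^{(0)}) = 0$ this gives $\sum_{t<T} \|\Delta^{(t)}\|^2 = O(1 + \eps \|w^{(T)}\|)$ where $\Delta^{(t)} = \vec{\alpha} - \vec{\beta}^{(t)}$; the Cauchy--Schwarz bound $\|w^{(T)}\|^2 \le T \sum_t \|\Delta^{(t)}\|^2$ then solves to $T = O(\eps^{-2})$ and $\|w^{(T)}\| = O(\eps^{-1})$. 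Summing the per-iteration cost gives total runtime $\tilde{O}(n^2 \eps^{-4})$, and a $\log(1/\delta)$ overhead comes from running $O(\log(1/\delta))$ independent trials and returning the one with smallest estimated discrepancy. For the integer representation, each coordinate of $w^{(T)}$ is a sum of rationals with common denominator $\Theta(M)$ (after first rounding $\vec{\alpha}$ to the same granularity, which only costs $O(\eps)$ in $\ell_2$), so setting $\kappa = 1/M$ yields an integer vector $v = w^{(T)}/\kappa$ with $\|v\| = O(\sqrt{n}/\eps^3)$ at the optimal choice of $M$.

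The main obstacle is the potential-function bookkeeping under sampling noise: the realized step $\Delta^{(t)}$ is not exactly $-\nabla\Phi(w^{(t)})$, so I must argue carefully that $\langle -\nabla\Phi(w^{(t)}), \Delta^{(t)}\rangle$ still dominates $\tfrac{1}{2}\|\Delta^{(t)}\|^2$ whenever the true discrepancy is comfortably above $\eps$. A union bound over the $T = O(\eps^{-2})$ iterations and $n+1$ Chow coordinates then controls the overall failure probability and pins down the sample-complexity choice of $M$.
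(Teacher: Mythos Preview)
Your approach is essentially the paper's, packaged more cleanly as gradient descent on a smooth convex potential: if one replaces your $\vec{\alpha}$ by $\mychows_f$ in the linear term and writes $\tilde\Phi(w)=\E_x[F(\langle w,\phi(x)\rangle)]-\langle w,\mychows_f\rangle$, a two–line computation (using $2F(z)=2g_t g'_t-g_t^2$) shows that the paper's potential satisfies $E(t)=1+2\tilde\Phi(w^{(t)})$. So the ``different'' potential is the same object, and the paper's pointwise case analysis is exactly your descent lemma for a $1$-Lipschitz gradient, written out by hand. Your Cauchy--Schwarz step $\|w^{(T)}\|^2\le T\sum_t\|\Delta^{(t)}\|^2$ is a genuine improvement over the paper's triangle-inequality bound $\|w\|=O(\eps^{-2})$: with the clean lower bound $\tilde\Phi\ge-\tfrac12$ it gives $\|w^{(T)}\|=O(\eps^{-1})$.

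Two places need repair. First, as you wrote it the potential uses $\vec{\alpha}$, so your lower bound is $\Phi(w)\ge-\tfrac12-\eps\|w\|$, and the circular system you set up does \emph{not} close at the stated halting threshold $4\eps$: with per-step decrease $\ge\tfrac12\|\Delta\|^2-\eps\|\Delta\|$, one gets $\tfrac12 S\le\tfrac12+2\eps\sqrt{TS}$ and $16\eps^2T<S$, which together only say $S\le 16\eps^2T+2$ and never bound $T$. (With threshold $4\eps$ the constants sit exactly on the boundary.) The fix is either to use $\mychows_f$ in the analysis potential, which gives the clean $-\tfrac12$ lower bound and immediately yields $T=O(\eps^{-2})$, or to raise the halting threshold slightly; either change is cosmetic but one of them is required.

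Second, your integer-weight argument does not give $\|v\|=O(\sqrt{n}/\eps^3)$. To get $\|\vec\beta^{(t)}-\mychows_{g^{(t)}}\|\le\eps$ you must take $M=\tilde\Theta(n/\eps^2)$ samples per round; with $\kappa=1/M$ this yields $\|v\|=M\|w^{(T)}\|=\tilde\Omega(n/\eps^3)$ even granting $\|w^{(T)}\|=O(\eps^{-1})$, and there is no smaller ``optimal $M$'' that still controls the estimation error. The paper decouples the sample size from the grid: it estimates to accuracy $\Theta(\eps/\sqrt n)$ and then \emph{rounds} each $\alpha_i-\beta_i$ to the nearest multiple of $\kappa=\Theta(\eps/\sqrt n)$, paying only $O(\eps)$ extra in $\ell_2$. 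With that rounding step (and the paper's $\|w\|=O(\eps^{-2})$ or your sharper $O(\eps^{-1})$) you get the claimed $\sqrt n$ dependence.
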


We remark that the condition on the weight vector $v$ given by Theorem \ref{thm:alg} is the key for the proof of Theorem \ref{thm:lowwt}.

Note that the running time of {\tt ChowReconstruct} is polynomial in the relation between Chow distance and $L_1$-distance. By the structural result of~\cite{BDJ+:98}, this implies that for the subclass of LTFs with integer weights of magnitude bounded by  $\poly(n)$, we obtain a $\poly(n/\eps)$ time algorithm, i.e. an FPRAS.

\medskip

\noindent {\bf Discussion.} It is interesting to note that the approach underlying Theorem~\ref{thm:alg} is much more efficient
and significantly simpler than the algorithmic approach of~\cite{OS11:chow}.
The algorithm in~\cite{OS11:chow} roughly works as follows: In the first step, it constructs a ``small''  set of candidate LTFs
such that at least one of them is close to $f$, and in the second step it identifies such an LTF by searching over all such candidates.
The first step proceeds by enumerating over ``all'' possible weights assigned to the ``high influence'' variables.  This brute force search
makes the~\cite{OS11:chow} algorithm very inefficient. Moreover, its proof of correctness requires some sophisticated
spectral results from~\cite{MORS:10}, which make the approach rather complicated.

In this work, our algorithm is based on a boosting-based approach, which is novel in this context.
Our approach is much more efficient than the brute force search of~\cite{OS11:chow} and its analysis is much simpler,
since it completely bypasses the spectral results of~\cite{MORS:10}. We also note that the algorithm of~\cite{OS11:chow} crucially depends
on the fact that the relation between Chow distance and distance has no dependence on $n$. (If this was not the case, the approach would not lead
to a polynomial time algorithm.) Our boosting-based approach is quite robust, as it has no such limitation. This fact is crucial for us to obtain
the aforementioned FPRAS for small-weight LTFs.

While we are not aware of any prior results similar to Theorem~\ref{thm:alg} being stated explicitly,
we note that weaker forms of our theorem can be obtained from known results. In particular, Trevisan \etal \cite{TTV09} describe an algorithm that given oracle access to a Boolean function $f$, $\eps' > 0$, and a set of functions $H = \{h_1, h_2,\ldots h_k\}$, efficiently finds a bounded function $g$ that for every $i \leq n$ satisfies $|\E[f \cdot h_i] - \E[g \cdot h_i]| \leq \eps'$. One can observe that if $H= \{1,x_1,\ldots, x_n\}$, then the function $g$ returned by their algorithm is in fact an LBF and that the oracle access to $f$ can be replaced with approximate values of $\E[f \cdot h_i]$ for every $i$. Hence, the algorithm in~\cite{TTV09}, applied to the set of functions $H = \{1, x_1,x_2,\ldots,x_n\}$, would find an LBF $g$ which is close in Chow distance to $f$. A limitation of this algorithm is that, in order to obtain an LBF which is $\Delta$-close in Chow distance to $f$, it requires that every Chow parameter of $f$ be given to it with accuracy of $O(\Delta/\sqrt{n})$. In contrast, our algorithm only requires that the total distance of the given vector to $\vec{\chi}_f$ is at most $\Delta/6$. In addition, the bound on the integer weight approximation of LTFs that can be obtained from the algorithm in \cite{TTV09} is linear in $n^{3/2}$, whereas we obtain the optimal dependence of $\sqrt{n}$.

The algorithm in \cite{TTV09} is a simple adaptation of the hardcore set construction technique of Impagliazzo \cite{imp95a}. Our algorithm is also based on the ideas from \cite{imp95a} and, in addition, uses ideas from the distribution-specific boosting technique in \cite{Feldman:10ab}.

Our algorithm can be seen as an instance of a more general approach to learning (or approximating) a function that is based on constructing a bounded function with the given Fourier coefficients. Another instance of this new approach is the recent algorithm for learning a certain class of polynomial threshold functions (which includes polynomial-size DNF formulae) from low-degree Fourier coefficients \cite{Feldman:12colt}. We note that the algorithm in \cite{Feldman:12colt} is based on an algorithm similar to ours. However, like the algorithm in \cite{TTV09}, it requires that every low-degree Fourier coefficient be given to it with high accuracy. As a result it would be similarly less efficient in our application.

\smallskip

\noindent {\bf Organization.}
In Section~\ref{sec:prelims} we record some mathematical preliminaries that will be used throughout the paper. In Section~\ref{sec:exact} we present some observations regarding the complexity of solving the Chow parameters problem exactly and  give an LP--based $2^{O(n)}$-time algorithm for it.
Sections~\ref{sec:struct-overview} and~\ref{sec:structural} contain the proof of our main structural result (Theorem~\ref{thm:dchow-vs-dh}).  In Section~\ref{sec:chow-algo} we present our main algorithmic ingredient (Theorem~\ref{thm:alg}).  Section~\ref{sec:payoff} puts the pieces together and proves our main theorem (Theorem~\ref{thm:main}) and our other main result (Theorem~\ref{thm:lowwt}), while Section~\ref{sec:learn} presents the consequences of our results to learning theory. Finally, in Section~\ref{sec:concl} we conclude the paper and present a few interesting research directions.

\section{Mathematical Preliminaries} \label{sec:prelims}

\subsection{Probabilistic Facts.} \label{ssec:prob-basics}

We require some basic probability results including the standard additive Hoeffding bound:
\begin{theorem} \label{thm:chb}
Let $X_1, \ldots, X_n$ be independent random variables such that for
each $j \in [n]$, $X_j$ is supported on $[a_j, b_j]$ for some $a_j,
b_j \in \R$, $a_j \le b_j$. Let $X \ = \littlesum_{j=1}^{n} X_j$.
Then, for any $t>0$,
$\Pr \big[ |X - \E[X]| \ge t \big] \le 2 \exp \left(
-2t^2/\littlesum_{j=1}^{n} (b_j-a_j)^2   \right).$
\end{theorem}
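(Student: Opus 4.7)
The plan is to follow the standard Chernoff--Hoeffding argument: move to the exponential moment, exploit independence to factor the moment generating function, control each factor with a pointwise bound for bounded centered variables, and finally optimize the free parameter. Without loss of generality I will first center each variable, writing $Y_j = X_j - \E[X_j]$, so that $\E[Y_j]=0$ and $Y_j$ is supported on an interval of length $b_j - a_j$; then $X - \E[X] = \littlesum_j Y_j$, and it suffices to bound $\Pr[\littlesum_j Y_j \ge t]$, the matching lower tail following by applying the same argument to $-Y_j$ and summing with a factor of $2$.

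For the upper tail I will use Markov's inequality on $e^{\lambda \sum_j Y_j}$ for an arbitrary $\lambda > 0$, giving
\[
\Pr\!\left[\littlesum_j Y_j \ge t\right] \;\le\; e^{-\lambda t} \prod_{j=1}^n \E\!\left[e^{\lambda Y_j}\right],
\]
where the factorization uses independence of the $Y_j$'s. The main work is then Hoeffding's lemma: if $Y$ is a zero-mean random variable supported on $[\alpha,\beta]$, then $\E[e^{\lambda Y}] \le \exp(\lambda^2 (\beta-\alpha)^2 / 8)$. I would prove this by writing any point $y \in [\alpha,\beta]$ as a convex combination of $\alpha$ and $\beta$ and using convexity of $e^{\lambda \cdot}$ to upper bound $e^{\lambda y}$ by the corresponding combination of $e^{\lambda \alpha}$ and $e^{\lambda \beta}$; taking expectations and using $\E[Y]=0$ yields a univariate function of $\lambda(\beta-\alpha)$ that one bounds by analyzing its logarithm (its second derivative is at most $1/4$, giving the $(\beta-\alpha)^2/8$ constant via a Taylor expansion around $0$).

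Plugging Hoeffding's lemma into the product and taking logarithms gives
\[
\Pr\!\left[\littlesum_j Y_j \ge t\right] \;\le\; \exp\!\left(-\lambda t + \tfrac{\lambda^2}{8}\littlesum_{j=1}^n (b_j - a_j)^2\right).
\]
Minimizing the right-hand side over $\lambda > 0$, I set $\lambda = 4t / \littlesum_j (b_j-a_j)^2$, which yields the exponent $-2t^2 / \littlesum_j (b_j - a_j)^2$. Combining the one-sided bounds through a union bound produces the stated factor of $2$ and completes the proof.

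I expect the only genuinely subtle step to be Hoeffding's lemma with the sharp constant $1/8$: the convexity upper bound on $e^{\lambda y}$ is routine, but extracting exactly $(\beta-\alpha)^2/8$ requires the Taylor/second-derivative estimate on $\phi(u) = -pu + \log(1-p+pe^u)$ where $p = -\alpha/(\beta-\alpha)$. Everything else (Markov, factoring by independence, optimizing $\lambda$, symmetrizing for the two-sided tail) is mechanical once that lemma is in hand.
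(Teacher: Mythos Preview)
Your proposal is correct and follows the standard Chernoff--Hoeffding argument. The paper does not actually prove this statement: it is quoted in the preliminaries as ``the standard additive Hoeffding bound'' without proof, so there is nothing to compare against beyond noting that your argument is precisely the textbook derivation one would supply if a proof were required.
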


\medskip

\noindent The Berry-Ess{\'e}en theorem (see e.g. \cite{Feller})
gives explicit error bounds for the Central
Limit Theorem:

\begin{theorem} \label{thm:be} (Berry-Ess{\'e}en)
Let $X_1, \dots, X_n$ be independent random variables satisfying
$\E[X_i] = 0$ for all $i \in [n]$, $\sqrt{\littlesum_i \E[X_i^2]} =
\sigma$, and $\littlesum_i \E[|X_i|^3] = \rho_3$.  Let $S = (X_1 +
\cdots + X_n)/\sigma$ and let $F$ denote the cumulative distribution
function (cdf) of $S$. Then
$\sup_x |F(x) - \Phi(x)| \leq \rho_3/\sigma^3$
where $\Phi$ denotes the cdf of the standard gaussian random variable.
\end{theorem}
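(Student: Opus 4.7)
The plan is to prove Berry-Ess{\'e}en through the classical characteristic function method. Write $\phi(t) = \E[e^{itS}]$ for the characteristic function of $S$ and $\psi(t) = e^{-t^2/2}$ for that of the standard Gaussian. The argument decomposes into two parts: first, an application of Ess{\'e}en's smoothing inequality, which translates a bound on the difference $\phi - \psi$ of characteristic functions into a bound on the Kolmogorov distance $\sup_x |F(x)-\Phi(x)|$; and second, a direct estimate showing that $|\phi(t)-\psi(t)|$ is small on a suitably long interval around $0$.

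For the smoothing step I would invoke the inequality
$$\sup_x |F(x) - \Phi(x)| \leq \frac{1}{\pi}\int_{-T}^{T} \left| \frac{\phi(t) - \psi(t)}{t}\right| dt + \frac{C_0}{T},$$
valid for every $T>0$, with $C_0$ an absolute constant; the hypothesis behind this bound is that $\Phi$ has bounded density, which it does. This reduces the theorem to estimating the integrand and then optimizing the choice of $T$.

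For the pointwise estimate on $\phi-\psi$, use independence to write $\phi(t) = \prod_{i=1}^n \phi_i(t/\sigma)$, where $\phi_i(u) = \E[e^{iuX_i}]$. Since $\E[X_i]=0$, Taylor's theorem gives
$$\phi_i(u) = 1 - \frac{u^2 \E[X_i^2]}{2} + R_i(u), \qquad |R_i(u)| \leq \frac{|u|^3 \E[|X_i|^3]}{6}.$$
Taking logarithms on the disk where $|\phi_i(t/\sigma)-1| < 1/2$ and summing over $i$, one obtains that for $|t| \leq c\sigma^3/\rho_3$,
$$\log \phi(t) = -t^2/2 + E(t), \qquad |E(t)| \leq C_1\, (\rho_3/\sigma^3)\,|t|^3,$$
and hence $|\phi(t)-\psi(t)| \leq C_2\, (\rho_3/\sigma^3)\,|t|^3 e^{-t^2/4}$ via $|e^a-e^b| \leq |a-b|(e^{\operatorname{Re} a}+e^{\operatorname{Re} b})$. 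Dividing by $|t|$, integrating against $e^{-t^2/4}$, and picking $T \asymp \sigma^3/\rho_3$ balances the two terms in the smoothing inequality and delivers the advertised bound $\sup_x |F(x)-\Phi(x)| \leq C \rho_3/\sigma^3$.

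The hard part will be the smoothing inequality itself, whose proof is a nontrivial Fourier-inversion argument involving convolution with a well-chosen smoothing kernel (to make the Fourier transform compactly supported) together with a careful truncation to handle the tails. The other delicate point is showing that the Taylor-expansion argument for $\log \phi$ remains accurate uniformly across the entire interval $|t|\leq c\sigma^3/\rho_3$, rather than accumulating an error that grows with $n$; here Lyapunov's inequality $\rho_3/\sigma^3 \geq 1/\sqrt{n}$ confirms that this interval is long enough for the resulting integral to dominate the residual $C_0/T$ term in the smoothing bound.
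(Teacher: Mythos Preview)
The paper does not prove this statement at all; it is stated in the preliminaries section as a classical fact with a reference to Feller's textbook, and is simply used as a black box (via Fact~\ref{fact:be}) in the proof of Lemma~\ref{lem:1}. Your proposal sketches the standard characteristic-function proof of Berry--Ess{\'e}en (Ess{\'e}en's smoothing lemma plus a third-moment Taylor bound on $\log\phi$), which is correct in outline and is indeed how the theorem is established in the literature, but no such argument appears in this paper and none is expected.
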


An easy consequence of the Berry-Ess{\'e}en theorem is the following fact, which says that
a regular linear form has good anti-concentration (i.e. it assigns small probability
mass to any small interval):

\begin{fact} \label{fact:be}
Let $w=(w_1,\dots,w_n)$ be a $\tau$-regular vector in $\R^n$ and write $\sigma$ to denote $\|w\|_2$.
Then for any interval $[a,b] \subseteq \R$, we have
$\big|\Pr[\littlesum_{i=1}^n w_i x_i \in (a,b]] - \Phi([a/\sigma, b/\sigma])\big| \leq 2\tau$,
where $\Phi([c,d]) \eqdef \Phi(d) - \Phi(c)$. In particular, it follows that
$$\Pr \big[\littlesum_{i=1}^n w_i x_i \in (a,b]\big] \leq  |b -a| / \sigma + 2\tau.$$
\end{fact}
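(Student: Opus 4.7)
The plan is to derive Fact~\ref{fact:be} as a direct application of the Berry-Ess\'een theorem (Theorem~\ref{thm:be}) to the random variables $X_i = w_i x_i$, where the $x_i$ are independent uniform $\pm 1$. Since each $X_i$ has mean $0$, $\E[X_i^2] = w_i^2$, and $\E[|X_i|^3] = |w_i|^3$, the parameters appearing in the Berry-Ess\'een bound become $\sigma^2 = \sum_i w_i^2 = \|w\|_2^2$ (consistent with the notation in the statement) and $\rho_3 = \sum_i |w_i|^3$.

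First I would use $\tau$-regularity, i.e.\ $\max_i |w_i| \le \tau \|w\|_2 = \tau \sigma$, to bound the Berry-Ess\'een error term:
\[
\frac{\rho_3}{\sigma^3} \;=\; \frac{\sum_i |w_i|^3}{\sigma^3} \;\le\; \frac{(\max_i |w_i|)\sum_i w_i^2}{\sigma^3} \;=\; \frac{\max_i |w_i|}{\sigma} \;\le\; \tau.
\]
Letting $S = (\sum_i w_i x_i)/\sigma$, Theorem~\ref{thm:be} then yields $\sup_x |F(x) - \Phi(x)| \le \tau$. Writing the event $\{\sum_i w_i x_i \in (a,b]\}$ as $\{S \in (a/\sigma, b/\sigma]\}$ and expanding,
\[
\Bigl|\Pr\bigl[\littlesum_i w_i x_i \in (a,b]\bigr] - \Phi([a/\sigma, b/\sigma])\Bigr|
= \bigl| F(b/\sigma) - F(a/\sigma) - \Phi(b/\sigma) + \Phi(a/\sigma) \bigr| \le 2\tau,
\]
by the triangle inequality applied twice. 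This establishes the first displayed inequality.

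For the ``in particular'' statement, I would bound the Gaussian mass of the interval trivially by the density times the length: $\Phi([a/\sigma, b/\sigma]) \le (1/\sqrt{2\pi})\cdot|b-a|/\sigma \le |b-a|/\sigma$, using that the standard Gaussian density is at most $1/\sqrt{2\pi} < 1$. Combining with the first inequality gives $\Pr[\sum_i w_i x_i \in (a,b]] \le |b-a|/\sigma + 2\tau$, as claimed.

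There is no real obstacle here; the argument is a textbook application of Berry-Ess\'een, and the only step that requires care is recognizing that $\tau$-regularity is exactly what one needs to control $\rho_3/\sigma^3$ by $\tau$ via the crude bound $\sum |w_i|^3 \le (\max_i |w_i|)\cdot \sum w_i^2$.
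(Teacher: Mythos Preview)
Your proof is correct and is precisely the standard derivation the paper has in mind; the paper itself does not spell out a proof but simply calls Fact~\ref{fact:be} ``an easy consequence of the Berry-Ess\'een theorem,'' and your argument is exactly that easy consequence. The key step---bounding $\rho_3/\sigma^3$ by $\tau$ via $\sum_i |w_i|^3 \le (\max_i |w_i|)\sum_i w_i^2$ and then applying the triangle inequality at both endpoints---is the intended one.
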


\subsection{Useful inequalities.} \label{ssec:basic-ineq}
We will need the following elementary inequalities.
\begin{fact}\label{fac:ineq1}
For $a, b \in (0,1)$, $(ab)^{\log (1/a) + \log (1/b)} \ge a^{2 \log (1/a)} \cdot b^{2\log (1/b)}.$
\end{fact}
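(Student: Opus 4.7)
The plan is to reduce the inequality to a trivial statement via logarithms. Set $\alpha = \log(1/a)$ and $\beta = \log(1/b)$; since $a, b \in (0,1)$, both $\alpha, \beta > 0$, and $\log a = -\alpha$, $\log b = -\beta$, $\log(ab) = -(\alpha+\beta)$.

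Taking $\log$ of both sides of the claimed inequality, the left-hand side becomes
\[
(\log(1/a) + \log(1/b)) \cdot \log(ab) = (\alpha+\beta)(-\alpha-\beta) = -(\alpha+\beta)^2,
\]
while the right-hand side becomes
\[
2\log(1/a) \cdot \log a + 2\log(1/b) \cdot \log b = -2\alpha^2 - 2\beta^2.
\]
So the claim reduces to showing $-(\alpha+\beta)^2 \ge -2\alpha^2 - 2\beta^2$, i.e.\ $2\alpha^2 + 2\beta^2 \ge (\alpha+\beta)^2$.

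Expanding the right-hand side and rearranging yields $\alpha^2 - 2\alpha\beta + \beta^2 \ge 0$, i.e.\ $(\alpha - \beta)^2 \ge 0$, which is immediate. There is essentially no obstacle here; the only thing to be slightly careful about is that since both sides of the original inequality are positive reals, taking logarithms preserves the direction of the inequality, and that $\log(1/a), \log(1/b) > 0$ so no sign flips occur when we pass between $\log a$ and $\log(1/a)$.
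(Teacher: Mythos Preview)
Your proof is correct and essentially identical to the paper's. The paper rewrites both sides as powers of $2$ and reduces to the inequality $2\log(1/a)\log(1/b) \le \log^2(1/a) + \log^2(1/b)$, which it attributes to AM--GM; that is exactly your $(\alpha-\beta)^2 \ge 0$ in different clothing.
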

\begin{proof}
\begin{eqnarray*} (ab)^{\log (1/a) + \log (1/b)} &=& 2^{- \log^2 (1/a) - \log^2 (1/b) - 2\log (1/a) \cdot \log(1/b) } \\
&\ge& 2^{- 2\log^2 (1/a) - 2\log^2 (1/b) }  \\
&=& a^{2 \log (1/a)} \cdot b^{2\log (1/b)},
\end{eqnarray*} where the inequality is the arithmetic-geometric mean inequality.
\end{proof}

Similarly, we obtain:

\begin{fact}\label{fac:ineq2}
For $x,y \ge 1$, $(x+y)^{-\log(x+y)} \ge (2x)^{-\log (2x)} \cdot (2y)^{-\log (2y)}.$
\end{fact}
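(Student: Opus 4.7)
The plan is to reduce the inequality to a simple claim about squared logarithms by taking logarithms of both sides. Taking $\log_2$ of both sides of $(x+y)^{-\log(x+y)} \ge (2x)^{-\log(2x)} \cdot (2y)^{-\log(2y)}$ and multiplying through by $-1$, the claim is equivalent to
\[
\log^2(x+y) \,\le\, \log^2(2x) + \log^2(2y),
\]
so the entire proof boils down to establishing this inequality for $x,y \ge 1$.

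To do so, I would assume without loss of generality that $x \ge y \ge 1$. Then $x+y \le 2x$, and since both $x+y \ge 2$ and $2x \ge 2$ we have $\log(x+y)$ and $\log(2x)$ nonnegative, so squaring preserves the inequality and yields $\log^2(x+y) \le \log^2(2x)$. Finally, since $2y \ge 2$ gives $\log^2(2y) \ge 0$, we can add this nonnegative term to the right-hand side to obtain $\log^2(x+y) \le \log^2(2x) + \log^2(2y)$, as desired.

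There is essentially no obstacle here; the only subtlety is making sure the logs being squared are nonnegative so that the squaring step is monotone, which is ensured by the hypothesis $x, y \ge 1$ (hence $2x, 2y, x+y \ge 2$). The structure of the proof mirrors that of Fact~\ref{fac:ineq1} immediately above, where an analogous reduction was used, so I expect this to be the style of the authors' own argument.
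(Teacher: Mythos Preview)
Your proof is correct. The paper itself does not spell out a proof of this fact---it merely writes ``Similarly, we obtain'' after the AM-GM proof of Fact~\ref{fac:ineq1}---and your reduction to $\log^2(x+y) \le \log^2(2x) + \log^2(2y)$ followed by the observation $x+y \le 2\max(x,y)$ is a clean way to fill in the details.
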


\subsection{Useful facts about affine spaces.}
A subset  $V \subseteq \mathbb{R}^n$ is said to be an \emph{affine subspace}
if it
is closed under affine combinations of vectors in $V$. Equivalently, $V$ is
an affine subspace of $\mathbb{R}^n$ if $V =X +b$  where $b \in
\mathbb{R}^n$ and $X$ is a linear subspace of $\mathbb{R}^n$.  The
affine dimension of $V$ is the same as the dimension of the linear subspace
$X$.
A hyperplane in $\mathbb{R}^n$ is  an affine space of dimension $n-1$.
Throughout the paper we use bold capital letters such as $\h$ to denote hyperplanes.

In this paper whenever we refer to a ``subspace'' we mean an affine subspace
unless explicitly otherwise indicated.
The dimension of an affine subspace $V$ is denoted by $\dim(V)$. Similarly,
for a set $S \subseteq \mathbb{R}^n$, we write
$\span(S)$ to denote
the affine span of $S$, i.e.
\[
\span(S) = \{s + \sum_{i=1}^m w_i (x^i - y^i) \mid
s, x^i, y^i \in S, w_i \in \R, m \in \mathbb{N}\}.
\]
The following very useful fact about affine spaces was proved by
Odlyzko\cite{Odlyzko:88}.
\begin{fact}\label{fac:affine}\cite{Odlyzko:88}
Any affine subspace of $\mathbb{R}^n$ of dimension $d$ contains at most $2^d$
elements of $\{-1,1\}^n$.\end{fact}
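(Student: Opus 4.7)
The plan is to reduce to the dimension-$d$ case by projecting onto a carefully chosen set of $d$ coordinates. Write $V = X + b$ where $X$ is a linear subspace of $\R^n$ of dimension $d$. The key step is to produce a subset $S \subseteq [n]$ with $|S| = d$ such that the coordinate projection $\pi_S : \R^n \to \R^S$, when restricted to $V$, is injective. Once this is established, $\pi_S$ maps $V \cap \{-1,1\}^n$ injectively into $\{-1,1\}^S$, a set of cardinality exactly $2^d$, which immediately yields $|V \cap \{-1,1\}^n| \le 2^d$.

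To construct such an $S$, I would pick any basis $v_1, \ldots, v_d$ of $X$ and arrange these vectors as the rows of a $d \times n$ matrix $M$, which has rank $d$. Because row rank equals column rank, $M$ contains $d$ linearly independent columns; let $S \subseteq [n]$ be the indices of one such choice of columns. Then the linear map $\pi_S$ restricted to $X$ is represented, in the basis $v_1, \ldots, v_d$ on the domain side and the standard basis of $\R^S$ on the codomain side, by the invertible $d \times d$ submatrix $M_S$. Hence $\pi_S$ is injective on $X$.

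To lift injectivity from $X$ to the affine translate $V = X + b$, observe that if $\pi_S(v) = \pi_S(v')$ for $v, v' \in V$, then $v - v' \in X$ and $\pi_S(v - v') = 0$, forcing $v = v'$ by the previous paragraph. This completes the reduction and hence the proof. There is no substantive obstacle here beyond the standard rank identity; the argument is essentially a linear-algebraic pigeonhole after projection, and the bound $2^d$ is tight, as witnessed by taking $V$ to be any $d$-dimensional coordinate-aligned face of the cube, e.g.\ $V = \{x \in \R^n : x_{d+1} = \cdots = x_n = 1\}$.
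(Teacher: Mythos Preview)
Your proof is correct. The paper does not actually prove this fact; it merely states it with a citation to Odlyzko~\cite{Odlyzko:88}, so there is no in-paper proof to compare against. Your coordinate-projection argument is the standard elementary proof and is entirely sound.
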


\section{On the Exact Chow Parameters Problem} \label{sec:exact}

In this section we make some observations regarding the complexity of the exact version of the Chow parameters problem and present
a simple (albeit exponential time) algorithm for it, that beats brute-force search.

\subsection{Proof of Chow's Theorem.} \label{ssec:chow}
For completeness we state and prove Chow's theorem here:

\begin{theorem}  [\cite{Chow:61}] \label{thm:chow}  Let $f : \bn \to \bits$ be an LTF and let $g : \bn \to
[-1,1]$ be a bounded function such that $\wh{g}(j) = \wh{f}(j)$ for all $0 \leq j \leq n$.  Then $g = f$.
\end{theorem}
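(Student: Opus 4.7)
}
The plan is to use the classical ``sign-pairing'' argument: exhibit a single real-valued function on $\{-1,1\}^n$ whose expectation against $f-g$ vanishes because all Chow parameters match, but which is also pointwise non-negatively correlated with $f-g$. This will force $f-g$ to vanish wherever the linear form is nonzero, and a harmless perturbation of the threshold will handle the remaining points.

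First I would fix a representation $f(x) = \sgn(L(x))$ of the LTF with $L(x) = \littlesum_{i=1}^n w_i x_i - \theta$. Since $\{-1,1\}^n$ is finite and $\sgn(0)=1$ by our convention, we can replace $\theta$ by $\theta - \eta$ for a sufficiently small $\eta > 0$ so that $L(x) \neq 0$ for every $x \in \bn$ while leaving the function $\sgn(L(\cdot))$ unchanged. So without loss of generality assume $L(x) \neq 0$ for all $x \in \bn$.

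Next I would compute $\E\bigl[(f(x)-g(x))\cdot L(x)\bigr]$ two ways. On the one hand, by linearity of expectation and the definition of the Chow parameters,
\alequn{
\E\bigl[(f-g) \cdot L\bigr] = \littlesum_{i=1}^n w_i \bigl(\wh{f}(i) - \wh{g}(i)\bigr) - \theta\bigl(\wh{f}(0) - \wh{g}(0)\bigr) = 0,
}
since every Chow parameter of $g$ agrees with the corresponding one of $f$. On the other hand, I claim $(f(x)-g(x))\cdot L(x) \ge 0$ pointwise: if $L(x) > 0$ then $f(x) = 1$ and $g(x) \in [-1,1]$, so $f(x)-g(x) \ge 0$; if $L(x) < 0$ then $f(x) = -1$ and $f(x) - g(x) \le 0$. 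In both cases the product is nonnegative.

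The main step is then to combine these: a nonnegative function on a finite domain whose expectation (with respect to the uniform measure, which gives every point positive mass) is zero must vanish at every point. Hence $(f(x)-g(x))\cdot L(x) = 0$ for all $x \in \bn$, and since $L(x) \ne 0$ by our normalization, we get $f(x) = g(x)$ for all $x$. I do not expect any real obstacle here; the only subtle point is the degenerate case $L(x) = 0$, which is precisely why I start by perturbing $\theta$ away from all $2^n$ values $\{L(x) : x \in \bn\}$.
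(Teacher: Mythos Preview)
Your proposal is correct and is essentially identical to the paper's own proof: both perturb the threshold so the linear form never vanishes, then observe that $\E[(f-g)\cdot L]=0$ by the Chow-parameter hypothesis while $(f-g)\cdot L\ge 0$ pointwise, forcing $f=g$. The only cosmetic difference is that the paper phrases the pointwise nonnegativity as $\E[|f-g|\cdot|L|]=0$ via $\sgn(f-g)=f=\sgn(L)$, whereas you do the equivalent two-case check on the sign of $L$.
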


\begin{proof}
Write
$
f(x) = \sgn(w_0 + w_1 x_1 + \cdots + w_n x_n),
$
where the weights are scaled so that $\sum_{j =0}^n w_j^2 = 1$. We may assume without loss of generality that $|w_0 + w_1 x_1 + \cdots + w_n x_n| \neq 0$ for all $x$.  (If this is not the case, first translate the separating hyperplane by slightly perturbing $w_0$ to make it hold; this can be done without changing $f$'s value on any point of $\{-1,1\}^n.$)  Now we have
\begin{eqnarray*}
0 &=& \sum_{j =0}^n w_j (\widehat{f}(j) - \widehat{g}(j)) \\
&=& \E[(w_0 + w_1 x_1 + \cdots + w_n x_n)(f(x) - g(x))] \\
&=& \E[|f(x) - g(x)| \cdot |w_0 + w_1 x_1 + \cdots + w_n x_n|].
\end{eqnarray*}
The first equality is by the assumption that $\wh{f}(j) = \wh{g}(j)$
for all $0 \leq j \leq n$, the second equality is linearity of expectation (or Plancherel's
identity), and the third equality uses the fact that $$\sign(f(x)-g(x)) = f(x) = \sgn(w_0
+ w_1 x_1 + \cdots + w_n x_n)$$ for any bounded function $g$ with range $[-1,1]$. But since $|w_0 + w_1 x_1 + \cdots + w_n x_n|$ is
always strictly positive,  we must have $\Pr[f(x) \neq g(x)] = 0$ as
claimed.
\end{proof}

\subsection{An exact $2^{O(n)}$--time algorithm.} \label{ssec:LP}

Let us start by pointing out that it is unlikely that the Chow Parameters problem can be solved exactly in polynomial time.
Note that even checking the correctness of a candidate solution is
$\sharp P$-complete, because computing $\wh{f}(0)$ is equivalent to counting 0-1 knapsack solutions.
This suggests (but does not logically imply) that the exact problem is intractable;
characterizing its complexity is an  interesting open problem (see Section~\ref{sec:concl}).

The naive brute-force approach (enumerate all possible $n$-variable LTFs, and for each one check whether
it has the desired Chow parameters) requires $2^{\Theta(n^2)}$ time. The following proposition gives an improved (albeit exponential time) algorithm:

\begin{proposition} \label{prop:exact-chow}
The Chow parameters problem can be solved exactly in time $2^{O(n)}$.
\end{proposition}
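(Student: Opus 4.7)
The plan is to reduce the exact Chow Parameters problem to a linear program of size $O(2^n)$, which any polynomial-time LP algorithm (e.g.\ the ellipsoid method) solves in $\mathrm{poly}(2^n) = 2^{O(n)}$ time. The LP is motivated by the inequality underlying the proof of Chow's theorem (Theorem~\ref{thm:chow}): for any $w = (w_0,w_1,\ldots,w_n) \in \R^{n+1}$, writing $\tilde{x} = (1,x_1,\ldots,x_n)$, and for any $g : \bn \to [-1,1]$ with Chow vector $\vec{\alpha}$, one has
\[
\textstyle\sum_{i=0}^n \alpha_i w_i \;=\; \E_x\!\left[ g(x)\,\langle w,\tilde{x}\rangle \right] \;\leq\; \E_x\!\left[|\langle w,\tilde{x}\rangle|\right],
\]
with equality exactly when $g(x) = \sgn(\langle w,\tilde{x}\rangle)$ on every $x$ with $\langle w,\tilde{x}\rangle \neq 0$. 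So the unknown LTF $f$ with Chow vector $\vec{\alpha}$ is characterized (off of the separating hyperplane) by any $w$ that witnesses this equality for $g = f$.

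First I would form the LP
\[
\min\; \tfrac{1}{2^n}\textstyle\sum_{x \in \bn} s_x
\quad \text{s.t.} \quad -s_x \leq \langle w,\tilde{x}\rangle \leq s_x \;\;\forall x \in \bn, \quad \textstyle\sum_{i=0}^n \alpha_i w_i = 1,
\]
over variables $w_0,\ldots,w_n \in \R$ and $\{s_x\}_{x\in \bn}$. At any optimum $s_x = |\langle w,\tilde{x}\rangle|$, so the LP minimizes $\E_x[|\langle w,\tilde{x}\rangle|]$ subject to $\sum_i \alpha_i w_i = 1$. Applying the displayed inequality with $g = f$, the LP value is $\geq 1$ and equals $1$ precisely at weight vectors $w^*$ that represent $f$ off $\{x : \langle w^*,\tilde{x}\rangle = 0\}$. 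By Chow's theorem this pins down $f$, and the LP has $2^n + n + 1$ variables and $O(2^n)$ constraints with $\mathrm{poly}(n)$ bit complexity, so it solves in $\mathrm{poly}(2^n) = 2^{O(n)}$ time.

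The main obstacle I anticipate is handling degeneracies: a vertex solution $w^*$ will typically have $\langle w^*,\tilde{x}\rangle = 0$ for several tied $x$, on which $\sgn(\langle w^*,\tilde{x}\rangle)$ does not determine $f(x)$. To resolve this, after extracting $(w_1^*,\ldots,w_n^*)$ from the LP I would enumerate all distinct LTFs of the form $\sgn(w_1^* x_1 + \cdots + w_n^* x_n - \theta)$ as $\theta$ sweeps across the sorted multiset $\{\sum_i w_i^* x_i : x \in \bn\}$; at most $2^n + 1$ such LTFs arise, one of them is $f$ (since $(w_1^*,\ldots,w_n^*)$ is a valid direction for $f$), and I can identify it by computing and comparing Chow vectors at $O(2^n)$ cost per candidate. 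Finally, the edge case $\vec{\alpha} = \vec{0}$ makes the LP infeasible, which is the correct output: since $\sum_i \wh{f}(i) w_i = \E_x[|\langle w,\tilde{x}\rangle|] > 0$ for any LTF $f = \sgn(\langle w,\tilde{x}\rangle)$ evaluated at its own weights, no LTF can have Chow vector $\vec{0}$.
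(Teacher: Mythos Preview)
Your LP formulation is clever and the inequality $\sum_i \alpha_i w_i \le \E_x[|\langle w,\tilde x\rangle|]$ does indeed give LP value exactly $1$, with any optimal $w^*$ satisfying $f(x)=\sgn(\langle w^*,\tilde x\rangle)$ on the nonzero set. The gap is in your degeneracy-handling step: the parenthetical claim ``since $(w_1^*,\ldots,w_n^*)$ is a valid direction for $f$'' is false. An optimal $w^*$ need \emph{not} have $f$ among the LTFs $\sgn(\sum_i w_i^* x_i - \theta)$.

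Here is a concrete counterexample. Take $n=2$ and $f(x)=\sgn(x_1)$, so $\vec\alpha=(0,1,0)$. The vector $w^*=(0,1,1)$ is feasible ($\alpha\cdot w^*=1$) and optimal, since $\E_x[|x_1+x_2|]=\tfrac14(2+0+0+2)=1$. One can check it is a \emph{vertex} optimum of your LP. Yet $f$ cannot be written as $\sgn(x_1+x_2-\theta)$ for any $\theta$, because $f(1,-1)=1$ and $f(-1,1)=-1$ while $x_1+x_2=0$ at both points. So sweeping $\theta$ over this direction never produces $f$, and your enumeration step fails.

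For comparison, the paper's proof avoids this issue entirely by writing the LP in the ``truth-table'' variables $g(x)$ rather than in the weight variables: it asks for any $g:\{-1,1\}^n\to[-1,1]$ with the prescribed Chow parameters, and Chow's theorem guarantees the \emph{feasible set} is the single point $g=f$. This directly yields the truth table of $f$ (no optimization, no degeneracy), after which a second, separate LP recovers a weight representation.
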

\begin{proof}
Let $\alpha_i$, $i=0,1,\ldots, n$ be the target Chow parameters; we are given the promise that there exists an LTF $f:\bn \to \bits$ such that $\wh{f}(i) = \alpha_i$ for all $i$.
Our goal is to output (a weights-based representation of) the function $f$. Let $g:\bn \to [-1,1]$ be a bounded function that has the same Chow parameters as $f$.
We claim that there exists a linear program with $2^n$ variables and $O(2^n)$ constraints encoding the truth-table of $g$.
Indeed, for every $x \in \bn$ we have a variable $g(x)$ and the constraints are as follows: For all $x \in \bn$ we include the constraint $-1 \leq g(x) \leq 1$. We also include
the $(n+1)$ constraints $\E_x [g(x) x_i] \equiv 2^{-n} \sum_{x \in \bn} g(x)x_i = \alpha_i$, $i=0,1,\ldots, n$ (where $x_0 \equiv 1$). Chow's theorem stated above implies that the aforementioned linear program has a {\em unique} feasible solution, corresponding to the truth table of the target LTF $f$.
That is, the unique solution of the linear program will be integral and is identical to the target function.
Since the size of the linear program is $2^{O(n)}$ and linear programming is in P, the truth table of $f$ can thus be computed in time $2^{O(n)}$.

A weight-based representation of $f$ as $\sign(w\cdot x - \theta)$ can then be obtained
straightforwardly in time $2^{O(n)}$ by solving another linear program with variables $(w, \theta)$ and $2^n$ constraints, one for each $x \in \bn$.
\end{proof}

\section{Proof overview of main structural result:  Theorem~7} \label{sec:struct-overview}

In this section we provide a detailed overview of the proof of Theorem~\ref{thm:dchow-vs-dh}, restated here for convenience:

\medskip

\noindent {\bf Theorem~\ref{thm:dchow-vs-dh}} (Main Structural Result){\bf.} \emph{
Let $f : \bn \rightarrow \bits$ be an LTF and  $g :  \bn \to [-1,1]$ be any bounded function.
If $\dchow(f,g) \le \epsilon$ then $\dist(f,g) \leq 2^{-\Omega \left( \sqrt[3]{\log(1/\epsilon)} \right)}$.}

\medskip

We give an informal overview of the main ideas of the proof of Theorem~\ref{thm:dchow-vs-dh} in Section~\ref{sec:outline}, and then proceed with a detailed outline of Theorem~\ref{thm:dchow-vs-dh} in Section~\ref{sec:dchow-vs-dhproof}.

\subsection{Informal overview of the proof.} \label{sec:outline}

We first note that throughout the informal explanation given in this
subsection, for the sake of clarity we restrict our attention
to the case in which $g: \{-1,1\}^n \to \{-1,1\}$
is a Boolean rather than a bounded function.
In the actual proof we deal with bounded functions using a suitable
weighting scheme for points of $\{-1,1\}^n$ (see the
discussion before Fact~\ref{fac:large} near the start
of the proof of Theorem~\ref{thm:dchow-vs-dh}).

To better explain our approach,  we begin with a few words about how Theorem~1.6 of \cite{OS11:chow} (the only previously known statement of this type that is ``independent of $n$'')
is proved.  The key to that theorem is a result on approximating LTFs using LTFs with ``good anti-concentration''; more precisely, \cite{OS11:chow} shows  that
for any LTF $f$ there is an LTF $f'(x)=\sign(v \cdot x - \nu), \|v\|=1,$ that is extremely close to $f$ (Hamming distance roughly $2^{-1/\eps}$) and which has ``moderately good anticoncentration at radius $\eps$,'' in the sense
that $\Pr[|v \cdot x - \nu| \leq \eps] \leq \tilde{O}(1/\sqrt{\log(1/\eps)}).$  Given this, Theorem~1.6 of
\cite{OS11:chow} is proved using a modification of the proof of the original Chow's Theorem.  However, for this approach  based on the original Chow proof to work, it is crucial that the Hamming distance between $f$ and $f'$ (namely $2^{-1/\eps}$) be very small compared to the anti-concentration radius (which is $\eps$).  Subject to this constraint it seems very difficult to give a significant quantitative improvement of the approximation result in a way that would improve the bound of Theorem~1.6 of \cite{OS11:chow}.

Instead, we hew more closely to the approach used to prove Theorem~4 of \cite{Goldberg:06b}.  This approach also involves a perturbation of the LTF $f$, but instead of measuring closeness in terms of Hamming distance, a more direct geometric view is taken.  In the rest of this subsection we give a high-level explanation of Goldberg's proof and of how we modify it to obtain our improved bound.

The key to Goldberg's approach is a (perhaps surprising) statement about the geometry of hyperplanes
as they relate to the Boolean hypercube.  He establishes the following key geometric result (see Theorem~\ref{thm:goldberg-thm3} for a precise statement):

\begin{quote}
If $\h$ is any $n$-dimensional hyperplane such that an $\alpha$ fraction of points in $\{-1,1\}^n$ lie ``very close'' in Euclidean distance (essentially $1/\quasipoly(n/\alpha)$) to $\h$, then there is a hyperplane $\h'$ which actually \emph{contains} all those $\alpha 2^n$ points of the hypercube.
\end{quote}
With this geometric statement in hand, an iterative argument is used to show that if the Hamming distance between LTF $f$ and Boolean function $g$ is large, then
the Euclidean distance between the centers of mass of (the positive examples for $f$ on which $f$ and $g$ differ) and (the
negative examples for $f$ on which $f$ and $g$ differ) must be large; finally, this Euclidean distance between centers of mass corresponds closely to the Chow distance between $f$ and $g$.

\ignore{Intuitively, Goldberg's approach requires less ``control'' over $\h'$ than the \cite{OS11:chow} approach because it does not make any requirement that points be classified correctly by
the LTF corresponding to $\h'$.\anote{I do not find this statement to be accurate. We can probably talk about it.}}
However, the $1/\quasipoly(n)$ closeness requirement in the
key geometric statement means that Goldberg's Theorem~4 not only depends on $n$, but this dependence
is superpolynomial.  The heart of our improvement is to combine Goldberg's key geometric statement with
ideas based on the ``critical index''  of LTFs to get a version of
the statement which is completely independent of $n$.  Roughly speaking, our analogue of Goldberg's key geometric statement is the following (a precise version is given as Lemma~\ref{lem:1} below):

\begin{quote}
If $\h$ is any $n$-dimensional hyperplane such that an $\alpha$ fraction of points in $\{-1,1\}^n$ lie within Euclidean distance $\alpha^{O(\log(1/\alpha))}$ of  $\h$, then there is a hyperplane $\h'$ which contains \emph{all but a tiny fraction} of those $\alpha 2^n$ points of the hypercube.
\end{quote}

Our statement is much stronger than Goldberg's in that there is no dependence on $n$ in the
distance bound from $\h$, but weaker in that we do not guarantee $\h'$ passes through every point; it may miss a tiny fraction of points, but we are able to handle this in the subsequent analysis.
Armed with this improvement, a careful sharpening of Goldberg's iterative argument (to get rid of another
dependence on $n$, unrelated to the tiny fraction of points missed by $\h'$) lets us prove Theorem~\ref{thm:dchow-vs-dh}.

\subsection{Detailed outline of the proof.} \label{sec:dchow-vs-dhproof}


As discussed in Section~\ref{sec:outline}, the key to proving Theorem~\ref{thm:dchow-vs-dh} is
an improvement of Theorem 3 in~\cite{Goldberg:06b}.

\begin{definition}
Given a hyperplane $\h$ in $\R^n$ and $\beta>0$, the \emph{$\beta$-neighborhood} of $\h$ is defined as the set of points in $\R^n$ at Euclidean distance
at most $\beta$ from $\h$.
\end{definition}

We recall the following fact which shows how to express the Euclidean distance of a point from a hyperplane using the standard representation of the hyperplane:
\begin{fact}\label{fac:hyper-dist}
Let  $\h = \{x : w \cdot x - \theta =0\}$ be a hyperplane in $\mathbb{R}^n$ where $\Vert w \Vert =1$. Then for any $x \in \mathbb{R}^n$, the Euclidean distance $d(x,\h)$ of $x$ from $\h$ is $|w \cdot x -\theta|$.
\end{fact}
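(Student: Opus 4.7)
The plan is to prove this classical fact by direct computation, using the optimization definition $d(x,\h) = \inf_{y \in \h} \|x-y\|_2$. First I would fix an arbitrary $y \in \h$ and decompose the displacement $x - y$ into components parallel and perpendicular to the unit normal vector $w$. Writing $x - y = ((x-y)\cdot w)\,w + v$ with $v \perp w$, we get $\|x-y\|_2^2 = ((x-y)\cdot w)^2 + \|v\|_2^2$, so $\|x-y\|_2 \geq |(x-y)\cdot w|$. Since $y \in \h$ means $w \cdot y = \theta$, this parallel component has a value independent of the choice of $y$: namely $(x-y)\cdot w = w\cdot x - \theta$. Hence $\|x-y\|_2 \geq |w\cdot x - \theta|$ for every $y \in \h$, giving the lower bound $d(x,\h) \geq |w\cdot x - \theta|$.

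For the matching upper bound, I would exhibit an explicit minimizer. Set $y^* := x - (w\cdot x - \theta)\, w$. A one-line check using $\|w\|_2 = 1$ gives $w \cdot y^* = w\cdot x - (w\cdot x - \theta)\|w\|_2^2 = \theta$, so $y^* \in \h$; and $\|x - y^*\|_2 = |w\cdot x - \theta|\cdot \|w\|_2 = |w\cdot x - \theta|$. Combining the two bounds yields $d(x,\h) = |w\cdot x - \theta|$, as claimed. There is no real obstacle here: the statement is a textbook identity, and the only place where any care is needed is in invoking the hypothesis $\|w\|_2 = 1$, which is used both in identifying $(x-y)\cdot w$ as the signed length of the orthogonal component and in the verification that $y^* \in \h$.
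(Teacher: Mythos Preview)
Your proof is correct. Note, however, that the paper does not actually supply a proof of this fact: it is stated without proof as a standard recalled identity, so there is no ``paper's own proof'' to compare against. Your argument is the textbook one and would be entirely appropriate if a proof were required.
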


\begin{theorem}[Theorem~3 in~\cite{Goldberg:06b}] \label{thm:goldberg-thm3}
Given any hyperplane in $\R^n$ whose $\beta$-neighborhood contains a subset $S$ of vertices of $\{-1,1\}^n$,
where $|S| = \alpha \cdot 2^n$, there exists a hyperplane which contains all elements of $S$ provided that
$$0 \leq \beta \leq \left( (2/\alpha) \cdot n^{5+ \lfloor \log(n/\alpha) \rfloor} \cdot (2+\lfloor \log(n/\alpha) \rfloor)! \right)^{-1}.$$
\end{theorem}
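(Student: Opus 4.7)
The plan is to prove the contrapositive: assuming that $\span(S) = \R^n$ (i.e., no hyperplane contains $S$), I would derive a lower bound on $\beta$ matching the one asserted. Writing the given hyperplane as $\h = \{x : w \cdot x = \theta\}$ with $\|w\|_2 = 1$, Fact~\ref{fac:hyper-dist} gives $|w \cdot s - \theta| \le \beta$ for every $s \in S$, so for any two $s, s' \in S$ the difference $v = s - s' \in \{-2,0,2\}^n$ satisfies $|w \cdot v| \le 2\beta$. The contradiction assumption guarantees that differences of this form span $\R^n$.

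The problem then reduces to the conditioning of a structured integer matrix: fix $s_0 \in S$ and extract a basis $v_1,\ldots,v_n$ from $\{s - s_0 : s \in S\}$, assembled as the rows of a matrix $V \in \{-2,0,2\}^{n\times n}$. Since $\|Vw\|_2 \le 2\beta\sqrt{n}$ and $\|w\|_2 = 1$,
\begin{equation*}
1 \ = \ \|V^{-1}(Vw)\|_2 \ \le \ \|V^{-1}\|_2 \cdot 2\beta\sqrt{n},
\end{equation*}
so any upper bound $U$ on $\|V^{-1}\|_2$ forces $\beta \ge 1/(2U\sqrt{n})$. Purely from $|S| \ge n+1$, one only obtains a crude estimate like $U = O(n^{(n-1)/2})$ via Hadamard's inequality, which yields roughly $\beta \gtrsim 1/n!$. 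The real gain must come from exploiting $|S| = \alpha \cdot 2^n$ to construct a much better-conditioned basis.

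The well-conditioned basis would be built iteratively. Having chosen $v_1,\ldots,v_{i-1}$ with linear span $L_{i-1}$, I would seek $v_i \in S - s_0$ whose component orthogonal to $L_{i-1}$ has length at least a threshold $h_i$. The existence of such a $v_i$ is driven by Fact~\ref{fac:affine}: hypercube vertices close to $s_0 + L_{i-1}$ must in turn lie in nearby affine subspaces of dimension $< n$, and each such subspace contains few hypercube points. As long as the resulting upper bound on ``bad'' (close) vertices remains strictly below $|S| = \alpha 2^n$, some $v_i$ with orthogonal component at least $h_i$ exists. A product $|\det V| \ge \prod_i h_i$ then emerges from the Gram-Schmidt decomposition, after which Cramer's rule converts this into the desired upper bound on $\|V^{-1}\|_2$.

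The main obstacle is calibrating the thresholds $h_i$ so the iteration survives all $n$ steps. The budget $|S| = \alpha 2^n$ can only absorb so much shrinkage of the $h_i$'s before the counting argument fails: at most about $\log(n/\alpha)$ of the $n$ steps can afford to have $h_i$ substantially smaller than $1$, while the remaining steps admit $h_i = \Omega(1)$ and contribute little to the final bound. Each ``expensive'' step costs a factor of roughly $n$ in the conditioning and advances a factorial coming from the cofactor expansion in Cramer's rule, producing the $n^{5+\log(n/\alpha)} \cdot (2+\log(n/\alpha))!$ shape in the theorem; the $(2/\alpha)$ prefactor records the initial point budget. Making this bookkeeping rigorous — in particular, verifying that the greedy selection of $v_i$ actually proceeds for all $n$ steps with the claimed $h_i$'s, and that the Odlyzko-based counting of ``bad'' vertices in a width-$h_i$ slab around $s_0 + L_{i-1}$ is genuinely sharp — is where the technical heart of Goldberg's argument lies.
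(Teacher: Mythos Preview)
Your proposal takes a route that is \emph{not} the one Goldberg uses (as reproduced in the paper's Appendix for the extension Theorem~\ref{thm:newgoldberg3}). Goldberg's argument does not analyze the conditioning of a basis extracted from $S-s_0$. Instead it proceeds as follows: (i) round each coordinate of the unit normal $w$ to the nearest multiple of $\beta$, so that $w'\cdot x$ for $x\in S$ takes at most $n+2$ distinct values; (ii) by pigeonhole, at least $\alpha 2^n/(n+2)$ points of $S$ land on an \emph{exact} hyperplane $\h_1$, whose affine span $A_1$ therefore has dimension at least $n-\log(n+2)-\log(1/\alpha)$ by Fact~\ref{fac:affine}; (iii) in at most $\log((n+2)/\alpha)$ steps, iteratively enlarge $A_j$ by adjoining the point of $S$ farthest from a codimension-$2$ subspace containing $A_j$, tracking that distances from $S$ to the resulting hyperplanes $\h_j$ at most double per step; (iv) the terminal hyperplane $\h_k$ is spanned by hypercube points, so a structure theorem (Theorem~\ref{thm:goldberg2}, proved via the Tao--Vu ``Zeroth Inverse Theorem'') forces all coordinates of its unit normal to be integer multiples of some $E^{-1}$ with $E\le (2n)^{O(\log(n/\alpha))}(\lfloor\log(n/\alpha)\rfloor)!$; (v) since every $x\in S$ is within $(n/\alpha)(n+2)\beta<E^{-1}$ of $\h_k$, and $v\cdot x-\nu$ is a multiple of $E^{-1}$, each $x\in S$ lies exactly on $\h_k$. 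The quantitative shape of the bound comes from the weight-quantization step~(iv), not from a determinant/conditioning estimate.

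Your sketch has a genuine gap at the step you flag as the ``technical heart.'' You invoke Fact~\ref{fac:affine} to bound the number of hypercube vertices in the width-$h_i$ slab around $s_0+L_{i-1}$, asserting that such vertices ``must in turn lie in nearby affine subspaces of dimension $<n$.'' But Odlyzko's fact only bounds vertices \emph{on} a $d$-dimensional affine subspace by $2^d$; it says nothing about vertices merely \emph{close} to one. A slab around an $(i-1)$-dimensional affine space is not a finite union of affine subspaces in any useful way: even though $s-s_0\in\{-2,0,2\}^n$, the projection of $s-s_0$ onto $L_{i-1}^\perp$ ranges over a set whose size is governed by the codimension $n-i+1$, not by $\log(n/\alpha)$, so you cannot control the number of ``parallel copies'' of $L_{i-1}$ needed to cover the bad vertices. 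Without this count, you have no mechanism to guarantee that $v_i$ with large orthogonal component exists, and the greedy construction stalls. Goldberg sidesteps this entirely: his iteration runs for only $O(\log(n/\alpha))$ steps (not $n$), and he never needs to count vertices in a slab---the rounding in step~(i) converts the approximate incidence into an exact one before Odlyzko is ever applied.
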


 Before stating our improved version of the above theorem, we define the set $U = \cup_{i =1}^n \mathbf{e}_i \cup \mathbf{0}$ where $\mathbf{0} \in \R^n$ is the all zeros vector and $\mathbf{e}_i \in \R^n$ is the unit vector in the
$i^{th}$ direction.

Our improved version of Theorem~\ref{thm:goldberg-thm3} is the following:
\begin{lemma} \label{lem:1}
Let $\h$ be a hyperplane in $\R^n$ whose $\beta$-neighborhood contains a subset $S$ of vertices of $\{-1,1\}^n$,
where $|S| = \alpha \cdot 2^n$. Fix $0 < \kappa < \alpha/2$. Then there exists a hyperplane $\h'$ in $\R^n$
that contains a subset $S^{\ast} \subseteq S$ of cardinality at least $(\alpha-\kappa)\cdot 2^n$ provided that
$ 0 \leq \beta \leq \beta_0$, where
$$\beta_0 \eqdef   (\log(1/\kappa))^{-1/2 } \cdot (\log \log (1/\kappa))^{-O(\log \log \log (1/\kappa))} \cdot  \alpha^{O(\log(1/\alpha))}.$$
Moreover, the coefficient vector defining $\h'$ has at most $$O \left( (1/\alpha^2) \cdot \left( \log \log(1/\kappa) + \log^2(1/\alpha) \right) \right) $$
nonzero coordinates.
Further, for any $x \in U$, if $x$ lies on $\h$ then $x$ lies on $\h'$ as well.
\ignore{ and each coefficient is an integer multiple of  $1/K$ where $K \eqdef (\alpha-\kappa)^{O \left( \log(1/(\alpha-\kappa)) \right)}.$
}
\end{lemma}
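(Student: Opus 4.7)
The plan is to reduce the lemma to Theorem~\ref{thm:goldberg-thm3} applied in a much lower-dimensional subspace, using the \emph{critical-index} machinery of~\cite{Servedio:07cc,OS11:chow,DGJ+:10}. Write $\h = \{x : w \cdot x = \theta\}$ with $\|w\|_2 = 1$ and, after permuting coordinates, $|w_1| \geq \cdots \geq |w_n|$. Set $\tau := \alpha/4$ and let $c$ be the smaller of (i)~the $\tau$-critical index of $w$ and (ii)~the smallest index at which the tail norm $\sigma_{>c} := (\sum_{j > c} w_j^2)^{1/2}$ drops below $4\beta/\alpha$. A standard critical-index argument gives $c = O(\log(\alpha/\beta)/\alpha^2)$; since $\beta \leq \beta_0$, this matches the promised sparsity $O((\log\log(1/\kappa) + \log^2(1/\alpha))/\alpha^2)$. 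In case (i) the tail is $\tau$-regular, so Fact~\ref{fact:be} applied for each head-fixing, together with the hypothesis $|S| \geq \alpha \cdot 2^n$, forces $\sigma_T := \sigma_{>c} \leq 4\beta/\alpha$; in case (ii) the same bound holds by construction.

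Next I truncate tail outliers via Theorem~\ref{thm:chb}: setting $T := \sigma_T \sqrt{2 \log(4/\kappa)}$, at most a $\kappa/2$-fraction of $x$ satisfies $|\sum_{i>c} w_i x_i| > T$, so $S^\ast := \{x \in S : |\sum_{i>c} w_i x_i| \leq T\}$ has $|S^\ast| \geq (\alpha-\kappa/2) \cdot 2^n$, and every $x \in S^\ast$ obeys $|w_H \cdot x_H - \theta| \leq T + \beta =: \beta'$ where $w_H := (w_1,\ldots,w_c)$. Substituting $\sigma_T \leq 4\beta/\alpha$ gives $\beta' = O(\beta \sqrt{\log(1/\kappa)}/\alpha)$. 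The head projections $\{x_H : x \in S^\ast\}$ form a set $T_{\mathrm{dist}} \subseteq \{-1,1\}^c$ contained in the $\beta'$-neighborhood of $\{y : w_H \cdot y = \theta\}$; because each head-setting has at most $2^{n-c}$ extensions, $|T_{\mathrm{dist}}| \geq (\alpha - \kappa/2) \cdot 2^c \geq (\alpha/2) \cdot 2^c$.

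Now I apply Theorem~\ref{thm:goldberg-thm3} in ambient dimension $c$ with density $\alpha'' \geq \alpha/2$: provided $\beta' \leq \bigl((4/\alpha) \cdot c^{5 + \log(2c/\alpha)} \cdot (2 + \log(2c/\alpha))!\bigr)^{-1}$, it produces a hyperplane $\tilde{\h} \subset \R^c$ containing all of $T_{\mathrm{dist}}$. Zero-extending the coefficient vector of $\tilde{\h}$ to $\R^n$ yields $\h'$: it has at most $c$ nonzero coordinates and contains every $x \in S^\ast$, establishing both the cardinality and sparsity claims. The $U$-preservation property can be enforced by feeding the (at most $n+1$) points of $U \cap \h$ as additional constraints into Goldberg's construction---since $U$ involves only $\mathbf{0}$ and standard basis vectors, incorporating them costs nothing in the quantitative parameters, and any $\mathbf{e}_i \in \h$ corresponding to a tail coordinate lies on $\h'$ automatically because its tail-coordinate coefficients vanish.

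The principal obstacle is the final quantitative check that $\beta \leq \beta_0$ implies the displayed inequality on $\beta'$. Plugging in $c = O((\log\log(1/\kappa) + \log^2(1/\alpha))/\alpha^2)$ gives $\log(2c/\alpha) = O(\log(1/\alpha) + \log\log(1/\kappa))$, so the factorial and power terms in Goldberg's bound produce expressions of the form $(\log\log(1/\kappa))^{O(\log\log\log(1/\kappa))} \cdot \alpha^{-O(\log(1/\alpha))}$, while the $\sqrt{\log(1/\kappa)}$ factor from $\beta'$ is absorbed into the leading $(\log(1/\kappa))^{-1/2}$ in $\beta_0$. The inequalities of Section~\ref{ssec:basic-ineq} (Facts~\ref{fac:ineq1} and~\ref{fac:ineq2}) handle the nested telescoping to land exactly on $\beta_0$.
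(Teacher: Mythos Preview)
Your overall strategy is essentially the same as the paper's: fix $\tau=\alpha/4$, perform a critical-index based head/tail split, bound the tail norm (either by geometric decay or by a Berry--Ess\'{e}en anti-concentration argument), drop a $\kappa$-fraction of points via Hoeffding, project to the $c$ head coordinates, and apply a Goldberg-type theorem in $\R^c$. The quantitative check at the end, using Facts~\ref{fac:ineq1} and~\ref{fac:ineq2}, also tracks the paper. So the main engine is correct.

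The genuine gap is your treatment of the $U$-preservation property. Your claim that ``any $\mathbf{e}_i \in \h$ corresponding to a tail coordinate lies on $\h'$ automatically because its tail-coordinate coefficients vanish'' is false: if $i$ is a tail coordinate then $(\mathbf{e}_i)_H = 0$, so $\mathbf{e}_i \in \h'$ if and only if the constant term $\nu$ of $\h'$ equals $0$, which Goldberg's theorem in no way guarantees. More broadly, the projections of the points in $U\cap \h$ to the head coordinates live in $\{0,1\}^c$, not $\{-1,1\}^c$, and Theorem~\ref{thm:goldberg-thm3} as stated only concerns $\{-1,1\}^c$-points. You cannot simply ``feed them as additional constraints'' at zero cost: Goldberg's proof works by rounding the weights and invoking a structural lemma about linear forms on $\{-1,1\}^n$, and extending this to also capture designated $\{0,1\}^n$-points requires a genuine strengthening of his Theorem~2.

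The paper does exactly this. It first shows (in each case) that the head-projections of the points of $V:=U\cap \h$ are also $O(\beta')$-close to the head hyperplane, using that for such points $|w_T\cdot x_T|$ is either $0$ or a single tail weight $|(w_T)_i|\le\tau\|w_T\|$. It then invokes an explicit extension of Goldberg's theorem (Theorem~\ref{thm:newgoldberg3} in Appendix~\ref{sec:usefulgoldberg}) whose conclusion places all nearby points of $\{-1,1\}^n\cup\{0,1\}^n$, not just $\{-1,1\}^n$, on the new hyperplane; proving this extension in turn requires replacing Goldberg's Theorem~2 by a version (Theorem~\ref{thm:goldberg2}) that handles affine spans inside $\{-1,1\}^n\cup\{0,1\}^n$, established via the Tao--Vu ``Zeroth Inverse Theorem.'' None of this is hard, but it is not free, and your proposal does not address it. Once you supply that extension and verify the closeness of the $V'$-points to the head hyperplane, your argument goes through and matches the paper's.
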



\medskip

\noindent {\bf Discussion.}  We note that while Lemma~\ref{lem:1} may appear to be incomparable to Theorem~\ref{thm:goldberg-thm3}
because it ``loses'' $\kappa 2^n$ points from the set $S$, in fact by taking $\kappa = 1/2^{n+1}$ it must be
the case that our $S^*$ is the same as $S$, and with this choice of $\kappa$, Lemma~\ref{lem:1} gives a strict quantitative improvement of Theorem~\ref{thm:goldberg-thm3}.
(We stress that for our application, though, it will be crucial for us to
use Lemma~\ref{lem:1} by setting the $\kappa$ parameter to depend only on $\alpha$ independent of $n$.)  We further
note that in any statement like Lemma~\ref{lem:1} that does not ``lose'' any points from $S$, the bound
on $\beta$ must necessarily depend on $n$; we show this in Appendix~\ref{ap:optimalbeta}.
Finally, the condition at the end of Lemma~\ref{lem:1} (that if $x \in U$ lies on $\h$, then it lies on $\h'$ as well) is something we will require later for technical reasons.

\smallskip

We give the detailed proof of Lemma~\ref{lem:1} in Section~\ref{ap:lem1}.
We now briefly sketch the main idea underlying the proof of the lemma.
At a high level, the proof  proceeds by  reducing  the
number of variables from $n$ down to
$$m \eqdef O \left( (1/\alpha^2) \cdot (\log(1/\beta) + \log \log (1/\kappa))
\right)$$
followed by an application of Theorem~\ref{thm:newgoldberg3}, a technical
generalization of Theorem~\ref{thm:goldberg-thm3} proved in Appendix~\ref{sec:usefulgoldberg},  in
$\R^m.$ (As we will see later, we use Theorem~\ref{thm:newgoldberg3} instead of Theorem~\ref{thm:goldberg-thm3}
because we need to ensure that points of $U$ which lie on $\h$ continue to lie
on $\h'$.)  The reduction uses the notion of the $\tau$-critical index
applied to the vector $w$ defining $\h.$ (See Section~\ref{ssec:tools} for the relevant definitions.)

The idea of the proof is that for coordinates $i$ in the ``tail'' of $w$
(intuitively, where $|w_i|$ is small) the value of $x_i$
does not have much effect on $d(x,\h)$, and consequently
the condition of the lemma must hold true in a space of much lower dimension than $n$.
To show that tail coordinates of $x$ do not have much effect on $d(x,\h)$, we do a case analysis
based on the $\tau$-critical index $c(w,\tau)$ of $w$ to show that (in both cases)
the $2$-norm of the entire ``tail'' of $w$ must be small.  If $c(w,\tau)$ is large, then this fact follows easily
by properties of the $\tau$-critical index. On the other hand, if $c(w,\tau)$ is small we argue by contradiction as follows:
By the definition of the $\tau$-critical index and the Berry-Ess{\'e}en theorem,
the ``tail'' of $w$ (approximately) behaves like a normal random variable with standard deviation equal to its $2$-norm.
Hence, if the $2$-norm was large, the entire linear form $w \cdot x$ would have good anti-concentration,
which would contradict the assumption of the lemma.
Thus in both cases, we can essentially
ignore the tail and make the effective number of variables be $m$ which is independent of $n$.

\medskip

As described earlier, we view the geometric Lemma~\ref{lem:1} as the key to the proof
of Theorem~\ref{thm:dchow-vs-dh}; however, to obtain Theorem~\ref{thm:dchow-vs-dh}
from Lemma~\ref{lem:1} requires a delicate iterative argument, which we
give in full in the following section.  This argument is essentially a refined
version of Theorem~4 of \cite{Goldberg:06b} with two
main modifications:  one is that we generalize the argument to
allow $g$ to be a bounded function rather than a Boolean
function, and the other is that we get rid of various factors of
$\sqrt{n}$ which arise in the \cite{Goldberg:06b} argument (and which would
be prohibitively ``expensive'' for us). We give the detailed proof in Section~\ref{ap:dchow-vs-dh}.

\section{Proof of Theorem~\ref{thm:dchow-vs-dh}} \label{sec:structural}

In this section we provide a detailed proof of our main structural result (Theorem~\ref{thm:dchow-vs-dh}).

\subsection{Useful Technical Tools.} \label{ssec:tools}

As described above, a key ingredient in the proof of Theorem~\ref{thm:dchow-vs-dh} is the notion of the ``critical index'' of an LTF $f$.  The critical index was implicitly introduced and used in
\cite{Servedio:07cc} and was explicitly used in \cite{DiakonikolasServedio:09,DGJ+:10,OS11:chow} and other works. To define the critical index we need to first define ``regularity'':

\begin{definition}[regularity]
Fix $\tau > 0.$  We say that a vector $w = (w_1, \ldots, w_n) \in \R^n$ is \emph{$\tau$-regular} if $\max_{i \in [n]} |w_i| \leq \tau \|w\| = \tau \sqrt{w_1^2 + \cdots + w_n^2}.$
A linear form $w \cdot x$ is said to be $\tau$-regular if $w$ is $\tau$-regular, and similarly
an LTF is said to be $\tau$-regular
if it is of the form $\sgn(w \cdot x  -\theta)$  where $w$ is $\tau$-regular.
\end{definition}

Regularity is a helpful notion because if $w$ is $\tau$-regular then the Berry-Ess{\'e}en theorem (stated below) tells us that for uniform $x \in \{-1,1\}^n$, the linear form $w \cdot x$ is ``distributed like a Gaussian up to error $\tau$.''  This can be useful for many reasons; in particular, it will let us exploit the strong anti-concentration properties of the Gaussian distribution.

Intuitively, the critical index of $w$ is the first index $i$ such that
from that point on, the vector $(w_i,w_{i+1},\dots,w_n)$ is regular.  A precise definition follows:

\begin{definition}[critical index]
Given a vector $w \in \R^n$ such that $|w_1| \geq \cdots \geq |w_n| > 0$,
for $k \in [n]$ we denote by $\sigma_k$ the quantity $\sqrt{\littlesum_{i=k}^n w_i^2}$.
We define the \emph{$\tau$-critical index $c(w, \tau)$ of $w$} as the smallest
index $i \in [n]$ for which $|w_i| \leq \tau \cdot \sigma_i$. If
this inequality does not hold for any $i \in [n]$, we define $c(w, \tau) = \infty$.
\end{definition}

The following simple fact states that the ``tail weight'' of the vector $w$  decreases exponentially prior to the critical index:
\begin{fact}\label{fact:small-tail}
For any vector $w = (w_1, \ldots, w_n)$ such that $|w_1| \geq \cdots \geq |w_n| > 0$ and $1 \le a \le c(w,\tau)$, we have $\sigma_a < (1-\tau^2)^{(a-1)/2} \cdot \sigma_1$.
\end{fact}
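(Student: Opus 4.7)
The plan is to prove Fact~\ref{fact:small-tail} by a direct telescoping argument that unfolds the recurrence $\sigma_{i+1}^2 = \sigma_i^2 - w_i^2$ together with the defining inequality of the critical index. This is a one-line observation rather than a deep result, so the ``proposal'' is really just identifying the right telescoping and doing the bookkeeping.

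Concretely, the first step is to recall that for every index $i$ in the range $1 \le i < c(w,\tau)$, the definition of the critical index gives $|w_i| > \tau \sigma_i$, hence $w_i^2 > \tau^2 \sigma_i^2$. Combining this with the identity $\sigma_{i+1}^2 = \sigma_i^2 - w_i^2$ (which follows immediately from the definition of $\sigma_k$), one obtains the one-step contraction
\[
\sigma_{i+1}^2 \;<\; \sigma_i^2 - \tau^2 \sigma_i^2 \;=\; (1-\tau^2)\,\sigma_i^2
\qquad\text{for all } 1 \le i < c(w,\tau).
\]

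The second step is to iterate this inequality. Fix any $a$ with $2 \le a \le c(w,\tau)$; then every index $i \in \{1, 2, \ldots, a-1\}$ satisfies $i < c(w,\tau)$, so the one-step contraction applies at each such $i$. Multiplying the inequalities together (equivalently, a short induction on $a$) yields $\sigma_a^2 < (1-\tau^2)^{a-1}\,\sigma_1^2$, and taking square roots gives the claimed bound $\sigma_a < (1-\tau^2)^{(a-1)/2}\,\sigma_1$. The case $a=1$ is degenerate (both sides equal $\sigma_1$), so the inequality should be read as trivially holding with equality there, or the claim interpreted as only substantive for $a \ge 2$.

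There is no real obstacle here: both ingredients (the Pythagorean-style identity for $\sigma_k$ and the lower bound $|w_i| > \tau\sigma_i$ below the critical index) are immediate from the definitions stated in Section~\ref{ssec:tools}. The only thing to be careful about is to restrict the iteration strictly to indices below $c(w,\tau)$, since at $i = c(w,\tau)$ itself the inequality $|w_i| > \tau\sigma_i$ may fail.
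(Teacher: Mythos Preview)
Your proof is correct and matches the paper's argument essentially line for line: both use $|w_i| > \tau \sigma_i$ for $i < c(w,\tau)$ together with $\sigma_{i+1}^2 = \sigma_i^2 - w_i^2$ to get the one-step contraction $\sigma_{i+1} < \sqrt{1-\tau^2}\,\sigma_i$, then iterate. Your observation about the $a=1$ case being degenerate (equality rather than strict inequality) is also apt.
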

\begin{proof}
If $a <c(w,\tau)$, then by definition $|w_a| > \tau \cdot \sigma_a$. This implies that $\sigma_{a+1} < \sqrt{1-\tau^2} \cdot \sigma_a$. Applying this inequality repeatedly, we get  that $\sigma_a < (1-\tau^2)^{(a-1)/2} \cdot \sigma_1$ for any  $1 \le a \le c(w,\tau)$.
\end{proof}


\subsection{Proof of Lemma~\ref{lem:1}.} \label{ap:lem1}

\ignore{
At a high level, the proof  proceeds by  reducing  the
number of variables from $n$ down to
$$m \eqdef O \left( (1/\alpha^2) \cdot (\log(1/\beta) + \log \log (1/\kappa))
\right)$$
followed by an application of Theorem~\ref{thm:newgoldberg3}, a technical
generalization of Theorem~\ref{thm:goldberg-thm3} proved in Appendix~\ref{sec:usefulgoldberg},  in
$\R^m.$ (As we will see later, we use Theorem~\ref{thm:newgoldberg3} instead of Theorem~\ref{thm:goldberg-thm3}
because we need to ensure that points of $U$ which lie on $\h$ continue to lie
on $\h'$.)  The reduction uses the notion of the $\tau$-critical index
applied to the vector $w$ defining $\h.$

The idea of the proof is that for coordinates $i$ in the ``tail'' of $w$
(intuitively, where $|w_i|$ is small) the value of $x_i$
does not have much effect on $d(x,\h)$, and consequently
the condition of the lemma
must hold true in a space of much lower dimension than $n$.
To show that tail coordinates of $x$ do not have much effect on $d(x,\h)$, we do a case analysis
based on the $\tau$-critical index $c(w,\tau)$ of $w$.  If $c(w,\tau)$ is large then the
2-norm of the entire ``tail'' of $w$ must be small, and if $c(w,\tau)$ is small then we use the regularity
of the tail of $w$ to show again that its 2-norm must be small. Thus in both cases, we can essentially
ignore the tail and make the effective number of variables be $m$ which is independent of $n$.
\medskip}

Let $0< \tau < \alpha$. Let $\h = \{ x \in \R^n \mid w \cdot x = \theta \}$
where we can assume (by rescaling) that $\|w\|_2=1$
and (by reordering the coordinates) that $|w_1| \geq |w_2| \geq \ldots \geq
|w_n|$. Note that the Euclidean distance of any point $x \in \mathbb{R}^n$ from $\h$ is $|w \cdot x - \theta|$. Let us also define $V \eqdef \h \cap U$.
Set $\tau \eqdef \alpha/4$ (for conceptual clarity we will continue
to use ``$\tau$'' for as long as possible in the arguments below).
We consider the $\tau$-critical index $c(w, \tau)$ of the vector $w \in
\mathbb{R}^n$ and proceed by case analysis based on its value.
Fix the parameter $K_0 \eqdef \Theta \left( (1/\tau^2) \cdot( \log \log (1/\kappa) + \log(1/\beta) ) \right).$

\smallskip

\noindent {\bf Case I:} $c(w, \tau) > K_0$. In this case, we partition $[n]$
into a set of ``head'' coordinates $H = [K_0]$
and a complementary set of ``tail'' coordinates $T = [n] \setminus H$.
Writing $w$ as $(w_H,w_T)$ and likewise for $x$,
it follows from Fact~\ref{fact:small-tail} that
$\| w_{T} \| \leq O(\beta / \sqrt{\log(1/\kappa)})$.  By the Hoeffding bound, for $(1-\kappa)$ fraction of $x \in \bn$ we have that $|w_T \cdot x_T| \leq \beta$.
Therefore, for $(1-\kappa)$ fraction of $x \in \bn$ we have
$$\left| w_{H} \cdot x_{H} - \theta \right| \le \left| w \cdot x -\theta \right| +  \left| w_{T} \cdot x_{T} \right|     \leq  \left| w \cdot x -\theta \right| +  \beta.$$

\medskip
By the assumption of the lemma, there exists a set $S \subseteq \bn$ of cardinality at least $\alpha\cdot 2^n$ such that for all $x \in S$ we have
$|w\cdot x  - \theta| \leq \beta.$
A union bound and the above inequality imply that there exists a set $S^{\ast} \subseteq S$ of cardinality at least $(\alpha - \kappa) \cdot 2^n$ with the property that
for all $x \in S^{\ast}$, we have
$$|w_{H} \cdot x_{H} - \theta|  \le 2\beta.$$

Also, any $x \in U$ satisfies $\Vert x_T \Vert \le 1$. Hence for any $x \in V$, we have that
\begin{eqnarray*}
|w_H \cdot x_H -\theta|  &\leq& |w \cdot x - \theta| + |w_T \cdot x_T| = |w_T \cdot x_T| \\
&\leq& \|w_T\| \cdot \|x_T\| \leq O(\beta / \sqrt{\log(1/\kappa)}) \le \beta.
\end{eqnarray*}

Define the projection mapping $\phi_{H} :  \mathbb{R}^n \to \mathbb{R}^{|H|}$ by
$\phi_{H} : x \mapsto x_{H}$ and consider the image of $S^{\ast}$, i.e.
$S' \eqdef \phi_H (S^{\ast})$. It is clear that
$|S'| \geq (\alpha-\kappa) \cdot 2^{|H|}$ and that for all $x_{H} \in S'$, we have
$$ |w_{H} \cdot x_{H} - \theta|  \leq 2 \beta.$$
Similarly, if $V'$ is the image of $V$ under $\phi_H$, then for every $x_H \in V'$
we have $|w_H \cdot x_H -\theta| \le \beta$.
\smallskip
It is also clear that $\Vert w_T \Vert <1/2$ and hence $\Vert w_H \Vert >1/2$.
Thus for every $x_H \in (S' \cup V')$ we have
$$ \left|\frac{w_{H} \cdot x_{H}}{\Vert w_H \Vert} - \frac{\theta}{\Vert w_H \Vert}\right|  \leq 4 \beta.$$

We now define the $K_0$-dimensional hyperplane $\h_{H}$ as $\h_{H} \eqdef
\{ x_H \in \R^{|H|} \mid w_H \cdot x_H = \theta \}$. As all points in $S' \cup V'$
are in the $4 \beta$-neighborhood of $\h_{H}$, we may now apply
Theorem~\ref{thm:newgoldberg3} for the hyperplane $\h_{H}$ over $\mathbb{R}^{|H|}$
to deduce the existence of an alternate hyperplane $\h'_{H} = \{ x_H \in
\R^{|H|} \mid v_H \cdot x_H = \nu \}$ that contains all points in $S' \cup V'$.
The only condition we need to verify in order
that Theorem~\ref{thm:newgoldberg3} may be applied is that $4\beta$ is upper bounded by
$$ \left( \frac{2}{\alpha-\kappa} \cdot K_0^{5+ \lfloor \log(K_0/(\alpha-\kappa)) \rfloor}  \cdot \left( 2+\lfloor \log\left( K_0 / (\alpha-\kappa)\right) \rfloor  \right)! \right)^{-1}.$$
In the following $C_1,C_2,$ etc. denote unspecified absolute positive
constants.
Using $\kappa < \alpha/2$, it suffices to ensure
$$
\beta < (\alpha/K_0)^{C_1( \log (K_0/\alpha))}.$$
Recalling that $\tau =\alpha/4$ and  plugging in the value of $K_0$ in terms of $\alpha$, $\kappa$ and $\beta$, we need to verify that
$$
\beta < \left( \frac{\alpha^3}{ \log \log (1/\kappa) + \log(1/\beta)} \right)^{
C_2( \log(1/\alpha^3) + \log (\log \log (1/\kappa) + \log(1/\beta)) )}.
$$
Using Fact~\ref{fac:ineq1}, we get that the right hand side is lower bounded
by
$$
\alpha^{C_3 \log(1/\alpha)}  \cdot (\log \log (1/\kappa) + \log(1/\beta))^{
-C_3  \log (\log \log (1/\kappa) + \log(1/\beta))}.
$$
Using Fact~\ref{fac:ineq2}, we get that the above expression is lower bounded
by
$$
\alpha^{C_4 \log(1/\alpha)} \cdot \log \log (1/\kappa)^{-C_4 \log \log \log (1/\kappa) } \cdot  \log(1/\beta)^{ - C_4 \log \log (1/\beta)}.
$$
Thus it suffices to verify that
$$
\beta \le \alpha^{C_4 \log(1/\alpha)} \cdot \log \log (1/\kappa)^{-C_4 \log \log \log (1/\kappa) } \cdot  \log(1/\beta)^{ - C_4 \log \log (1/\beta)}.
$$
It is easy to see that for  $$\beta \le  \alpha^{O(\log(1/\alpha))} \cdot \log \log (1/\kappa)^{-O( \log \log \log (1/\kappa)) }$$  (with sufficiently large constants inside the $O(\cdot)$ notation),  the above inequality is indeed true and hence it is true for $\beta \le \beta_0$.

Thus, we get a new
hyperplane $\h'_{K_0} = \{ x_H \in \R^{|H|} \mid v_H \cdot x_H = \nu \}$
that contains all points in $S' \cup V'$. It is then clear that the $n$-dimensional
hyperplane $\h' = \{ x \in \R^{n} \mid v_H \cdot x_H = \nu \}$ contains
all the points in $S^{\ast} = (\phi_H)^{-1}(S')$ and the points in $V$, and that the vector
$v_H$ defining $\h'$ has the claimed number of nonzero coordinates.
So the theorem is proved in Case I.



\smallskip

\noindent {\bf Case II:} $c(w, \tau) \leq K_0$. In this case, we partition $[n]$ into ``head''  and ``tail'' based on the value of $c(w, \tau)$  by taking
$H = [c(w, \tau)]$ and $T = [n] \setminus H$. We use the fact that $w_T$
is $\tau$-regular to deduce that the norm of the tail must be small.

\begin{claim}\label{clm:2}
We have $\| w_T \|_2 \le 2\beta/(\alpha - 3\tau) = 8 \beta/\alpha.$
\end{claim}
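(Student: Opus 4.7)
The plan is a standard proof by contradiction that leverages the anti-concentration of regular linear forms. Because $c(w,\tau) \le K_0$ and $H = [c(w,\tau)]$, every coordinate in $T$ lies past the $\tau$-critical index, so $w_T$ is (essentially) $\tau$-regular. I would therefore apply the Berry-Ess\'{e}en based anti-concentration bound in Fact~\ref{fact:be} to show that a large $\|w_T\|_2$ is incompatible with having an $\alpha$ fraction of $\bn$ in the $\beta$-neighborhood of $\h$.

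Concretely, suppose toward contradiction that $\|w_T\|_2 > 2\beta/(\alpha - 3\tau)$. Fix $x_H \in \{-1,1\}^{|H|}$ and set $\theta' \eqdef \theta - w_H \cdot x_H$. Every $x \in S$ with that head forces $|w_T \cdot x_T - \theta'| \le \beta$, and Fact~\ref{fact:be} applied to the $\tau$-regular form $w_T \cdot x_T$ yields
$$\Pr_{x_T}\bigl[|w_T \cdot x_T - \theta'| \le \beta\bigr] \;\le\; \frac{2\beta}{\|w_T\|_2} + 2\tau \;<\; (\alpha - 3\tau) + 2\tau \;=\; \alpha - \tau.$$
Averaging this bound over uniform $x_H \in \{-1,1\}^{|H|}$ gives $\Pr_x[|w \cdot x - \theta| \le \beta] < \alpha$, which contradicts $S \subseteq \{x : |w \cdot x - \theta| \le \beta\}$ together with $|S| \ge \alpha \cdot 2^n$. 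Substituting $\tau = \alpha/4$ then produces the stated bound $\|w_T\|_2 \le 2\beta/(\alpha - 3\tau) = 8\beta/\alpha$.

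The one point requiring mild care is that the definition of the critical index only directly gives $|w_i| \le \tau \sigma_{c(w,\tau)}$ for $i \in T$, whereas what Fact~\ref{fact:be} needs is $\max_{i \in T}|w_i| \le \tau \|w_T\|_2 = \tau \sigma_{c(w,\tau)+1}$. The two quantities differ by a factor of $\sqrt{1-\tau^2}$, since $\sigma_{c(w,\tau)+1}^2 = \sigma_{c(w,\tau)}^2 - w_{c(w,\tau)}^2 \ge (1-\tau^2)\sigma_{c(w,\tau)}^2$, so $w_T$ is $(\tau/\sqrt{1-\tau^2})$-regular. For $\tau = \alpha/4$ this only perturbs the additive $2\tau$ term in Fact~\ref{fact:be} by a vanishingly small amount and can be absorbed into the slack $(\alpha - 3\tau) - (2\beta/\|w_T\|_2)$, so the argument goes through without real change. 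I do not expect any other obstacle.
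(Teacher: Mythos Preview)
Your proposal is correct and essentially the same as the paper's proof: both assume $\|w_T\|_2 > 2\beta/(\alpha-3\tau)$, invoke the Berry--Ess\'een anti-concentration bound (Fact~\ref{fact:be}) on the $\tau$-regular tail form $w_T \cdot x_T$, and then average (equivalently, take a sup) over the head $x_H$ to contradict $|S| \ge \alpha \cdot 2^n$. Your remark about $w_T$ only being $\tau/\sqrt{1-\tau^2}$-regular rather than exactly $\tau$-regular is a nice point that the paper glosses over; as you note, the slack in the inequality absorbs it.
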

\begin{proof}
Suppose for the sake of contradiction that $$\|w_T\|_2 > 2\beta/(\alpha -
3\tau).$$
By the Berry-Ess{\'e}en theorem (Theorem~\ref{thm:be}, or more precisely
Fact~\ref{fact:be}), for all $\delta>0$ we have
$${\sup}_{t \in \R} \Pr_{x_T} \left[ \left| w_T \cdot x_T - t \right| \le \delta \right] \leq \frac{2\delta }{\Vert w_T \Vert} + 2\tau.$$
By setting $\delta \eqdef (\alpha-3\tau)\|w_T\| / 2 > \beta$ we get that
$${\sup}_{t \in \R} \Pr_{x_T} \left[ \left| w_T \cdot x_T - t \right| \le
\delta \right] < \alpha,$$
and consequently
\begin{eqnarray*}
\Pr_{x}[ | w \cdot x  - \theta | \le \beta ]
&\leq& \sup_{t \in \R}\Pr_{x_T}[|w_T \cdot x_T - t| \leq \beta] \\
&\leq& \sup_{t \in \R} \Pr_{x_T} \left[ \left| w_T \cdot x_T - t \right| \le \delta \right] \\
&<& \alpha
\end{eqnarray*}
which contradicts the existence of the set $S$ in the statement of the lemma.
\end{proof}

The rest of the proof proceeds similarly to Case I.
By the Hoeffding bound, for $1-\kappa$ fraction of $x \in \bn$ we have
$$|w_{H} \cdot x_{H} - \theta| \le  |w \cdot x -\theta| +  \beta'$$
where $\beta'  = O\left( (\beta/\alpha) \cdot \sqrt{\log(1/\kappa)} \right).$
By the assumption of the lemma and a union bound,
there exists a set $S^{\ast} \subseteq S$ of cardinality at least $(\alpha - \kappa) \cdot 2^n$ with the property that
for all $x \in S^{\ast}$ we have
$$|w_{H} \cdot x_{H} - \theta|  \le \beta' + \beta.$$
Turning to $V$, for every point $x \in V$ we have that $|w_{H} \cdot x_{H} - \theta| \le
|w \cdot x - \theta| + |w_T \cdot x_T| = |w_T \cdot x_T|.$  For $x \in V$ the value
$w_T \cdot x_T$ is either 0 (if $x = \mathbf{0}$) or is $(w_T)_i$ (if $x = \mathbf{e}_i$) for some $i \in T.$ Since $w_T$ is $\tau$-regular we have $|(w_T)_i| \leq \tau \cdot \|w_T\| \leq (\alpha/4) \cdot (8 \beta/\alpha) = 2 \beta$, so for every $x \in V$ we have $|w_H \cdot x_H - \theta| \leq 2 \beta
\leq \beta + \beta'.$

As before, we define the projection mapping $\phi_{H} : \mathbb{R}^n \to\mathbb{R}^{|H|}$ by
$\phi_{H} : x \mapsto x_{H}$. We let  $S' \eqdef
\phi_H (S^{\ast})$ and $V' \eqdef \phi_H(V)$.  It is clear that
$|S'| \geq (\alpha-\kappa) \cdot 2^{|H|}$ and that for all $x_{H} \in (S' \cup V')$
we have
$$ |w_{H} \cdot x_{H} - \theta|  \leq \beta' + \beta.$$
and that for all $x_H \in V'$, $ |w_{H} \cdot x_{H} - \theta|  \leq \beta.$
\smallskip
 We now define the $|H|$-dimensional hyperplane $\h_{H}$ as
$\h_{H} \eqdef \{ x_H \in \R^{|H|} \mid w_H \cdot x_H = \theta \}$.
As before, we note that  $\Vert w_T \Vert <1/2$ and hence
$\Vert w_H \Vert >1/2$. Hence, every point $x_H \in S' \cup V'$ is
$2(\beta + \beta') \le 4 \beta'$ close to $\h_{H}$. As all points in $S' \cup V'$
are $4 \beta'$ close to $\h_{H}$, we may now apply
Theorem~\ref{thm:newgoldberg3} over $\R^{|H|}$ to
deduce the existence of an alternate
hyperplane $\h'_{H} \eqdef \{ x_H \in \R^{|H|} \mid v_H \cdot x_H = \nu \}$
that contains all points in $S'$ and $V'$. The only condition we need to verify is that $4\beta'$ is at most
$$\left( \frac{2}{\alpha-\kappa} \cdot |H|^{5+ \lfloor \log(|H|/(\alpha-\kappa)) \rfloor} \cdot (2+\lfloor \log(|H|/(\alpha-\kappa)) \rfloor)! \right)^{-1}.$$
As $\beta' = O((\beta \sqrt{\log(1/\kappa)})/\alpha)$, doing a calculation akin to the calculation in Case I
(now using $|H| \leq K_0$) we get that the above inequality is true
for $$\beta \le (\log(1/\kappa))^{-1/2} \cdot \alpha^{O(\log(1/\alpha))} \cdot \log \log (1/\kappa)^{-O( \log \log \log (1/\kappa)) }$$ as long as the constant inside the $O(\cdot)$ notation are sufficiently large.
(It is instructive to note here that it is Case II which is the ``bottleneck'' for our overall bound, in the sense that we require a stronger upper bound on $\beta$ for Case II than for Case I.)
%
%
It is now clear that the $n$-dimensional hyperplane $\h' = \{ x \in \R^{n} \mid v_H \cdot x_H = \nu \}$
contains all the points in $S^{\ast} = (\phi_H)^{-1}(S')$  and the points in $V$, and has the
claimed number of nonzero coordinates.  This proves the Lemma in Case II
and concludes the proof of Lemma~\ref{lem:1}.

\subsection{Proof of Theorem~\ref{thm:dchow-vs-dh}.} \label{ap:dchow-vs-dh}

As mentioned in the body of the paper, our proof is essentially a refined
version of Theorem~4 of \cite{Goldberg:06b} with two
main modifications:  one is that we generalize Goldberg's arguments to
allow $g$ to be a bounded function rather than a Boolean
function, and the other is that we get rid of various factors of
$\sqrt{n}$ which arise in the \cite{Goldberg:06b} argument (and which would
be prohibitively ``expensive'' for us).
The key to getting rid of these
factors is the following
simple lemma:
\begin{lemma}\label{clm:3}
Let $S \subseteq \{-1,1\}^n$ and $\mcwa: S \rightarrow [0,2]$ such that $\littlesum_{x \in S} \mcwa(x) = \delta 2^n$.  Also,
let $v \in \R^n$ have $\| v \| =1$.  Then
$$\littlesum_{x \in S} \mcwa(x) \cdot |v \cdot x|  = O (\delta \sqrt{\log(1/\delta)} ) \cdot 2^n.$$
\end{lemma}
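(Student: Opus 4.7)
The plan is to bound the sum by a layer-cake (distribution-function) decomposition and then to combine the trivial ``weight budget'' bound with the Hoeffding tail bound on $|v\cdot x|$, optimizing the point at which we switch between the two.

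First I would write
\[
\littlesum_{x\in S}\mcwa(x)\,|v\cdot x| \;=\; \int_0^\infty M(t)\,dt, \qquad M(t)\eqdef\littlesum_{x\in S:\,|v\cdot x|>t}\mcwa(x).
\]
We have two natural upper bounds on $M(t)$. The first is the total weight: $M(t)\le \littlesum_{x\in S}\mcwa(x)=\delta 2^n$. The second uses $\mcwa(x)\le 2$ together with the Hoeffding bound (Theorem~\ref{thm:chb}) applied to the random variable $v\cdot x$ with $x$ uniform on $\bn$; since each $v_ix_i$ is supported on an interval of length $2|v_i|$ and $\|v\|=1$, Hoeffding gives $\Pr_x[|v\cdot x|>t]\le 2e^{-t^2/2}$, hence
\[
M(t)\;\le\;2\cdot\#\{x\in\bn:|v\cdot x|>t\}\;\le\;4\cdot 2^n\,e^{-t^2/2}.
\]

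The two bounds cross (up to constants) at $t_0\eqdef \sqrt{2\log(8/\delta)}=\Theta(\sqrt{\log(1/\delta)})$. I would split the integral at $t_0$: on $[0,t_0]$ use the trivial bound, giving a contribution at most $\delta\, t_0\, 2^n = O(\delta\sqrt{\log(1/\delta)})\cdot 2^n$; on $[t_0,\infty)$ use the Hoeffding bound, giving
\[
\int_{t_0}^\infty 4\cdot 2^n e^{-t^2/2}\,dt \;\le\; \frac{4\cdot 2^n\,e^{-t_0^2/2}}{t_0} \;=\; O\!\left(\frac{\delta}{\sqrt{\log(1/\delta)}}\right)\cdot 2^n,
\]
using the standard tail estimate $\int_{t_0}^\infty e^{-t^2/2}dt\le e^{-t_0^2/2}/t_0$. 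Summing the two contributions yields the claimed $O(\delta\sqrt{\log(1/\delta)})\cdot 2^n$ bound.

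There is essentially no serious obstacle: the main point is simply to choose the threshold $t_0$ correctly so that the Gaussian-type tail of $|v\cdot x|$ kicks in exactly when the weight budget $\delta 2^n$ becomes looser than the Hoeffding count. The fact that $\mcwa(x)\le 2$ (rather than $\le 1$) only costs a constant factor. One small thing to be mindful of is that $\delta$ could in principle be very small (even exponentially small in $n$), but the argument above goes through uniformly in $\delta\in(0,1]$ since both bounds on $M(t)$ are valid for every $t\ge 0$.
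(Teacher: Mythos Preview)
Your proposal is correct and takes essentially the same approach as the paper: both arguments express the sum via the layer-cake formula, bound the distribution function by the minimum of the total weight $\delta 2^n$ and the Hoeffding count $4\cdot 2^n e^{-t^2/2}$, and split the integral at $t_0=\Theta(\sqrt{\log(1/\delta)})$. The only cosmetic difference is that the paper first normalizes by $\delta 2^n$ to a probability distribution $D$ on $S$ and bounds $\E_{x\sim D}[|v\cdot x|]$, whereas you work with the unnormalized sum directly; the two are equivalent.
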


\begin{proof}
For any $x \in S$, let $D(x) \eqdef \mcwa(x)/(\littlesum_{x \in S} \mcwa(x))$. Clearly, $D$ defines a probability distribution over $S$.
By definition, $\E_{x \sim D}[|v \cdot x|] = (\littlesum_{x \in S} \mcwa(x) \cdot |v \cdot x|)/(\littlesum_{x \in S} \mcwa(x))$.
Since $\littlesum_{x \in S} \mcwa(x) = \delta \cdot 2^n$, to prove the lemma it suffices to show that $\E_{x \sim D}[|v \cdot x|] = O(\sqrt{\log(1/\delta)}).$
Recall that for any non-negative random variable $Y$, we have the identity $\E[Y] = \int_{t \ge 0} \Pr [Y>t] \ dt$. Thus, we have
$$\E_{x \sim D} [|v \cdot x|] = \int_{t \ge 0} \Pr_{x \sim D} [|v \cdot x|>t]  \ dt.$$
To bound this quantity, we exploit the fact that the integrand
is concentrated. Indeed, by the Hoeffding bound we have that $$\Pr_{x \sim \bn} [ |v \cdot x | > t ] \leq 2 e^{-t^2/2}.$$  This implies that the set $A= \{ x \in \bn :  |v \cdot x | > t \}$ is of size at most $2 e^{-t^2/2} 2^n$.
Since $\mcwa(x) \le 2$ for all $x \in S$, we have that
$\littlesum_{x \in A \cap S} \mcwa(x) \le 4 e^{-t^2/2} 2^n$. This implies that $\Pr_{x \sim D} [ |v \cdot x | > t ] \leq (4/\delta) \cdot e^{-t^2/2}$.
The following chain of inequalities completes the proof:
\begin{eqnarray*}
\E_{x \sim D} \left[ |v \cdot x| \right]  &=& \int_{t = 0}^{\sqrt{2\ln(1/\delta)}} \Pr_{x \sim D} [|w \cdot x|>t]  \ dt + \int_{t \ge\sqrt{2\ln(1/\delta)}} \Pr_{x \sim D} [|v \cdot x|>t]  \ dt  \\
&\le& \sqrt{2\ln(1/\delta)} + \int_{t \ge\sqrt{2\ln(1/\delta)}} \Pr_{x \sim D} [|v \cdot x|>t]  \ dt \\
&\le&  \sqrt{2\ln(1/\delta)} + \int_{t \ge\sqrt{2\ln(1/\delta)}} \frac{4 e^{-t^2/2}}{\delta} \ dt \\
&\le& \sqrt{2\ln(1/\delta)} + \int_{t \ge\sqrt{2\ln(1/\delta)}} \frac{4 te^{-t^2/2}}{\delta} \ dt = \sqrt{2\ln(1/\delta)}  + 4.
\end{eqnarray*}
%
\end{proof}

We are now ready to prove Theorem~\ref{thm:dchow-vs-dh}.

\begin{proof}[Proof of Theorem~\ref{thm:dchow-vs-dh}]
Let $f:\bn \to \bits$ be an LTF and $g:\bn \to [-1,1]$ be an arbitrary
bounded function. Assuming that $\dist(f,g) = \eps$, we will prove
that $\dchow(f,g) \geq \delta = \delta(\eps) \eqdef \eps^{\Theta(\log^2(1/\eps))}$.

\smallskip

Let us define $V_{+} = \{x \in \bn \mid f(x) = 1, g(x) < 1\}$ and $V_{-} = \{x \in \{-1,1\}^n \mid f(x) =- 1, g(x) >-1\}$.
Also, for every point $x \in \bn$, we associate a weight $\mcwa(x) = |f(x) - g(x)|$ and for a set $S$, we define $\mcwa(S) \eqdef \sum_{x \in S} \mcwa(x)$.

It is clear that $V_{+} \cup V_{-}$ is the disagreement region
between $f$ and $g$ and that therefore $\mcwa(V_+) + \mcwa(V_-) = \eps \cdot 2^n$. We claim that without loss of generality we may assume that
$(\eps-\delta) \cdot 2^{n-1} \leq \mcwa(V_+), \mcwa(V_-) \leq (\eps+\delta) \cdot 2^{n-1}$. Indeed, if this condition is not satisfied,
we have that $|\wh{f}(0) - \wh{g}(0)| >\delta$ which gives the
conclusion of the theorem.

We record the following straightforward fact which shall be used several times subsequently.
\begin{fact}\label{fac:large}
For $\mcwa$ as defined above, for all $X \subseteq \{-1,1\}^n$, $|X| \ge \mcwa(X)/2$.
\end{fact}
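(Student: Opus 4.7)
The plan is to observe that the fact follows directly from the pointwise bound $\mcwa(x) \le 2$. Concretely, recall the definition $\mcwa(x) = |f(x)-g(x)|$, where $f(x) \in \{-1,1\}$ and $g(x) \in [-1,1]$. First I would note that for any $a \in \{-1,1\}$ and $b \in [-1,1]$, the triangle inequality gives $|a-b| \le |a|+|b| \le 1+1 = 2$, with equality when $\{a,b\} = \{-1,1\}$. Thus $\mcwa(x) \le 2$ for every $x \in \{-1,1\}^n$.

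Having established the pointwise bound, the rest is an immediate summation: for any $X \subseteq \{-1,1\}^n$,
\[
\mcwa(X) \;=\; \sum_{x \in X} \mcwa(x) \;\le\; \sum_{x \in X} 2 \;=\; 2|X|,
\]
and dividing by $2$ yields the desired inequality $|X| \ge \mcwa(X)/2$. There is no real obstacle here; the only subtlety worth flagging is that the bound $\mcwa(x) \le 2$ crucially uses that $g$ is bounded into $[-1,1]$ (which is the standing hypothesis on $g$ throughout the proof of Theorem~\ref{thm:dchow-vs-dh}), since without such a bound no uniform pointwise bound on $\mcwa$ would hold. The fact will subsequently be used to pass between cardinality bounds on subsets of the disagreement region and $\mcwa$-weighted bounds, which is exactly why we do not lose any quantitative factors depending on $n$ when invoking it.
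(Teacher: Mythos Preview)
Your proof is correct and matches the paper's intended reasoning: the paper simply records this as a ``straightforward fact'' without proof, and the pointwise bound $\mcwa(x)=|f(x)-g(x)|\le 2$ followed by summation is exactly the obvious argument.
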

\smallskip

We start by defining $V_+^0 = V_+$, $V_-^0 = V_-$ and $V^0 =V_+^0 \cup  V_-^0$.
The following simple proposition will be useful throughout the proof, since it characterizes the Chow distance between $f$ and $g$
(excluding the degree-$0$ coefficients) as the (normalized) Euclidean distance
between two well-defined points in $\R^n$:

\begin{proposition} \label{prop:euclidean}
Let $\mu_+ = \littlesum_{x \in V_+} \mcwa(x) \cdot x$ and $\mu_-
= \littlesum_{x \in V_-} \mcwa(x) \cdot x$. Then
$ \littlesum_{i=1}^n (\wh{f}(i) - \wh{g}(i))^2  = 2^{-2n} \cdot \| \mu_+ - \mu_- \|^2.$
\end{proposition}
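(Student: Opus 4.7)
The plan is a direct calculation unpacking each side of the claimed identity. The key is the observation that $\mcwa(x) = |f(x)-g(x)|$ combines with the signs of $f-g$ on $V_+$ and $V_-$ to produce exactly the differences of Fourier coefficients once we split the expectation appropriately.

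First I would note the sign structure of $f-g$ on the two sets. On $V_+$ we have $f(x)=1$ and $g(x)<1$, so $f(x)-g(x)>0$ and hence $\mcwa(x) = f(x)-g(x)$. On $V_-$ we have $f(x)=-1$ and $g(x)>-1$, so $f(x)-g(x)<0$ and hence $\mcwa(x) = -(f(x)-g(x)) = g(x)-f(x)$. Off of $V_+\cup V_-$ we have $f=g$ and $\mcwa(x)=0$.

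Next I would expand each coordinate difference of the Chow vectors using the Fourier formula and split the sum across $V_+$, $V_-$ and the (vanishing) complement:
\begin{align*}
\wh{f}(i) - \wh{g}(i)
&= 2^{-n}\sum_{x\in\{-1,1\}^n} (f(x)-g(x))\, x_i \\
&= 2^{-n}\Bigl(\sum_{x\in V_+}\mcwa(x)\, x_i \;-\; \sum_{x\in V_-}\mcwa(x)\, x_i\Bigr) \\
&= 2^{-n}\bigl((\mu_+)_i - (\mu_-)_i\bigr).
\end{align*}
Squaring and summing over $i\in[n]$ then yields
\[
\sum_{i=1}^n (\wh{f}(i)-\wh{g}(i))^2
= 2^{-2n}\sum_{i=1}^n \bigl((\mu_+)_i - (\mu_-)_i\bigr)^2
= 2^{-2n}\,\|\mu_+ - \mu_-\|^2,
\]
which is the claimed identity.

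There is essentially no obstacle here: the whole content of the proposition is bookkeeping about signs, together with the definition of $\mcwa$ and of the Fourier coefficient. The only thing worth being careful about is the sign flip on $V_-$, which is exactly what turns the vector sum $\sum \mcwa(x)\,x$ on $V_-$ into a subtraction and thereby produces $\mu_+-\mu_-$ rather than $\mu_++\mu_-$. Once that is correctly tracked, the proof is one line of algebra.
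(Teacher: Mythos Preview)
Your proof is correct and essentially identical to the paper's own argument: both compute $\wh{f}(i)-\wh{g}(i) = 2^{-n}\bigl((\mu_+)_i-(\mu_-)_i\bigr)$ by splitting the expectation over $V_+$, $V_-$, and the complement, then square and sum. You are slightly more explicit than the paper about the sign bookkeeping on $V_+$ versus $V_-$, but the content is the same.
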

\begin{proof}
For $ i \in [n]$ we have that $\wh{f}(i) = \E[f(x) x_i]$
and hence $\wh{f}(i) - \wh{g}(i) = \E[(f(x) - g(x)) x_i]$.
Hence $2^n (\wh{f}(i) - \wh{g}(i))  = \sum_{x \in V_+} \mcwa(x) \cdot x_i   - \sum_{x \in V_-} \mcwa(x) \cdot x_i = (\mu_+  - \mu_-) \cdot \mathbf{e}_i$ where $(\mu_+  - \mu_-) \cdot \mathbf{e}_i$ is the inner product of the vector $\mu_+  - \mu_-$ with the unit vector $\mathbf{e}_i$. Since $\mathbf{e}_1, \ldots, \mathbf{e}_n$ form a complete orthonormal basis for $\mathbb{R}^n$, it follows that
$$
\Vert \mu_+  - \mu_- \Vert^2 = 2^{2n}\sum_{i \in [n]}  (\wh{f}(i) - \wh{g}(i))^2
$$
proving the claim.
\end{proof}

If $\eta \in \R^n$ has $\| \eta \|=1$ then it is clear that $\| \mu_+ - \mu_- \| \geq  (\mu_+ - \mu_-) \cdot \eta$.
By Proposition~\ref{prop:euclidean}, to lower
bound the Chow distance $\dchow(f,g)$,
it suffices to establish a lower bound on $(\mu_+ - \mu_-) \cdot \eta$
for a unit vector $\eta$ of our choice.

Before proceeding with the proof we fix some notation. For any line $\ell$ in $\mathbb{R}^n$ and point $x \in \mathbb{R}^n$, we let $\ell(x)$ denote the projection of the point $x$ on the line $\ell$.  For a set $X \subseteq \mathbb{R}^n$ and a line $\ell$ in $\mathbb{R}^n$, $\ell(X) \eqdef \{\ell(x) : x \in X\}$. We use $\wh{\ell}$ to denote the unit vector in the direction of $\ell$ (its orientation is irrelevant for us).
\begin{definition}
For a function $\mcwa : \{-1,1\}^n \rightarrow [0,\infty)$,  a set $X \subseteq \{-1,1\}^n$  is  said to be \emph{$(\epsilon,\nu)$-balanced} if $(\epsilon - \nu) 2^{n-1} \le \littlesum_{x \in X} \mcwa(x) \le (\epsilon + \nu) 2^{n-1}$.
\end{definition}

\ignore{Unless explicitly stated otherwise,}Whenever we say that a set $X$ is $(\epsilon,\nu)$-balanced, the associated function $\mcwa$ is implicitly assumed to be the one defined at the start of the proof of Theorem~\ref{thm:dchow-vs-dh}.  The following proposition will be very useful during the course of the proof.
\begin{proposition}\label{prop:separation}
Let $X_1, X_2 \subseteq \{-1,1\}^n$ be $(\epsilon,\nu)$-balanced sets where $\nu \le \epsilon/8$.  Let  $\ell$ be a line in $\mathbb{R}^n$ and $q \in \ell$ be a point on $\ell$ such that
the sets $\ell(X_1)$ and $\ell(X_2)$ lie on opposite sides of $q$.  Suppose that $ S \eqdef \{x \mid x \in X_1 \cup X_2 \textrm { and } \Vert \ell(x) - q \Vert \ge \beta \}$.  If $\littlesum_{x \in S} \mcwa(x) \ge \gamma 2^n$, then for $\mu_1 = \sum_{x \in X_1} \mcwa(x) \cdot x$ and $\mu_2 = \sum_{x \in X_2} \mcwa(x) \cdot x$, we have $$|(\mu_1 - \mu_2) \cdot \wh{\ell} | \ge  (\beta \gamma   - \nu \sqrt{ 2\ln(16/\epsilon)} ) 2^n. $$
In particular, for $\nu \sqrt{2 \ln(16/\epsilon)} \le \beta \gamma/2$, we have $|(\mu_1 - \mu_2) \cdot \wh{\ell} | \ge (\beta \gamma /2) 2^n$.
\end{proposition}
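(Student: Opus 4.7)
The plan is to derive an exact identity that decomposes $(\mu_1 - \mu_2) \cdot \wh{\ell}$ into a ``signed mass'' term and a ``geometric'' term, and then to control each separately, using a Hoeffding-style tail bound to show that $q$ is forced to project close to the origin along $\wh{\ell}$.

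First I would set $c := q \cdot \wh{\ell}$ and $M_i := \littlesum_{x \in X_i} \mcwa(x)$, noting that the $(\epsilon,\nu)$-balanced hypothesis gives $|M_1 - M_2| \le \nu \cdot 2^n$. Without loss of generality (replacing $\wh{\ell}$ by $-\wh{\ell}$ and/or swapping $X_1,X_2$ if necessary), assume $\ell(X_1)$ lies on the $+\wh{\ell}$ side of $q$ and $\ell(X_2)$ on the $-\wh{\ell}$ side, so that $(x-q) \cdot \wh{\ell}$ equals $+\|\ell(x)-q\|$ on $X_1$ and $-\|\ell(x)-q\|$ on $X_2$. Writing $x \cdot \wh{\ell} = c + (x-q)\cdot\wh{\ell}$ and summing with weights $\mcwa(x)$ yields the exact identity
$$(\mu_1 - \mu_2) \cdot \wh{\ell} = c\,(M_1 - M_2) + \littlesum_{x \in X_1 \cup X_2} \mcwa(x)\,\|\ell(x) - q\|,$$
whose second (non-negative) term is at least $\beta \cdot \littlesum_{x \in S} \mcwa(x) \ge \beta \gamma \cdot 2^n$ by the definition of $S$.

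The main step is bounding $|c|$. Since $X_1 \subseteq \{x \in \{-1,1\}^n : x\cdot\wh{\ell} \ge c\}$, and Fact~\ref{fac:large} combined with the balanced hypothesis gives $|X_1| \ge M_1/2 \ge (\epsilon - \nu)\cdot 2^{n-2}$, while the Hoeffding bound applied to the unit-norm linear form $\wh{\ell}\cdot x$ over uniform $x \in \{-1,1\}^n$ gives $|\{x : x\cdot\wh{\ell} \ge c\}| \le e^{-c^2/2} \cdot 2^n$, comparing these two bounds and simplifying using $\nu \le \epsilon/8$ yields $c^2 \le 2\ln(16/\epsilon)$.

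Finally, the reverse triangle inequality gives
$$|(\mu_1 - \mu_2) \cdot \wh{\ell}| \ge \littlesum_{x \in X_1 \cup X_2}\mcwa(x) \|\ell(x) - q\| - |c|\cdot|M_1-M_2| \ge \left(\beta\gamma - \nu\sqrt{2\ln(16/\epsilon)}\right)\cdot 2^n,$$
which is the desired bound; the ``in particular'' clause then follows by algebraic rearrangement. I expect the main conceptual obstacle to be recognizing that, although $q$ is a priori an arbitrary point on $\ell$, the existence of sets of non-negligible $\mcwa$-mass on both sides of $q$ forces its projection $c$ to lie within distance $\sqrt{\log(1/\epsilon)}$ of the origin; once this observation is in hand, the remainder is bookkeeping with the identity above.
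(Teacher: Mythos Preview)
Your proposal is correct and follows essentially the same approach as the paper: the same exact identity $(\mu_1-\mu_2)\cdot\wh{\ell}=c(M_1-M_2)+\sum_{x}\mcwa(x)\|\ell(x)-q\|$, the same lower bound $\beta\gamma\cdot 2^n$ on the geometric term via $S$, and the same Hoeffding-based argument that $|c|\le\sqrt{2\ln(16/\eps)}$ because otherwise one of $X_1,X_2$ would be too small. One minor point: as written, your Hoeffding step only bounds $c$ from above (the inequality $|\{x:x\cdot\wh{\ell}\ge c\}|\le e^{-c^2/2}2^n$ is vacuous when $c<0$); to get the two-sided bound on $|c|$ you also need the symmetric observation $X_2\subseteq\{x:x\cdot\wh{\ell}\le c\}$, which handles the case $c<0$ in the same way. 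The paper's phrasing (``at least one of $|X_1|,|X_2|$ must be at most $(\eps/8)2^n$'') covers both signs simultaneously.
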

\begin{proof}
We may assume that the projection $\ell(x)$ of any point $x \in X_1$ on $\ell$ is of the form $q + \lambda_x \wh{\ell}$ where $\lambda_x >0$, and that the projection $\ell(x)$ of any point $x \in X_2$ on $\ell$ is of the form $q - \lambda_x \wh{\ell}$ where $\lambda_x >0$. We can thus write \begin{eqnarray*}
(\mu_1 - \mu_2) \cdot \wh{\ell}  &=& \littlesum_{ x \in X_1} \mcwa(x) (q \cdot \wh{\ell} + \lambda_x)   -  \littlesum_{ x \in X_{2}} \mcwa(x) (q \cdot \wh{\ell} - \lambda_x) \\
&=&   \left(  \mcwa(X_1) - \mcwa(X_2)\right)q \cdot \wh{\ell}   + \littlesum_{x \in X_1 \cup X_2} \mcwa(x) \cdot \lambda_x.
\end{eqnarray*}
By the triangle inequality we have
$$ \left| (\mu_1 - \mu_2) \cdot \wh{\ell} \right|  \geq   \littlesum_{x \in X_1 \cup X_2} \mcwa(x) \cdot \lambda_x  -  |q \cdot \wh{\ell}|  \left|(  \mcwa(X_1) - \mcwa(X_2) )\right|  $$
so it suffices to bound each term separately.
For the first term we can write $$\littlesum_{x \in X_1 \cup X_2}  \mcwa(x) \cdot \lambda_x \ge \littlesum_{x \in S} \mcwa(x) \cdot \lambda_x \geq  \beta \gamma 2^n.$$
To bound the second term, we first recall that (by assumption) $ \left| \mcwa(X_1) - \mcwa(X_2) \right| \leq \nu 2^n$. Also, we claim that $ |q \cdot \wh{\ell}|< \sqrt{2\ln(16/\epsilon)}$. This is because otherwise the function defined by $g(x) = \sign(x \cdot \wh{\ell}  - q \cdot \wh{\ell})$ will be $\epsilon/8 $ close to a constant function on $\{-1,1\}^n$.
In particular, at least one of $|X_1|, |X_2|$
must be at most $(\epsilon/8) 2^n$.
However, by Fact~\ref{fac:large}, for $i=1,2$ we have that
$|X_i| \ge \mcwa(X_i)/2 \ge (\epsilon/4 -\nu/4) 2^n > (\epsilon/8) 2^n $
resulting in a contradiction.
Hence it must be the case that
$ |q \cdot \wh{\ell}|< \sqrt{2\ln(16/\epsilon)}$.
This implies that  $|(\mu_1 - \mu_2) \cdot \wh{\ell} | \ge  (\beta \gamma   - \nu \sqrt{ 2\ln(16/\epsilon)} ) 2^n $ and the proposition is proved.
\end{proof}

 \medskip

We consider a separating hyperplane $\bA_0$ for $f$ and assume (without loss of generality)
that $\bA_0$ does not contain any points of the unit hypercube $\bn$.
Let $\bA_0 = \{ x \in \R^n \mid w \cdot x = \theta \}$, where $\| w \| = 1$, $\theta \in \R$ and
$f(x) = \sgn(w \cdot x - \theta)$.

Consider a line $\ell_0$ normal to $\bA_0$, so
$w$ is the unit vector defining the direction of $\ell_0$
that points to the halfspace $f^{-1}(1)$. \newad{As stated before, the exact orientation of $\ell_0$ is irrelevant to us and the choice of orientation here is arbitrary.} Let $q_0 \in \R^n$ be the intersection point of $\ell_0$ and $\bA_0$.
Then we can write the line $\ell_0$ as
$\ell_0 = \{ p \in \R^n \mid p = q_0 + \lambda w, \lambda \in \R  \}.$


Define $\beta \eqdef \epsilon^{O(\log(1/\epsilon))}$ and consider the set of points
$$
S_0 = \{x : x \in V^0 \mid \| \ell_0(x) - q_0\| \ge \beta \}.
$$

The following claim states that if $\mcwa(S_0)$  is not very small,
we get the desired lower bound on the Chow distance.

\begin{claim}\label{clm:4}
Suppose that $\mcwa(S_0) \ge \gamma_0 \cdot 2^n$ where $\gamma_0 \eqdef \beta^{4 \log(1/\epsilon) -2} \cdot \epsilon$. Then $\dchow(f,g) \ge \delta$.
\end{claim}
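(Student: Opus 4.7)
\bigskip

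\noindent\textbf{Proof plan for Claim~\ref{clm:4}.}
The plan is to apply Proposition~\ref{prop:separation} directly, with a single well-chosen line and point, and then read off the Chow distance bound from Proposition~\ref{prop:euclidean}. The setup from the proof of Theorem~\ref{thm:dchow-vs-dh} is already tailored for this: the separating hyperplane $\bA_0$ of $f$ puts $V_+^0 \subseteq f^{-1}(1)$ and $V_-^0 \subseteq f^{-1}(-1)$ on opposite sides, so when we project onto the normal line $\ell_0$, the sets $\ell_0(V_+^0)$ and $\ell_0(V_-^0)$ lie on opposite sides of $q_0 = \ell_0 \cap \bA_0$. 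Moreover, by the WLOG reduction made at the start of the proof of Theorem~\ref{thm:dchow-vs-dh}, both $V_+^0$ and $V_-^0$ are $(\epsilon,\delta)$-balanced with respect to the weight function $\mcwa$.

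The key step is to invoke Proposition~\ref{prop:separation} with $X_1 = V_+^0$, $X_2 = V_-^0$, line $\ell = \ell_0$, point $q = q_0$, separation parameter $\beta$, mass parameter $\gamma = \gamma_0$, and balance parameter $\nu = \delta$. The set $S$ in the proposition statement is exactly our $S_0$, and by hypothesis $\mcwa(S_0) \ge \gamma_0 \cdot 2^n$. The numerical precondition to verify is
\[
\delta \sqrt{2\ln(16/\epsilon)} \;\leq\; \beta \gamma_0 / 2.
\]
Plugging in $\gamma_0 = \beta^{4\log(1/\eps)-2} \cdot \eps$ and $\beta = \eps^{O(\log(1/\eps))}$, we get $\beta \gamma_0 = \eps^{\Theta(\log^2(1/\eps))}$, so choosing the hidden constant in $\delta = \eps^{\Theta(\log^2(1/\eps))}$ to be sufficiently large (relative to the constant in $\beta$) makes this precondition hold, since the logarithmic factor $\sqrt{2\ln(16/\eps)}$ is easily absorbed.

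Once Proposition~\ref{prop:separation} applies, it yields
\[
\bigl|(\mu_+ - \mu_-) \cdot \wh{\ell_0}\bigr| \;\geq\; (\beta \gamma_0 / 2) \cdot 2^n.
\]
Since $\wh{\ell_0}$ is a unit vector, we get $\|\mu_+ - \mu_-\| \ge (\beta \gamma_0/2) \cdot 2^n$. Combining with Proposition~\ref{prop:euclidean}, which says that $\sum_{i=1}^n (\wh{f}(i) - \wh{g}(i))^2 = 2^{-2n}\|\mu_+ - \mu_-\|^2$, we conclude
\[
\dchow(f,g) \;\geq\; \sqrt{\littlesum_{i=1}^n (\wh{f}(i) - \wh{g}(i))^2} \;\geq\; \beta \gamma_0 / 2 \;=\; \eps^{\Theta(\log^2(1/\eps))} \;\geq\; \delta,
\]
where the last inequality again just fixes the hidden constant in the definition of $\delta$ appropriately.

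There is essentially no obstacle here: the claim is a direct application of the two preceding propositions, and the only nontrivial task is the book-keeping of constants in the exponents to confirm that $\beta \gamma_0 = \eps^{\Theta(\log^2(1/\eps))}$ so that both (i) the $\nu\sqrt{2\ln(16/\eps)} \le \beta\gamma/2$ precondition of Proposition~\ref{prop:separation} is met and (ii) the resulting bound $\beta\gamma_0/2$ dominates the target $\delta$. The real work of the theorem happens in the complementary case, when $\mcwa(S_0)$ is small, which is where the iterative argument using Lemma~\ref{lem:1} becomes necessary.
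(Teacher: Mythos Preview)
Your proposal is correct and follows essentially the same approach as the paper: apply Proposition~\ref{prop:separation} with $X_1=V_+^0$, $X_2=V_-^0$, line $\ell_0$, point $q_0$, and $\nu=\delta$, then read off the Chow-distance lower bound via Proposition~\ref{prop:euclidean}. The paper's proof is slightly terser (it writes the bound as $\beta\gamma_0 - \delta\sqrt{2\ln(16/\eps)} \ge \delta$ rather than invoking the ``In particular'' clause of the proposition), but the content is identical.
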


\begin{proof}
To prove the desired lower bound, we will apply
Proposition~\ref{prop:euclidean}. Consider projecting every point in $V^0$
on the line $\ell_0$.  Observe that the projections of $V_{+}^{0}$ are
separated from the projections of $V_{-}^{0}$ by the point $q_0$. Also, we
recall that the sets $V_+^0$ and $V_-^0$ are $(\epsilon, \delta)$ balanced.
Thus, if we define $\mu_+ = \littlesum_{x \in V_+^0} \mcwa(x) \cdot x$
and $\mu_- = \littlesum_{x \in V_-^0} \mcwa(x) \cdot x$, we can apply
Proposition~\ref{prop:separation} to get that $|(\mu_+ - \mu_-) \cdot w|
\ge (\beta \gamma_0 - \delta \sqrt{2\ln(16/\eps)} ) 2^n \ge \delta 2^n$.
This implies that $\Vert \mu_+ - \mu_- \Vert^2 \ge \delta^2 2^{2n}$ and
using Proposition~\ref{prop:euclidean}, this proves that $\dchow(f,g)
\ge \delta$.
\end{proof}

If the condition of Claim~\ref{clm:4} is not satisfied,
then we have that $\mcwa(V^0 \setminus S_0) \ge (\epsilon - \gamma_0) 2^n$.  By Fact~\ref{fac:large}, we
have $|V^0 \setminus S_0| \ge (\epsilon -\gamma_0)2^{n-1}$.
We now apply
Lemma~\ref{lem:1} to obtain another hyperplane $\bA_1$ which passes through
all but $\kappa_1 \cdot2^n$  points ($\kappa_1 \eqdef \gamma_0/2$) in $V^0 \setminus S_0$.
We note that the condition of the lemma is satisfied, as $\log(1/\kappa_1) = \poly(\log(1/\epsilon))$
and $|V^0 \setminus S_0| > (\epsilon/4) \cdot 2^n$.

From this point onwards, our proof uses a sequence of $\lfloor \log(1/\epsilon) \rfloor$ cases.   To this end, we define $\gamma_j = \beta^{4 \log (1/\epsilon) -2(j+1)} \cdot \epsilon$. At the beginning of case $j$, we will have an
affine space $A_j$ of dimension $n-j$ such that $\mcwa(V^0 \cap A_j)
\ge (\epsilon-2(\littlesum_{\ell=0}^{j-1} \gamma_{\ell})) 2^n $. We note
that this is indeed satisfied at the beginning of case $1$. To see this,
recall that  $\mcwa(V^0 \setminus S_0) >(\epsilon - \gamma_0) 2^n$. Also,
we have that
\begin{eqnarray*}
W((V^0 \setminus S_0) \setminus (V^0 \cap \bA_1)) &\le& 2|(V^0\setminus S_0) \setminus (V^0 \cap \bA_1)| \\
&\le& 2 \kappa_1 2^n = \gamma_0 2^n.
\end{eqnarray*}
These together imply that $\mcwa(V^0 \cap \bA_1) \ge (\epsilon-2\gamma_0) 2^n $
confirming the hypothesis for  $j=1$.

\smallskip

We next define $V^j = V^0 \cap A_j$, $V^j_{+} = V^j \cap V_{+}$ and
$V^j_{-} = V^j \cap V_-$.  Similarly, define $\Delta^j_+  =  V^0_{+} \setminus V^j_{+} $ and $\Delta^j_-  =  V^0_{-} \setminus V^j_{-} $. Let $A'_{j+1} = A_j \cap \bA_0$.  Note that $A_j \not \subseteq \bA_0$. This is because $A_j$ contains points from $\{-1,1\}^n$ as opposed to $\bA_0$ which does not. Also, $A_j$ is not contained in a hyperplane parallel to $\bA_0$ because $A_j$ contains points of the unit hypercube lying on either side of $\bA_0$. Hence
it must be the case that
$\dim(A'_{j+1}) = n -(j+1)$. Let $\ell_j$ be a line orthogonal to $A'_{j+1}$ which is parallel to  $A_j$. Again, we observe  that the direction of $\ell_j$ is unique.

We next observe that all points in $A'_{j+1}$ project to the same
point in $\ell_j$, which we call $q_j$.  Let us
define  $\Lambda^j_+ = \ell_j(V^j_+)$ and $\Lambda^j_- = \ell_j(V^j_-)$.  We state the following important observation.
\begin{obs}\label{obs:1}
The sets $\Lambda^j_+$ and $\Lambda^j_-$ are separated by $q_j$.
\end{obs}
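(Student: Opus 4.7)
The plan is to show that for every hypercube point $x \in A_j$ the signed coordinate $\lambda_x$ of the projection $\ell_j(x) = q_j + \lambda_x\wh{\ell}_j$ along $\ell_j$ has the same sign as $w\cdot x - \theta$. Since $V^j_+ \subseteq f^{-1}(1) = \{w\cdot x > \theta\}$, $V^j_- \subseteq f^{-1}(-1) = \{w\cdot x < \theta\}$, and $\bA_0$ by assumption contains no hypercube point (so these inequalities are strict), this immediately yields the claimed separation of $\Lambda^j_+$ and $\Lambda^j_-$ by $q_j$.

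To make this precise, I would fix any reference point $q'_j \in A'_{j+1} = A_j \cap \bA_0$ and write $\lambda_x = (x-q'_j)\cdot \wh{\ell}_j$; this parameterization is consistent with the definition of $q_j$ since every point of $A'_{j+1}$ is orthogonal to $\wh{\ell}_j$ relative to $q'_j$, and therefore projects to the same $q_j$. Next I decompose the displacement inside the direction space of $A_j$ as $x - q'_j = u + \lambda_x\wh{\ell}_j$, where $u$ lies in the direction space of $A'_{j+1}$; this decomposition is well-defined because $\wh{\ell}_j$ spans the orthogonal complement of the direction space of $A'_{j+1}$ within that of $A_j$, by the definition of $\ell_j$ as a line in $A_j$ orthogonal to $A'_{j+1}$. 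Using $w\cdot q'_j = \theta$ (because $q'_j \in \bA_0$) and $w\cdot u = 0$ (because $u$ is a difference of points in $A'_{j+1} \subseteq \bA_0$), I obtain the key identity
\[
w\cdot x - \theta \;=\; w\cdot(x-q'_j) \;=\; \lambda_x\,(w\cdot \wh{\ell}_j).
\]

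To finish, I need $w\cdot\wh{\ell}_j \neq 0$: otherwise $\wh{\ell}_j$ would lie simultaneously in the direction space of $\bA_0$ and of $A_j$, hence in that of $A_j\cap \bA_0 = A'_{j+1}$, contradicting $\wh{\ell}_j\perp A'_{j+1}$ and $\wh{\ell}_j \neq 0$. Since the orientation of $\ell_j$ is explicitly stated to be arbitrary, I may assume $w\cdot\wh{\ell}_j>0$, so that $\sgn(\lambda_x) = \sgn(w\cdot x - \theta)$ for every hypercube point $x\in A_j$. This gives $\lambda_x > 0$ throughout $V^j_+$ and $\lambda_x < 0$ throughout $V^j_-$, which is exactly the observation. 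The proof is pure affine geometry and the only subtle point is verifying that $\wh{\ell}_j$ really is the orthogonal complement direction inside the direction space of $A_j$, which is immediate from the definition of $\ell_j$; I do not expect any serious obstacle.
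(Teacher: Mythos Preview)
Your proof is correct and is precisely the natural geometric argument the paper has in mind; the paper itself states this observation without proof, treating it as immediate from the setup. Your key identity $w\cdot x - \theta = \lambda_x\,(w\cdot \wh{\ell}_j)$ for $x\in A_j$, together with $w\cdot\wh{\ell}_j \neq 0$ and the assumption that $\bA_0$ contains no hypercube points, cleanly establishes the separation.
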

Next, we define $S_j$ as :
$$
S_ j  =  \{x \in V^j \mid \| \ell_j(x) - q_j\|_2 \ge \beta \}.
$$
The next claim  is analogous to Claim~\ref{clm:4}. It says that
if $\mcwa(S_j)$ is not too small, then we get the desired
lower bound on the Chow distance. \newad{The proof is slightly more technical and uses Lemma~\ref{clm:3}}.
\begin{claim}\label{clm:induction}
For $j \le \log(8/\epsilon)$, suppose that $\mcwa(S_j)\ge \gamma_j \cdot 2^n$
where $\gamma_j$ is as defined above. Then $\dchow(f,g) \ge \delta$.
\end{claim}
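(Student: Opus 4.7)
The plan is to mimic the argument of Claim~\ref{clm:4} but carefully account for the points of $V^0$ that have been ``removed'' in iterations $0,1,\ldots,j-1$, namely $\Delta^j \eqdef \Delta_+^j \cup \Delta_-^j$. As in Claim~\ref{clm:4} I choose the unit direction $\eta = \wh{\ell_j}$ and use Proposition~\ref{prop:euclidean} to reduce lower bounding $\dchow(f,g)$ to lower bounding $|(\mu_+ - \mu_-)\cdot \wh{\ell_j}|$, where $\mu_\pm = \sum_{x\in V_\pm}\mcwa(x)x$. Splitting $V_\pm = V_\pm^j \cup \Delta_\pm^j$ and writing $\mu_\pm^j \eqdef \sum_{x\in V_\pm^j}\mcwa(x)x$ and $\rho_\pm^j \eqdef \sum_{x\in \Delta_\pm^j}\mcwa(x)x$, I have
\[ (\mu_+-\mu_-)\cdot\wh{\ell_j} \;=\; (\mu_+^j - \mu_-^j)\cdot \wh{\ell_j} \;+\; (\rho_+^j - \rho_-^j)\cdot\wh{\ell_j}. \]
Proposition~\ref{prop:separation} will give a lower bound on the first term, and Lemma~\ref{clm:3} will upper bound the error $(\rho_+^j - \rho_-^j)\cdot \wh{\ell_j}$; this use of Lemma~\ref{clm:3} is precisely the new ingredient foreshadowed in the claim statement.

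The detailed steps are: (i) verify that $V_+^j$ and $V_-^j$ are jointly $(\eps_j,\nu_j)$-balanced with $\eps_j = \mcwa(V^j)/2^n \ge \eps - 4\gamma_{j-1}$ and $\nu_j \le \delta + 4\gamma_{j-1}$, using the degree-zero assumption $|\mcwa(V_+^0) - \mcwa(V_-^0)| \le \delta\cdot 2^n$ together with the cumulative loss bound $\mcwa(\Delta^j) \le 2\sum_{\ell=0}^{j-1}\gamma_\ell\cdot 2^n \le 4\gamma_{j-1}\cdot 2^n$ (a geometric sum, since $\gamma_\ell = \beta^{2(\ell-j+1)}\gamma_{j-1}$); (ii) apply Proposition~\ref{prop:separation} to $X_1 = V_+^j$, $X_2 = V_-^j$ on the line $\ell_j$ with separator $q_j$ (justified by Observation~\ref{obs:1}) and with $S = S_j$ (justified by the hypothesis $\mcwa(S_j)\ge\gamma_j\cdot 2^n$), yielding $|(\mu_+^j - \mu_-^j)\cdot\wh{\ell_j}| \ge (\beta\gamma_j/2)\cdot 2^n$; (iii) apply Lemma~\ref{clm:3} with $S = \Delta^j$ (noting $\mcwa(x)\in [0,2]$ since $f$ is Boolean and $g\in[-1,1]$) and $v = \wh{\ell_j}$ to obtain $|(\rho_+^j - \rho_-^j)\cdot \wh{\ell_j}| \le O(\delta''\sqrt{\log(1/\delta'')})\cdot 2^n$ with $\delta'' \le 4\beta^2\gamma_j$; (iv) combine via the triangle inequality to get $|(\mu_+-\mu_-)\cdot\wh{\ell_j}| \ge (\beta\gamma_j/4)\cdot 2^n \ge \delta\cdot 2^n$; and (v) conclude via Proposition~\ref{prop:euclidean} that $\dchow(f,g) \ge \delta$.

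The main obstacle is the parameter calibration making the Proposition~\ref{prop:separation} contribution dominate the Lemma~\ref{clm:3} error. The former scales as $\beta\gamma_j$, while the latter scales as $\beta^2\gamma_j\cdot\sqrt{\log(1/(\beta^2\gamma_j))} = O(\beta^2\gamma_j\log(1/\eps))$ (using $\gamma_{j-1} = \beta^2\gamma_j$ and $\log(1/\gamma_j) = O(\log^2(1/\eps))$). Domination requires $\beta\log(1/\eps) \ll 1$, which holds because $\beta = \eps^{O(\log(1/\eps))}$ is chosen so aggressively. The same choice, combined with $\delta$ calibrated to satisfy $\delta \lesssim \beta\gamma_0/\sqrt{\log(1/\eps)}$, simultaneously verifies the hypothesis $\nu_j\sqrt{2\ln(16/\eps_j)}\le \beta\gamma_j/2$ needed for Proposition~\ref{prop:separation} to trigger in step (ii). The geometric scaling $\gamma_j/\gamma_{j-1} = \beta^{-2}$ hardwired into the definition of $\gamma_j$ is precisely what makes all of these estimates telescope cleanly through the iterations and through the final bound.
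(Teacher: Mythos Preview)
Your proposal is correct and follows essentially the same approach as the paper: split $\mu_\pm$ into the ``surviving'' part $\mu_\pm^j$ and the ``lost'' part coming from $\Delta_\pm^j$, apply Proposition~\ref{prop:separation} to the former (using Observation~\ref{obs:1} for separation and the $(\eps,\nu)$-balance of $V_\pm^j$), control the latter via Lemma~\ref{clm:3}, and conclude through Proposition~\ref{prop:euclidean}. The only cosmetic differences are that you center the balance at $\eps_j = \mcwa(V^j)/2^n$ rather than at $\eps$, and you apply Lemma~\ref{clm:3} once to $\Delta^j$ instead of twice to $\Delta_+^j$ and $\Delta_-^j$ separately; both are fine.

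One small arithmetic slip: you write $\log(1/\gamma_j) = O(\log^2(1/\eps))$, but since $\gamma_j = \beta^{4\log(1/\eps)-2(j+1)}\eps$ and $\log(1/\beta) = O(\log^2(1/\eps))$, in fact $\log(1/\gamma_j) = O(\log^3(1/\eps))$. This is harmless for your argument, because the domination condition you actually need is $\beta \cdot \mathrm{polylog}(1/\eps) \ll 1$, which holds trivially for $\beta = \eps^{O(\log(1/\eps))}$.
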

\begin{proof}
We start by observing that  $$\left( \epsilon  - 4\littlesum_{\ell=0}^{j-1} \gamma_{\ell} \right)2^{n-1} \le \mcwa(V^j_+), \mcwa(V^j_-) \le (\epsilon + \delta) 2^{n-1}.$$ The upper bound is obvious because $V_+^j \subseteq V_+^0$ and $V_-^j \subseteq V_-^0$ and the range of $\mcwa$ is non-negative. To see the lower bound, note that $\mcwa(V^0 \setminus V^j) \le 2(\littlesum_{\ell=0}^{j-1} \gamma_{\ell})2^{n}$.  As $V^0_+ \setminus V^j_+$ and $V^0_- \setminus V^j_-$ are both contained in $V^0 \setminus V^j$, we get the stated lower bound. We also note that
\begin{eqnarray*}
2\left(\sum_{\ell=0}^{j-1} \gamma_{\ell}\right)2^n  &=& 2 \left(\sum_{\ell=0}^{j-1} \beta^{4 \log(1/\epsilon) -2 \ell -2} \right) 2^n \\
                                                                                        &\le& 4 \beta^{4 \log(1/\epsilon) -2j}  2^n.
\end{eqnarray*}
This implies that the sets $V^j_+$ and $V^j_-$ are $(\epsilon, 4 \beta^{4 \log(1/\epsilon) -2j}  + \delta)$ balanced.  In particular, using that $\delta \le 4 \beta^{4 \log(1/\epsilon) -2j}$,  we can say that the sets $V^j_+$ and $V^j_-$ are $(\epsilon,8 \beta^{4 \log(1/\epsilon) -2j})$-balanced.  We also observe that for $j \le \log(8/\epsilon)$, we have that $8 \beta^{4 \log(1/\epsilon) -2j} \le \epsilon/8$.   Let us define $\mu_+^j = \sum_{x \in V_+^j} \mcwa(x) \cdot x$ and $\mu_-^j = \sum_{x \in V_-^j} \mcwa(x) \cdot x$.
An application of Proposition~\ref{prop:separation} yields that  $|(\mu_+^j - \mu_-^j) \cdot \wh{\ell_j}| \ge (\beta \gamma_j - 8 \beta^{4 \log(1/\epsilon) -2j} \sqrt{2 \ln(16/\epsilon)})2^n$.

\smallskip

We now note that
$$(\mu_+ - \mu_-)\cdot \wh{\ell_j}  = (\mu_+^j - \mu_-^j) \cdot \wh{\ell_j} + \left( \sum_{x \in \Delta^j_+} \mcwa(x) - \sum_{x \in \Delta^j_-} \mcwa(x) \right) \cdot \wh{\ell_j}.$$ Defining $\mu'^j_+ =\littlesum_{x \in \Delta^j_+} \mcwa(x) \cdot x$ and $\mu'^j_- =\littlesum_{x \in \Delta^j_-} \mcwa(x) \cdot x$, the triangle inequality implies that $$
\left|\left(\mu_+ - \mu_-\right)\cdot \wh{\ell_j} \right| \ge \left|\left(\mu_+^j - \mu_-^j\right) \cdot \wh{\ell_j}\right| - \left|\mu'^j_+  \cdot \wh{\ell_j}\right| -\left| \mu'^j_-\cdot \wh{\ell_j} \right|.
$$
Using Lemma~\ref{clm:3} and that $\mcwa(\Delta^j_+), \mcwa(\Delta^j_-)\le  \mcwa(V^0 \setminus V^j) \le 8 \beta^{4 \log(1/\epsilon) -2j}\cdot 2^n$, we get that
\begin{eqnarray*}
\left|\mu'^j_+  \cdot \wh{\ell_j}\right| &=& \littlesum_{x \in \Delta^j_+} \mcwa(x) \cdot x \cdot \wh{\ell_j} \\
                                                               &=& O\left(|\Delta^j_+| \cdot \sqrt{\log(2^n/|\Delta^j_+|) } \right)\\
                                                               &=& O\left(  \beta^{4 \log(1/\epsilon) -2j} \cdot \log^{3/2}(1/\epsilon) \cdot 2^n \right)
\end{eqnarray*}
and similarly
\begin{eqnarray*}
\left|\mu'^j_-  \cdot \wh{\ell_j}\right| &=& \littlesum_{x \in \Delta^j_-} \mcwa(x) \cdot x)\cdot \wh{\ell_j} \\
                                                              &=& O\left(|\Delta^j_-| \cdot \sqrt{\log(2^n/|\Delta^j_-|) } \right)\\
                                                             &=& O\left(  \beta^{4 \log(1/\epsilon) -2j} \cdot \log^{3/2}(1/\epsilon) \cdot 2^n \right).
\end{eqnarray*}
This implies that
\begin{eqnarray*}
\left|\left(\mu_+ - \mu_-\right)\cdot \wh{\ell_j} \right| \ge (\beta \gamma_j - 8 \beta^{4 \log(1/\epsilon) -2j} \sqrt{2 \ln(8/\epsilon)})2^n  \\ -
O\left(  \beta^{4 \log(1/\epsilon) -2j} \cdot \log^{3/2}(1/\epsilon) \cdot 2^n \right).
\end{eqnarray*}
Plugging in the value of $\gamma_j$, we see  that for $\eps$ smaller than a sufficiently small constant, we have that
 $$\left|\left(\mu_+ - \mu_-\right)\cdot \wh{\ell_j} \right| \ge \beta \gamma_j 2^{n-1}. $$
 An application of Proposition~\ref{prop:euclidean} finally gives us that
 $$
 \dchow(f,g)  \ge 2^{-n} \Vert \mu_+ - \mu_- \Vert  \ge 2^{-n} (\mu_+ - \mu_- ) \cdot \wh{\ell_j} =\beta \gamma_j /2  \ge \delta
 $$
 which establishes the Claim.
\end{proof}
\medskip
If the hypothesis of Claim~\ref{clm:induction} fails, then we construct an
affine space $A_{j+1}$ of dimension $n-j-1$ such that $\mcwa(V^0
\cap A_{j+1}) \ge (\epsilon - 2 \littlesum_{\ell=0}^j \gamma_{\ell})2^n$
as described next.
We recall that $U = \cup_{i=1}^n \mathbf{e}_i \cup \mathbf{0}$.
It is obvious there is some subset $Y_j  \subseteq U$ such that
$|Y_j|=j$  and $\span(A_j \cup Y_j) = \mathbb{R}^n$.  Now, let us
define $\h'_j \eqdef \span(Y_j \cup A'_{j+1})$. Clearly, $\h'_j$ is a
hyperplane and every point $x \in (V^0 \cap A_j) \setminus S_j$ is at a distance at most
$\beta$ from $H'_{j}$. This is because every $x \in (V^0 \cap A_j) \setminus S_j$ is at a
distance at most $\beta$ from $A'_{j+1}$ and $A'_{j+1} \subset \h'_j$. Also,
note that all $x \in Y_j$ lie on $\h'_j$.

Note that $\mcwa((V^0 \cap A_j) \setminus S_j) \ge (\epsilon - 2 \littlesum_{\ell=0}^{j-1} \gamma_{\ell} - \gamma_j)2^n$. As prior calculation has shown, for $j \le  \log(8/\epsilon)$ we have
$
\mcwa((V^0 \cap A_j) \setminus S_j) \ge (\epsilon - 2 \littlesum_{\ell=0}^{j-1} \gamma_{\ell} - \gamma_j)2^n \ge (\epsilon/2) 2^n$.  Using Fact~\ref{fac:large}, we get that $|(V^0 \cap A_j) \setminus S_j| \ge (\epsilon/4) 2^n$.  Thus, putting $\kappa_{j} = \gamma_j/2$ and applying Lemma~\ref{lem:1}, we get a new hyperplane $\h_j$ such that $|  ((V^0 \cap A_j) \setminus S_j) \setminus (\h_j \cap V^0)| \le (\gamma_j/2) \cdot2^n$. Using that the range of $\mcwa$ is bounded by $2$, we get
$\mcwa( ((V^0 \cap A_j) \setminus S_j) \setminus (\h_j \cap V^0)) \le  \gamma_j \cdot2^n$.   Thus, we get that  $\mcwa(\h_j \cap V^0 \cap A_j) \ge (\epsilon - 2 \littlesum_{\ell=0}^{j} \gamma_{\ell})2^n$.  Also, $Y_j \subset \h_j$.

Let us now define $A_{j+1} = A_j \cap \h_j$. It is clear that  $\mcwa(A_{j+1} \cap V^0) \ge (\epsilon - 2 \littlesum_{\ell=0}^{j} \gamma_{\ell})2^n$. Also, $\dim(A_{j+1}) < \dim(A_j)$. To see this, assume for contradiction that $\dim(A_j) = \dim(A_{j+1})$. This means that $A_j \subseteq \h_j$. Also, $Y_j \subset \h_j$. This means that $\span(A_j \cup Y_j) \subset \h_j$. But $\span(A_j \cup Y_j) = \mathbb{R}^n$ which cannot be contained in $\h_j.$ Thus we have that $\dim(A_{j+1}) = \dim(A_j) -1$.

Now we observe that taking $j=\lfloor \log(8/\epsilon) \rfloor$, we have a subspace $A_j$ of dimension $n-j$ which has $\mcwa(A_j \cap V^0) \ge (\epsilon - 2 \littlesum_{\ell=0}^{j-1} \gamma_{\ell}) 2^n > (\epsilon/2)2^n$. By Fact~\ref{fac:large}, we have that  $|A_j \cap V^0| \ge (\epsilon/4) 2^n$.  However, by Fact~\ref{fac:affine}, a subspace of dimension $n-j$ can contain at most $2^{n-j}$ points of $\{-1,1\}^n$.  Since $j =\lfloor \log(8/\epsilon) \rfloor$, this leads to a contradiction.  That implies that the number of cases must be strictly less than $\lfloor \log(8/\epsilon) \rfloor$. In particular, for some $j < \lfloor \log(8/\epsilon) \rfloor$,
it must be the case that $|S_j| \ge \gamma_j 2^n $. For this $j$, by Claim~\ref{clm:induction}, we get a lower bound of $\delta$ on $\dchow(f,g)$.
This concludes the proof of Theorem~\ref{thm:dchow-vs-dh}.
\end{proof}

\section{The Algorithm and its Analysis} \label{sec:chow-algo}

\subsection{Algorithm and Proof Overview.} \label{ssec:algo}

In this section we give a proof overview of Theorem~\ref{thm:alg}, restated below for convenience.
We give the formal details of the proof in the following subsection.

\medskip

\noindent {\bf Theorem~\ref{thm:alg}} (Main Algorithmic Result). \emph{
There exists a randomized algorithm {\tt ChowReconstruct} that for every Boolean function $f: \{-1,1\}^n \rightarrow \{-1,1\}$,
given $\eps>0,\delta>0$ and a vector $\vec{\alpha} = (\alpha_0,\alpha_1,\ldots,\alpha_n)$ such that $\|\mychows_f - \vec{\alpha}\| \leq \eps$,
with probability at least $1-\delta$, outputs an LBF $g$ such that $\|\mychows_f - \mychows_g\| \leq 6 \eps$.
The algorithm runs in time $\tilde{O}(n^2 \eps^{-4} \log{(1/\delta}))$.
Further, $g$ is represented by a weight vector $\kappa v \in \R^{n+1}$, where $\kappa \in \R$ and $v$ is an integer vector of length $\|v\| = O(\sqrt{n}/\eps^3)$.
}
\medskip

We now provide an intuitive overview of the algorithm and its analysis.
Our algorithm is motivated by the following intuitive reasoning:  since the function $\alpha_0 + \sum_{i\in[n]} \alpha_i \cdot x_i$ has the desired Chow parameters, why not just use it to define an LBF $g_1$ as $P_1(\alpha_0 + \sum_{i\in[n]} \alpha_i \cdot x_i)$?  The answer, of course, is that as a result of applying the projection operator, the Chow parameters of $g_1$ can become quite different from the desired vector $\vec{\alpha}$.  Nevertheless, it seems quite plausible to expect that $g_1$ will be better than a random guess.

Given the Chow parameters of $g_1$ we can try to correct them by adding the difference between $\vec{\alpha}$ and $\vec{\chi}_{g_1}$ to the vector that represents $g_1$. Again, intuitively we are adding a real-valued function $h_1 = \alpha_0 - \wh{g}_1(0) + \sum_{i\in[n]} (\alpha_i - \wh{g}_1(i)) \cdot x_i$ with the Chow parameters that we would like to add to the Chow parameters of $g_1$. And, again, the projection operation is likely to ruin our intention but we could still hope that we got closer to $f$ and that by doing this operation for a while we will converge to an LBF with Chow parameters close to $\vec{\alpha}$.

While this idea might appear too naive, this is almost exactly what we do in {\tt ChowReconstruct}.
The main difference between this naive proposal and our actual algorithm is that at step $t$ we actually add only half the difference
between $\vec{\alpha}$ and the Chow vector of the current hypothesis $\vec{\chi}_{g_t}$. This is necessary in our proof to offset the fact that $\vec{\alpha}$ is only an approximation to $\vec{\chi}_{f}$ and we can only approximate the Chow parameters of $g_t$. An additional minor modification is required to ensure that the resulting weight vector is a multiple of an integer weight vector of length $O(\sqrt{n}/\eps^3)$.

Proving the correctness of this algorithm roughly proceeds as follows. If the difference vector is sufficiently large (namely, more than a small multiple of the difference between $\|\vec{\chi}_f - \vec{\alpha}\|)$ then the linear function $h_t$ defined by this vector can be easily seen as being correlated with $f-g_t$, namely $\E[(f-g_t)h_t] \geq c \|\vec{\chi}_{g_t} - \vec{\alpha}\|^2$ for a constant $c>0$. As was shown in \cite{TTV09} and \cite{Feldman:10ab} this condition for a Boolean $h_t$ can be used to decrease a simple potential function measuring $\E[(f-g_t)^2]$, the $l_2^2$ distance of the current hypothesis to $f$. One issue that arises is this: while the $l_2^2$ distance is only reduced if $h_t$ is added to $g_t$, in order to ensure that $g_{t+1}$ is an LBF, we need to add the vector of difference (used to define $h_t$) to the weight vector representing $g_t$. To overcome this problem the proof in \cite{TTV09} uses an additional point-wise counting argument from \cite{imp95a}. This counting argument can be adapted to the real valued $h_t$, but the resulting argument becomes quite cumbersome. Instead, we augment the potential function in a way that captures the additional counting argument from \cite{imp95a} and easily generalizes to the real-valued case. 

\subsection{Proof of Theorem~\ref{thm:alg}.} \label{ap:alg}
We build $g$ through the following iterative process.
Let $g'_0 \equiv 0$ and let $g_0 = P_1(g'_0)$. Given $g_t$, we compute the Chow parameters of $g_t$ to accuracy $\eps/(4\sqrt{n+1})$ and let $(\beta_0,\beta_1,\ldots,\beta_n)$ denote the results. For each $0\leq i\leq n$ we define $\tilde{g}_t(i)$ to be the closest value to $\beta_i$ that ensures that $\alpha_i - \beta_i$ is an integer multiple of $\eps/(2\sqrt{n+1})$. Let $\tilde{\chi}_{g_t} = (\tilde{g}_t(0),\ldots, \tilde{g}_t(n))$ denote the resulting vector of coefficients. Note that $$\|\tilde{\chi}_{g_t}-\vec{\chi}_{g_t}\| \leq \sqrt{\sum_{i=0}^n (\eps/(2\sqrt{n+1}))^2} = \eps/2 .$$

If $\rho \triangleq \|\vec{\alpha}-\tilde{\chi}_{g_t}\| \leq 4 \eps$ then we stop and output $g_t$.
By triangle inequality,
\begin{eqnarray*}
\| \vec{\chi}_f - \vec{\chi}_{g_t}\| &\leq& \| \vec{\chi}_f - \vec{\alpha}\|  + \| \vec{\alpha} - \tilde{\chi}_{g_t}\| +  \|\tilde{\chi}_{g_t}-\vec{\chi}_{g_t}\| \\
                                                       &\leq& \eps (1+ 4 + 1/2) < 6 \eps,
\end{eqnarray*}
 in other words $g_t$ satisfies the claimed condition.

Otherwise (when $\rho > 4\eps$), let $g'_{t+1} = g'_t + h_t/2$ and $g_{t+1} = P_1(g'_{t+1})$ for $$h_t \triangleq \sum_{i=0}^n (\alpha_i - \tilde{g_t}(i)) x_i.$$ Note that this is equivalent to adding the vector $(\vec{\alpha} - \tilde{\chi}_{g_t})/2$ to the degree 0 and 1 Fourier coefficients of $g'_{t}$ (which are also the components of the vector representing $g_t$).

To prove the convergence of this process we define a potential function at step $t$ as
\begin{eqnarray*}
E(t) &=& \E[(f-g_t)^2] + 2\E[(f-g_t)(g_t-g'_t)] \\
       &=&  \E[(f-g_t)(f-2g'_t+g_t)].
\end{eqnarray*}
The key claim of this proof is that
$$E(t+1) - E(t) \leq -2\eps^2.$$ To prove this claim we first prove that
\begin{equation}\label{eq:bound-gradient}
\E[(f-g_t) h_t ] \geq \rho(\rho - \frac{3}{2}\eps).
\end{equation}
To prove equation (\ref{eq:bound-gradient}) we observe that, by Cauchy-Schwartz inequality,
\begin{eqnarray*}
\E[(f-g_t) h_t ] &=& \sum_{i=0}^n (\wh{f}(i)-\wh{g_t}(i))(\alpha_i-\tilde{g_t}(i)) \\
&=& \sum_{i=0}^n \Big[ (\wh{f}(i)-\alpha_i)(\alpha_i-\tilde{g_t}(i)) + \\
&&(\tilde{g_t}(i)-\wh{g_t}(i))(\alpha_i-\tilde{g_t}(i)) + (\alpha_i-\tilde{g_t}(i))^2 \Big]  \\
&\geq& -\rho \eps - \rho \eps/2 + \rho^2 \geq \rho^2 - \frac{3}{2}\rho\eps.
\end{eqnarray*}
In addition, by Parseval's identity,
\equ{\E[h_t^2] = \sum_{i=0}^n (\alpha_i - \tilde{g_t}(i))^2 = \rho^2 \ . \label{eq:bound-tail}}
Now,
\begin{eqnarray}
E(t+1) - E(t) &=& \E[(f-g_{t+1})(f-2g'_{t+1}+g_{t+1})] -\E[(f-g_t)(f-2g'_t+g_t)] \notag   \\
&=&  \E\left[(f-g_t)(2 g'_t - 2 g'_{t+1}) + (g_{t+1}-g_t)(2g'_{t+1}-g_t-g_{t+1})        \right] \notag \\
&=& - \E[(f-g_t)h_t]+ \E\left[(g_{t+1}-g_t)(2g'_{t+1}-g_t-g_{t+1})  \right] \label{eq:bound-step}
\end{eqnarray}

To upper-bound the expression $\E\left[(g_{t+1}-g_t)(2g'_{t+1}-g_t-g_{t+1})\right]$ we prove that for every point $x \in \{-1,1\}^n$, \equn{(g_{t+1}(x)-g_t(x))(2g'_{t+1}(x)-g_t(x)-g_{t+1}(x)) \leq h_t(x)^2/2.} We first observe that $$|g_{t+1}(x)-g_t(x)| = |P_1(g'_{t}(x)+ h_t(x)/2) - P_1(g'_t(x))| \leq |h_t(x)/2|$$ (a projection operation does not increase the distance).
Now
\begin{eqnarray*}
|2 g'_{t+1}(x)-g_t(x)-g_{t+1}(x)| \leq
 |g'_{t+1}(x)-g_t(x)| + |(g'_{t+1}(x)-g_{t+1}(x)|.
\end{eqnarray*}

The first part $|g'_{t+1}(x)-g_t(x)| = |h_t(x)/2 + g'_t(x)-g_t(x)| \leq |h_t(x)/2|$ unless $g'_t(x)- g_t(x) \neq 0$ and $g'_t(x)- g_t(x)$ has the same sign as $h_t(x)$. By the definition of $P_1$, this implies that $|g_t(x)| = \sgn(g'_t(x))$ and $\sgn(h_t(x)) = \sgn(g'_t(x) - g_t(x)) = g_t(x)$. However, in this case $|g'_{t+1}(x)| \geq |g'_{t}(x)| > 1$ and $\sgn(g'_{t+1}(x)) = \sgn(g'_t(x)) = g_t(x)$. As a result $g_{t+1}(x)=g_t(x)$ and $(g_{t+1}(x)-g_t(x))(2g'_{t+1}(x)-g_t(x)-g_{t+1}(x)) =0$. Similarly, for the second part: $|g'_{t+1}(x)-g_{t+1}(x)| > |h_t(x)/2|$ implies that $g_{t+1}(x) = \sgn(g'_{t+1}(x))$ and $|g'_{t+1}(x)| \geq |h_t(x)/2| + 1$. This implies that $|g'_t(x)| \geq |g'_{t+1}(x)| - |h_t(x)/2| > 1$ and $g_t(x) = \sgn(g'_t(x)) = \sgn(g'_{t+1}(x)) = g_{t+1}(x)$. Altogether we obtain that
\begin{eqnarray*}
(g_{t+1}(x)-g_t(x))(2g'_{t+1}(x)-g_t(x)-g_{t+1}(x)) \leq
\max\{0, |h_t(x)/2| ( |h_t(x)/2| + |h_t(x)/2|)\} = h_t(x)^2/2.
\end{eqnarray*}
This implies that \equ{\E\left[(g_{t+1}-g_t)(2g'_{t+1}-g_t-g_{t+1})\right] \leq \E[h_t^2]/2 = \rho^2/2 \label{eq:bound-gs}.}

By substituting equations (\ref{eq:bound-gradient}) and (\ref{eq:bound-gs}) into equation (\ref{eq:bound-step}), we obtain the claimed decrease in the potential function $$E(t+1) - E(t) \leq -\rho^2 + \frac{3}{2}\rho\eps + \rho^2/2 = -(\rho - 3\eps)\rho/2 \leq -2\eps^2.$$

We now observe that $$E(t)  = \E[(f-g_t)^2] + 2\E[(f-g_t)(g_t-g'_t)] \geq 0$$ for all $t$. This follows from noting that for every $x$ and $f(x) \in \{-1,1\}$, if $g_t(x)-g'_t(x)$ is non-zero then, by the definition of $P_1$, $g_t(x) = \sgn(g'_t(x))$ and $\sgn(g_t(x)-g'_t(x)) = -g_t(x)$. In this case, $f(x)-g_t(x)=0$ or $\sgn(f(x)-g_t(x)) = -g_t(x)$ and hence $(f(x)-g_t(x))(g_t(x)-g'_t(x)) \geq 0$. Therefore $$\E[(f-g_t)(g_t-g'_t)] \geq 0$$ (and, naturally, $\E[(f-g_t)^2] \geq 0$).  It is easy to see that $E(0) = 1$ and therefore this process will stop after at most $1/(2\eps^2)$ steps.

We now establish the claimed weight bound on the LBF output by the algorithm and the bound on the running time. Let $T$ denote the number of iterations of the algorithm. By our construction, the function $g_T = P_1(\sum_{t \leq T} h_t/2)$ is an LBF represented by weight vector $\vec{w}$ such that $w_i = \sum_{j\leq T} (\alpha_i - \tilde{g_j}(i))/2$. Our rounding of the estimates of Chow parameters of $g_t$ ensures that each of $(\alpha_i - \tilde{g_j}(i))/2$ is a multiple of  $\kappa = \eps/(4\sqrt{n+1})$. Hence $g_T$ can be  represented by vector $\vec{w} = \kappa \vec{v}$, where vector $\vec{v}$ has only integer components. At every step $j$, $$\sqrt{\sum_{i=0}^n (\alpha_i - \tilde{g_j}(i))^2} \leq 2+\eps+\eps/2 = O(1).$$ Therefore, by triangle inequality, $\|\vec{w}\| = O(\eps^{-2})$ and hence $\|\vec{v}\| = \|\vec{w}\|/\kappa = O(\sqrt{n}/\eps^3)$.

The running time of the algorithm is essentially determined by finding $\tilde{\chi}_{g_t}$ in each step $t$. Finding $\tilde{\chi}_{g_t}$ requires estimating each $\wh{g_t}(i) = \E[g_t(x) \cdot x_i]$ to accuracy $\eps/(4\sqrt{n+1})$. Chernoff bounds imply that, by using the empirical mean of $g_t(x) \cdot x_i$ on $O( (n/\eps^2) \cdot \log{(n/(\eps\delta))}$ random points as our estimate of $\wh{g_t}(i)$ we can ensure that, with probability at least $1-\delta$, the estimates are within $\eps/(4\sqrt{n+1})$ of the true values for all $n+1$ Chow parameters of $g_t$ for every $t \leq T = O(\eps^{-2})$.

Evaluating $g_t$ on any point $x \in \bn$ takes $O(n)$ time and we need to evaluate it on $O( (n/\eps^2) \cdot \log{(n/(\eps\delta))}$ points in each of $O(\eps^{-2})$ steps. This gives us the claimed total running time bound. 


\ignore{
To convert an LBF to an LTF we rely on the fact that for an LBF $h$ defined by a vector $\vec{w}$ of weights $$L_1(h,f) = 2 \E_{\theta \in [-1,1]} \left[\pr[f \neq \sign(\vec{v}\vec{x} + \theta)]\right].$$ First approach: just choose a random threshold $\theta$ in $[-1,1]$ and output the hypothesis. By Markov inequality with probability at least $1/2$ we will obtain that
$\pr[f \neq \sign(\vec{w}\vec{x} + \theta)] \leq 4 \cdot L_1(h,f)$.

If $h$ can be represented using a vector of weights $\vec{w} = \kappa \vec{v}$  where each $v_i$ is an integer bounded by $W$ in absolute value then the only values of $\theta$ that need to be considered are those that are integer multiples of $\kappa$. We can multiply the weights
of the LPF $g$ obtained from the above procedure by $1/\kappa$ to obtain an equivalent integer weight representation of $g$. As a result each individual weights of $g$ will be upper-bounded by $W + \lceil 1/\kappa \rceil$ in absolute value and the $l_2$ norm of the weights by $|\vec{v}| + \lceil 1/\kappa \rceil$.

To boost the probability of success one could try many $\theta$'s and choose one with the closest Chow distance. That would require using the reverse relation between the Chow distance and ``compose" the relationship between Chow distance and Hamming distance. Should keep everything quasi-polynomial.
}

\section{The Main Results} \label{sec:payoff}

\subsection{Proofs of Theorems~\ref{thm:main} and~\ref{thm:lowwt}.} \label{ssec:main-proofs}

In this subsection we put the pieces together and prove our main results.
We start by giving a formal statement of Theorem~\ref{thm:main}:

\begin{theorem} [Main] \label{thm:main-formal}
There is a function $\chowallow(\eps) \eqdef 2^{-O(\log^3(1/\eps))}$ such that the following holds:
Let $f:\bn \to \bits$ be an LTF and let $0< \eps, \delta < 1/2$. Write $\mychows_f$ for the Chow vector of $f$
and assume that $\vec{\alpha} \in \R^{n+1}$ is a vector satisfying $\| \vec{\alpha} - \mychows_f \| \leq \chowallow(\eps)$. Then, there
is an algorithm $\alg$ with the following property: Given as input $\vec{\alpha}$, $\eps$ and $\delta$,
$\alg$ performs $ \tilde{O}(n^2) \cdot \poly(1/\chowallow(\eps)) \cdot \log(1/\delta)$ bit operations and outputs
the (weights-based) representation of an LTF $f^{\ast}$ which with probability at least $1-\delta$ satisfies
$\dist(f,f^{\ast}) \leq \eps$.
\end{theorem}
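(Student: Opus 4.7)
The plan is to combine the two main ingredients of the paper---Theorem~\ref{thm:alg} to produce an LBF $g$ whose Chow vector is very close to that of $f$, and Theorem~\ref{thm:dchow-vs-dh} to upgrade closeness in Chow distance to closeness in ordinary distance---and then round $g$ to an LTF by simply taking the sign of its defining linear form. Concretely, I will fix an absolute constant $C_0$ (to be pinned down using the constants hidden in Theorem~\ref{thm:dchow-vs-dh}) and set $\chowallow(\eps) \eqdef 2^{-C_0 \log^3(1/\eps)}$. The first step is to invoke {\tt ChowReconstruct} on input $(\vec{\alpha}, \chowallow(\eps), \delta)$: since $\|\vec{\alpha}-\mychows_f\| \le \chowallow(\eps)$ by hypothesis, Theorem~\ref{thm:alg} produces, in time $\tilde{O}(n^2)\cdot\poly(1/\chowallow(\eps))\cdot\log(1/\delta)$ and with probability at least $1-\delta$, an LBF $g(x) = P_1(\kappa(v_0 + \sum_{i=1}^n v_i x_i))$ satisfying $\|\mychows_f-\mychows_g\| \le 6\chowallow(\eps)$, where $\kappa > 0$ and $v \in \Z^{n+1}$ is integer with $\|v\| = O(\sqrt{n}/\chowallow(\eps)^3)$.

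The second step is to apply Theorem~\ref{thm:dchow-vs-dh} to $f$ and $g$. Writing $\eta \eqdef 6\chowallow(\eps) = 2^{-\Theta(\log^3(1/\eps))}$, the structural theorem yields $\dist(f,g) \le 2^{-\Omega(\sqrt[3]{\log(1/\eta)})} = 2^{-\Omega(\log(1/\eps))} = \eps^{\Omega(1)}$, and by taking $C_0$ large enough this becomes at most $\eps/2$. The final (and key) observation is that we may simply output the LTF $f^{\ast}(x) \eqdef \sgn(v_0 + \sum_{i=1}^n v_i x_i)$: since $\kappa > 0$ and the projection $P_1$ preserves signs, $\sgn(g(x)) = f^{\ast}(x)$ for every $x \in \bn$, so whenever $f^{\ast}(x) \ne f(x)$ the Boolean value $f(x)$ disagrees in sign with $g(x) \in [-1,1]$, forcing $|f(x)-g(x)| \ge 1$. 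Hence $\Pr[f \ne f^{\ast}] \le \E[|f-g|] = \dist(f,g) \le \eps/2$, which translates to $\dist(f,f^{\ast}) = 2\Pr[f\ne f^{\ast}] \le \eps$, as required.

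I expect the main (mild) obstacle to be parameter tuning: one must verify that the $2^{-\Omega(\sqrt[3]{\log(1/\eta)})}$ bound of Theorem~\ref{thm:dchow-vs-dh}, composed with the $6\chowallow(\eps)$ guarantee of {\tt ChowReconstruct}, actually drives $\dist(f,g)$ below $\eps/2$ while still allowing $\chowallow(\eps) = 2^{-O(\log^3(1/\eps))}$. This amounts to a routine but slightly delicate calculation that pins down the constant $C_0$; no additional randomization (such as random thresholding of $g$) or verification step is needed, which is what makes the argument clean. As a bonus, the bound $\|v\| = O(\sqrt{n}/\chowallow(\eps)^3) = \sqrt{n}\cdot(1/\eps)^{O(\log^2(1/\eps))}$ on the integer weight vector of $f^{\ast}$ immediately yields Theorem~\ref{thm:lowwt}, and the total running time is dominated by the {\tt ChowReconstruct} call, giving the claimed $\tilde{O}(n^2)\cdot\poly(1/\chowallow(\eps))\cdot\log(1/\delta)$ bit-operation bound (the extraction of the weights $(v_1,\dots,v_n,-v_0)$ from $g$ is free).
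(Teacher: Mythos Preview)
Your proposal is correct and follows essentially the same approach as the paper's proof: run {\tt ChowReconstruct} to obtain an LBF $g$ with small Chow distance to $f$, apply Theorem~\ref{thm:dchow-vs-dh} to convert this to small $L_1$-distance, and then replace $P_1$ by $\sign$ to get the output LTF $f^{\ast}$. Your justification that $\Pr[f\neq f^{\ast}]\le\dist(f,g)$ (because any disagreement point contributes at least $1$ to $|f-g|$) is exactly the paper's observation that each $x$ contributes to $\dist(f,f^{\ast})$ at most twice what it contributes to $\dist(f,g)$, stated a bit more explicitly.
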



\begin{proof}[Proof of Theorem~\ref{thm:main-formal}]
Suppose that we are given a vector $\vec{\alpha} \in \R^{n+1}$ that satisfies $\Delta : = \| \vec{\alpha} - \mychows_f \| \leq \chowallow(\eps)$, where
$f$ is the unknown LTF to be learned. To construct the desired $f^{\ast}$, we run algorithm {\tt ChowReconstruct}  (from Theorem~\ref{thm:alg}) on input
$\vec{\alpha}$. The algorithm runs in time $\poly(1/\Delta) \cdot \tilde{O}(n^2) \cdot \log(1/\delta)$ and
outputs an LBF $g$ such that with probability at least $1-\delta$ we have
$\dchow(f,g) \leq 6 \Delta \leq 6 \chowallow(\eps)$.
(We can set the constants appropriately in the definition of the function $\chowallow(\eps)$ above,
so that the quantity on the RHS of the latter relation is smaller than the ``quasi-polynomial'' quantity
we need in the main structural theorem, so that the conclusion is ``$\dist(f,g) \leq \eps/2$''.) By Theorem~\ref{thm:dchow-vs-dh} we get that with probability at least $1-\delta$  we have $\dist(f,g) \leq \eps/2$.  Writing the LBF $g$ as $g(x) = P_1(v_0+\littlesum_{i=1}^n v_i x_i)$, we now claim that
$f^\ast(x) = \sign(v_0+\littlesum_{i=1}^n v_i x_i)$ has $\dist(f,f^\ast) \leq \eps.$  This is simply
because for each input $x \in \bn$, the contribution that $x$ makes to to $\dist(f,f^\ast)$ is
at most twice the contribution $x$ makes to $\dist(f,g).$ This completes the proof of Theorem~\ref{thm:main-formal}.
\end{proof}

\noindent As a simple corollary, we obtain Theorem~\ref{thm:lowwt}.

\begin{proof}[Proof of Theorem~\ref{thm:lowwt}]
Let $f:\bn \to \bits$ be an arbitrary LTF. We apply Theorem~\ref{thm:main-formal} above, for $\delta = 1/3$, and consider the LTF $f^{\ast}$ produced by the above proof.
Note that the weights $v_i$ defining $f^{\ast}$ are identical to the weights of the LBF $g$ output by  the algorithm {\tt ChowReconstruct}.
It follows from Theorem~\ref{thm:alg} that these weights are integers that satisfy $\littlesum_{i=1}^n v_i^2  = O(n \cdot \Delta^{-6})$, where
$\Delta = \Omega ( \chowallow(\eps) )$, and the proof is complete.
\end{proof}

\noindent As pointed out in Section~\ref{ssec:results} our algorithm runs in $\poly(n/\eps)$ time for LTFs whose integer weight is at most $\poly(n)$.
Formally, we have:

\begin{theorem} \label{thm:small-weights}
Let $f = \sign(\littlesum_{i=1}^n w_i x_i  - \theta)$ be an LTF with integer weights $w_i$ such that $W \eqdef \littlesum_{i=1}^n |w_i| = \poly(n)$.
Fix $0< \eps, \delta < 1/2$. Write $\mychows_f$ for the Chow vector of $f$
and assume that $\vec{\alpha} \in \R^{n+1}$ is a vector satisfying $\| \vec{\alpha} - \mychows_f \| \leq \eps/(12W)$. Then, there
is an algorithm $\alg'$ with the following property: Given as input $\vec{\alpha}$, $\eps$ and $\delta$,
$\alg'$ performs $ \poly(n/\eps) \cdot \log(1/\delta)$ bit operations and outputs
the (weights-based) representation of an LTF $f^{\ast}$ which with probability at least $1-\delta$ satisfies
$\dist(f,f^{\ast}) \leq \eps$.
\end{theorem}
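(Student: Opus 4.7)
The plan is to combine the main algorithmic ingredient {\tt ChowReconstruct} (Theorem~\ref{thm:alg}) with the classical observation of Birkendorf et al.~\cite{BDJ+:98} that for LTFs with integer weights the Chow distance controls the $L_1$-distance up to a factor of $O(W)$.  Given a vector $\vec{\alpha}$ with $\|\vec{\alpha} - \mychows_f\|_2 \le \eps/(12W)$, I would first invoke {\tt ChowReconstruct} on $\vec{\alpha}$ with accuracy parameter $\Theta(\eps/W)$ to produce, with probability at least $1-\delta$, an LBF $g(x) = P_1(v_0 + \sum_{i=1}^n v_i x_i)$ satisfying $\dchow(f,g) = O(\eps/W)$.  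Since $W = \poly(n)$, the running time $\tilde{O}(n^2 (W/\eps)^4 \log(1/\delta))$ guaranteed by Theorem~\ref{thm:alg} collapses to $\poly(n/\eps)\cdot\log(1/\delta)$ as required.

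The crux of the proof is a structural step: for any LTF $f$ with integer weights whose absolute values sum to $W$, and any bounded $g:\bn\to[-1,1]$, one has $\dist(f,g) = O(W)\cdot \dchow(f,g)$.  I would prove this by mimicking the proof of Chow's Theorem from Section~\ref{ssec:chow}.  WLOG assume the threshold $\theta$ is a half-integer (shift by $\pm 1/2$ if needed; since $w\cdot x \in \Z$ this does not change $f$), so $|w\cdot x - \theta|\ge 1/2$ for every $x \in \bn$.  Using $(f(x)-g(x))\,\sign(w\cdot x - \theta) = |f(x)-g(x)|$ (which holds because $f\in\{-1,1\}$ and $g\in[-1,1]$), we obtain
\[
\tfrac{1}{2}\,\dist(f,g) \;\le\; \E\bigl[(f-g)(w\cdot x - \theta)\bigr]
\;=\; \sum_{i=1}^{n} w_i\bigl(\wh{f}(i)-\wh{g}(i)\bigr) - \theta\bigl(\wh{f}(0)-\wh{g}(0)\bigr),
\]
and Cauchy--Schwarz bounds the right-hand side by $\sqrt{\theta^2 + \sum_i w_i^2}\cdot \dchow(f,g) = O(W)\cdot\dchow(f,g)$, using $|\theta|,\|w\|_2 \le W$.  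Plugging in the Chow-distance bound from Step~1 yields $\dist(f,g)\le \eps/2$ after tuning constants.

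Finally I would convert the LBF to an LTF by setting $f^*(x) := \sign(v_0 + \sum_i v_i x_i)$.  A short argument — noting that any $x$ with $f(x)\neq f^*(x)$ must satisfy $f(x)\cdot g(x)\le 0$, hence $|f(x)-g(x)|\ge 1$ — gives $\dist(f,g)\ge \Pr[f\ne f^*] = \dist(f,f^*)/2$, so $\dist(f,f^*)\le 2\,\dist(f,g)\le \eps$.  There is no real obstacle here; the only care required is in bookkeeping the constants so that the $1/(12W)$ input tolerance, the factor-$6$ Chow-distance loss in Theorem~\ref{thm:alg}, the $O(W)$ factor in the Chow-to-$L_1$ conversion, and the factor-$2$ in the LBF-to-LTF conversion all chain together cleanly to the final $\eps$ guarantee.
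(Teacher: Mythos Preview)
Your proposal is correct and follows essentially the same approach as the paper: run {\tt ChowReconstruct} with accuracy $\Theta(\eps/W)$, invoke the Birkendorf et al.\ Chow-to-$L_1$ bound (which the paper quotes as Fact~\ref{fact:small-weights} and you prove inline via the Chow-theorem argument), and then round the LBF to an LTF with a factor-$2$ loss. The only cosmetic difference is that you supply a proof of the $\dist(f,g) \le O(W)\cdot\dchow(f,g)$ step whereas the paper simply cites it.
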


\begin{proof}
As stated before, both the algorithm and proof of the above theorem are identical to the ones in Theorem~\ref{thm:main-formal}. The details follow.

Given a vector $\vec{\alpha} \in \R^{n+1}$ satisfying $\Delta : = \| \vec{\alpha} - \mychows_f \| \leq \eps/(12W)$, where
$f$ is the unknown LTF, we run algorithm {\tt ChowReconstruct}  on input
$\vec{\alpha}$. The algorithm runs in time $\poly(1/\Delta) \cdot \tilde{O}(n^2) \cdot \log(1/\delta)$,
which is $\poly(n/\eps) \cdot \log(1/\delta)$ by our assumption on $W$, and
outputs an LBF $g$ such that with probability at least $1-\delta$,
$\dchow(f,g) \leq 6 \Delta \leq \eps/(2W)$.
At this point, we need to apply the following simple structural result of~\cite{BDJ+:98}:

\begin{fact} \label{fact:small-weights}
Let $f = \sign(\littlesum_{i=1}^n w_i x_i  - \theta)$ be an LTF with integer weights $w_i$, where $W \eqdef \littlesum_{i=1}^n |w_i|$, and
$g:\bn \to [-1,1]$ be an arbitrary bounded function.
Fix $0< \eps < 1/2$. If $\dchow(f,g) \leq \eps/W$, then $\dist(f,g) \leq \eps.$
\end{fact}

The above fact implies that, with probability at least $1-\delta$,  the LBF $g$ output by the algorithm satisfies
$\dist(f,g) \leq \eps/2$.  If $g(x) = P_1(v_0+\littlesum_{i=1}^n v_i x_i)$, we similarly have that the LTF
$f^\ast(x) = \sign(v_0+\littlesum_{i=1}^n v_i x_i)$ has $\dist(f,f^\ast) \leq \eps.$ This completes the proof.
\end{proof}


%

\subsection{Near-optimality of Theorem~\ref{thm:dchow-vs-dh}.}
\label{sec:nearoptchow}

Theorem~\ref{thm:dchow-vs-dh} says that if $f$ is an LTF and $g: \{-1,1\}^n \to [-1,1]$ satisfy
$\dchow(f,g) \le \epsilon$ then $\dist(f,g) \leq 2^{-\Omega ( \sqrt[3]{\log(1/\epsilon)} )}$.
It is natural to wonder whether the conclusion can be strengthened to ``$\dist(f,g)
\leq \eps^c$'' where $c>0$ is some absolute constant.  Here we observe that no conclusion of the
form ``$\dist(f,g) \leq 2^{-\gamma(\eps)}$'' is possible for any function
$\gamma(\eps) = \omega(\log(1/\eps)/\log\log(1/\eps))$.

To see this, fix $\gamma$ to be any function such that $$\gamma(\eps) = \omega(\log(1/\eps)/\log\log(1/\eps)).$$
If there were a stronger version of Theorem~\ref{thm:dchow-vs-dh} in which the conclusion is
``then $\dist(f,g) \leq 2^{-\gamma(\eps)}$,'' the arguments of Section~\ref{ssec:main-proofs} would give that
for any LTF $f$, there is an LTF $f' = \sign(v \cdot x - \nu)$ such that $\Pr[f(x) \neq f'(x)] \leq \eps$,
where each $v_i \in \Z$ satisfies $|v_i| \leq \poly(n) \cdot (1/\eps)^{o(\log \log(1/\eps))}.$  Taking $\eps = 1/2^{n+1}$, this
tells us that $f'$ must agree with $f$ on every point in $\{-1,1\}^n$, and each integer weight in
the representation $\sign(v \cdot x - \nu)$ is at most $2^{o(n \log n)}$.  But choosing $f$
to be H\aa stad's function from \cite{Hastad:94}, this is a contradiction, since any integer representation
of that function must have every $|v_i| \geq 2^{\Omega(n \log n)}$.

\section{Applications to learning theory}
\label{sec:learn}

In this section we show that our approach yields a range of interesting algorithmic applications in learning theory.

\subsection{Learning threshold functions in the 1-RFA model.} \label{ssec:rfa}
Ben-David and Dichterman~\cite{BenDavidDichterman:98}  introduced the ``Restricted Focus of Attention'' (RFA) learning framework
to model the phenomenon (common in the real world) of a learner having incomplete access to examples.
We focus here on the uniform-distribution ``$1$-RFA'' model.   In this setting
each time the learner is to receive a labeled
example, it first specifies an index $i \in [n]$; then an $n$-bit
string $x$ is drawn from the uniform distribution over $\bn$ and the learner is
given $(x_i, f(x))$.  So for each labeled example, the learner is only shown the $i$-th bit of the
example along with the label.

Birkendorf et al.~\cite{BDJ+:98} asked whether LTFs can be learned in the uniform
distribution $1$-RFA model, and showed that a sample of $O(n \cdot W^2 \cdot \log({\frac n
\delta})/\eps^2)$ many examples is information-theoretically
sufficient for learning an unknown threshold function with integer
weights $w_i$ that satisfy $\littlesum_i |w_i| \leq W.$
The results of Goldberg~\cite{Goldberg:06b} and Servedio~\cite{Servedio:07cc} show that samples of size
$(n/\eps)^{O(\log (n/\eps) \log(1/\eps))}$ and
$\poly(n)\cdot 2^{\tilde{O}(1/\eps^2)}$ respectively are information-theoretically sufficient for
learning an arbitrary LTF to accuracy $\eps$, but none of these earlier results gave a computationally efficient algorithm.
\cite{OS11:chow} gave the first algorithm for this problem; as a consequence of their result
for the Chow Parameters Problem, they gave an algorithm which learns LTFs to accuracy
$\eps$ and confidence $1 - \delta$ in the uniform distribution $1$-RFA
model, running in $2^{2^{\tilde{O}(1/\eps^2)}} \cdot n^2 \cdot
\log n \cdot \log({\frac n \delta})$ bit operations. As a direct consequence of Theorem~\ref{thm:main},
we obtain a much more time efficient learning algorithm for this learning task.

\begin{theorem} \label{thm:learnrfa} There is
an algorithm which performs
$\tilde{O}(n^2) \cdot (1/\eps)^{O(\log^2(1/\eps))} \cdot \log({\frac 1 \delta})$ bit-operations and properly learns LTFs to accuracy
$\eps$ and confidence $1 - \delta$ in the uniform distribution $1$-RFA model.
\end{theorem}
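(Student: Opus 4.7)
The plan is to reduce the learning problem directly to the Chow Parameters Problem and then invoke Theorem~\ref{thm:main-formal}. Observe that each 1-RFA query specifying coordinate $i$ returns $(x_i, f(x))$ for a uniformly random $x \in \{-1,1\}^n$. This is exactly what is needed to form an unbiased estimator of $\wh{f}(i) = \E[f(x) x_i]$ (and of $\wh{f}(0)=\E[f(x)]$, using any coordinate's label). So from the oracle we can simulate i.i.d.\ samples of each of the $n+1$ Chow coordinates.

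Let $\chowallow(\eps) = 2^{-O(\log^3(1/\eps))}$ be the accuracy function from Theorem~\ref{thm:main-formal}. To produce a vector $\vec{\alpha}$ with $\|\vec{\alpha} - \mychows_f\|_2 \leq \chowallow(\eps)$, it suffices to estimate each coordinate $\wh{f}(i)$ to within additive error $\tau \eqdef \chowallow(\eps)/\sqrt{n+1}$. Since the random variable $f(x)x_i$ takes values in $\{-1,1\}$, a standard Hoeffding bound (Theorem~\ref{thm:chb}) guarantees that the empirical mean of $m = O\bigl(\log((n+1)/\delta)/\tau^2\bigr)$ samples lies within $\tau$ of $\wh{f}(i)$ with probability at least $1 - \delta/(n+1)$. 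Issuing $m$ 1-RFA queries for each of the $n+1$ coordinates, a union bound yields that with probability at least $1-\delta$ every coordinate estimate is within $\tau$, and hence $\|\vec{\alpha} - \mychows_f\|_2 \leq \chowallow(\eps)$.

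We then feed $\vec{\alpha}$, together with $\eps$ and $\delta/2$ (adjusting constants), into the algorithm of Theorem~\ref{thm:main-formal}, which in $\tilde{O}(n^2) \cdot \poly(1/\chowallow(\eps)) \cdot \log(1/\delta)$ bit operations outputs an LTF $f^{\ast}$ satisfying $\dist(f, f^{\ast}) \leq \eps$ with probability at least $1-\delta/2$. The sample collection stage requires $(n+1) \cdot m = \tilde{O}(n^2) \cdot (1/\chowallow(\eps))^2 \cdot \log(1/\delta)$ bit operations, which is dominated by (and of the same order as) the Chow-reconstruction step. Noting that $(1/\chowallow(\eps))^{O(1)} = 2^{O(\log^3(1/\eps))} = (1/\eps)^{O(\log^2(1/\eps))}$, the total running time is $\tilde{O}(n^2)\cdot (1/\eps)^{O(\log^2(1/\eps))} \cdot \log(1/\delta)$, matching the claimed bound.

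There is no serious obstacle here: the only thing to be careful about is the $\sqrt{n+1}$ loss incurred when passing from coordinate-wise accuracy to Euclidean accuracy of $\vec{\alpha}$, but this merely contributes a factor polynomial in $n$ to the sample size and is absorbed in the $\tilde{O}(n^2)$ term. The properness of the learner is inherited from Theorem~\ref{thm:main-formal}, whose output is already an LTF in weights-based representation.
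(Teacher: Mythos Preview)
Your proposal is correct and is exactly the argument the paper has in mind: the paper states this theorem ``as a direct consequence of Theorem~\ref{thm:main}'' without further detail, and your reduction---estimate each Chow parameter from 1-RFA samples via Hoeffding to overall $\ell_2$ accuracy $\chowallow(\eps)$, then invoke Theorem~\ref{thm:main-formal}---is precisely how that consequence is realized. The bookkeeping (splitting $\delta$, the $\sqrt{n+1}$ factor in passing to coordinatewise accuracy, and the identification $(1/\chowallow(\eps))^{O(1)} = (1/\eps)^{O(\log^2(1/\eps))}$) is all handled correctly.
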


\subsection{Agnostic-type learning.} \label{ssec:agnostic}

In this section we show that a variant of our main algorithm gives a very fast ``agnostic-type'' algorithm for learning LTFs under the uniform distribution.

Let us briefly review the uniform distribution agnostic learning model~\cite{KSS:94} in our context. Let $f:\bn \to \bits$ be
an arbitrary boolean function. We write $\opt = \dist (f, \mathcal{H})  \eqdef  \min_{h \in \mathcal{H}} \Pr_x [h(x) \neq f(x)]$, where $\mathcal{H}$ denotes
the class of LTFs. A uniform distribution agnostic learning algorithm is given uniform random examples labeled according to an arbitrary $f$ and outputs a hypothesis
$h$ satisfying $\dist(h, f) \leq \opt + \eps.$

The only efficient algorithm for learning LTFs in this model~\cite{KKM+:05} is non-proper and runs in time $n^{\poly(1/\eps)}$.
This motivates the design of more efficient algorithms with potentially relaxed guarantees. \cite{OS11:chow} give an ``agnostic-type'' algorithm,
 that guarantees $\dist(h, f) \leq \opt^{\Omega(1)} + \eps$ and runs in time $\poly(n) \cdot 2^{\poly(1/\eps)}$. In contrast, we give an algorithm
that is significantly more efficient, but has a relaxed error guarantee.

\begin{theorem} \label{thm:agnostic}
There is an algorithm $\mathcal{B}$ with the following performance guarantee: Let $f$ be any Boolean function and let $\opt = \dist (f, \mathcal{H}).$
Given $0< \eps, \delta < 1/2$ and access to independent uniform examples $(x, f(x))$, algorithm $\mathcal{B}$ outputs the (weights-based) representation
of an LTF $f^{\ast}$ which with probability $1-\delta$ satisfies $\dist(f^{\ast},f) \leq 2^{-\Omega ( \sqrt[3]{\log(1/\opt)}) }+\eps$.
The algorithm performs $\tilde{O}(n^2) \cdot (1/\eps)^{O(\log^2(1/\eps))}  \cdot  \log(1/\delta)$ bit operations.
\end{theorem}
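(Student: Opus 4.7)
The plan is to combine empirical Chow estimation with the procedure \texttt{ChowReconstruct} of Theorem~\ref{thm:alg}, and then round the resulting LBF to an LTF exactly as in the proof of Theorem~\ref{thm:main-formal}. The only reason the agnostic version needs a separate argument is that the target $f$ is no longer assumed to be an LTF, so Theorem~\ref{thm:dchow-vs-dh} must be applied not to $f$ itself but to the (unknown) best LTF approximator $g^{\ast}\in\mathcal{H}$; the analysis picks up some extra slack proportional to $\sqrt{\opt}$ coming from Fact~\ref{fact:distances}.

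Concretely, fix $\eta \eqdef \eps^{C\log^2(1/\eps)}$ for a sufficiently large absolute constant $C>0$. I would first draw $m = O((n/\eta^2)\log(n/\delta))$ uniform labeled examples $(x, f(x))$ and set $\alpha_i$ to the empirical mean of $f(x)\,x_i$ for $0\le i\le n$ (with $x_0\equiv 1$); a Hoeffding bound followed by a union bound gives $\|\vec{\alpha}-\mychows_f\|\le\eta$ with probability at least $1-\delta/2$. Next, I would run \texttt{ChowReconstruct} on input $(\vec{\alpha},\eta,\delta/2)$ to obtain (with probability $\ge 1-\delta/2$) an LBF $h(x)=P_1(v_0+\sum_i v_i x_i)$ satisfying $\|\mychows_h-\mychows_f\|\le 6\eta$. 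The output is then the LTF $f^{\ast}(x) = \sign(v_0+\sum_i v_i x_i)$.

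For the correctness analysis, let $g^{\ast}\in\mathcal{H}$ attain $\dist(g^{\ast},f)=2\opt$. Fact~\ref{fact:distances} gives $\dchow(f,g^{\ast})\le 2\sqrt{\dist(f,g^{\ast})}=O(\sqrt{\opt})$, so the triangle inequality in Chow space yields $\dchow(g^{\ast},h)\le O(\sqrt{\opt})+6\eta$. Since $g^{\ast}$ is an LTF and $h:\bn\to[-1,1]$, Theorem~\ref{thm:dchow-vs-dh} produces $\dist(g^{\ast},h)\le 2^{-\Omega(\sqrt[3]{\log(1/(O(\sqrt{\opt})+6\eta))})}$. The pointwise inequality $|b(x)-f^{\ast}(x)|\le 2|b(x)-h(x)|$, valid for every Boolean $b$ because $f^{\ast}=\sign(L)$ while $h=P_1(L)$ for the same linear form $L$, then gives $\dist(g^{\ast},f^{\ast})\le 2\dist(g^{\ast},h)$. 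One more triangle inequality produces
\[
\dist(f,f^{\ast}) \le 2\opt + 2\cdot 2^{-\Omega(\sqrt[3]{\log(1/(O(\sqrt{\opt})+6\eta))})}.
\]

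It remains to massage this bound into the form claimed in the theorem and to verify the running time. When $\sqrt{\opt}\ge\eta$ the inner argument of the exponent is $O(\sqrt{\opt})$, so $\sqrt[3]{\log(1/(\cdot))}=\Omega(\sqrt[3]{\log(1/\opt)})$; an elementary calculation (taking the constant hidden in $\Omega$ small enough so that $C_1\cdot 2^{-c_1\sqrt[3]{\log(1/\opt)}}\ge 2\opt$ for all $\opt\le 1/2$) shows the $2\opt$ summand is also absorbed into $2^{-\Omega(\sqrt[3]{\log(1/\opt)})}$. When $\sqrt{\opt}<\eta$ we have $2\opt<2\eta^2\le\eps/2$, and the definition of $\eta$ forces $2\cdot 2^{-\Omega(\sqrt[3]{\log(1/\eta)})}=2\cdot 2^{-\Omega(\log(1/\eps))}\le\eps/2$ once $C$ is large. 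In both cases $\dist(f,f^{\ast})\le 2^{-\Omega(\sqrt[3]{\log(1/\opt)})}+\eps$, as required. The estimation stage costs $\tilde{O}(n^2/\eta^2)\log(1/\delta)$ bit operations and \texttt{ChowReconstruct} takes $\tilde{O}(n^2/\eta^4)\log(1/\delta)$; with $\eta=\eps^{C\log^2(1/\eps)}$ both fit within the claimed $\tilde{O}(n^2)(1/\eps)^{O(\log^2(1/\eps))}\log(1/\delta)$ budget. The only obstacle I foresee is the constant bookkeeping around the case split, but it is a purely quantitative matter; conceptually the whole argument is just ``plug the empirical Chow vector into \texttt{ChowReconstruct}, apply Theorem~\ref{thm:dchow-vs-dh} to the nearby LTF $g^{\ast}$, and round.''
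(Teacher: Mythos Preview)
Your proposal is correct and follows essentially the same route as the paper: estimate the Chow parameters of $f$ empirically, feed them to \texttt{ChowReconstruct}, apply Theorem~\ref{thm:dchow-vs-dh} with the nearby LTF $g^{\ast}$ playing the role of the threshold function, and round the LBF to an LTF via the pointwise inequality. The paper's proof is terser about the case split (absorbing $2\opt$ and handling whether $\sqrt{\opt}$ or $\eta$ dominates), but your more explicit bookkeeping is exactly what is implicitly needed there.
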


\begin{proof}
We describe the algorithm $\mathcal{B}$ in tandem with a proof of correctness.
We start by estimating each Chow parameter of $f$ (using the random labeled examples) to accuracy $O(\kappa(\eps)/\sqrt{n})$;
we thus compute  a vector $\vec{\alpha} \in \R^{n+1}$ that satisfies $\Delta : = \| \vec{\alpha} - \mychows_f \| \leq \chowallow(\eps)$.
We then run algorithm {\tt ChowReconstruct}  (from Theorem~\ref{thm:alg}) on input
$\vec{\alpha}$. The algorithm runs in time $\poly(1/\Delta) \cdot \tilde{O}(n^2) \cdot \log(1/\delta)$ and
outputs an LBF $g$ such that with probability at least $1-\delta$ we have
$\dchow(f,g) \leq 6 \Delta \leq 6 \chowallow(\eps)$.
By assumption, there exists an LTF $h^{\ast}$ such that $\dist (h^{\ast}, f) \leq \opt$.
By Fact~\ref{fact:distances} we get $\dchow(h^{\ast}, f) \leq 2\sqrt{\opt}$. An application of the triangle inequality now gives
$\dchow(g, h^{\ast}) \leq 2\sqrt{\opt} + 4 \chowallow(\eps)$.
By Theorem~\ref{thm:dchow-vs-dh}, we thus obtain $\dist(g, h^{\ast}) \leq 2^{-\Omega (\sqrt[3]{\log(1/\opt)})} + \eps/2$.
Writing the LBF $g$ as $g(x) = P_1(v_0+\littlesum_{i=1}^n v_i x_i)$, we similarly have that
$f^\ast(x) = \sign(v_0+\littlesum_{i=1}^n v_i x_i)$ has $\dist(f,f^\ast) \leq 2^{-\Omega (\sqrt[3]{\log(1/\opt)})}+ \eps.$
It is easy to see that the running time is dominated by the second step
and the proof of Theorem~\ref{thm:agnostic} is complete.
\end{proof}

\section{Conclusions and Open Problems} \label{sec:concl}
The problem of reconstructing a linear threshold function (exactly or approximately) from  (exact or approximate values of)
its degree-$0$ and degree-$1$ Fourier coefficients  arises in various contexts and has been considered by researchers in electrical engineering,
game theory, social choice and learning. In this paper, we gave an algorithm that reconstructs an $\eps$-approximate LTF (in Hamming distance)
and runs in time $\tilde{O}(n^2) \cdot (1/\eps)^{O(\log^2(1/\eps))}$, improving the only previous provably efficient algorithm~\cite{OS11:chow}
by nearly two exponentials (as a function of $\eps$). Our algorithm yields the existence of nearly-optimal integer weight approximations
for LTFs and gives significantly faster algorithms for several problems in learning theory.

We now list some interesting open problems:
\begin{itemize}

\item What is the complexity of the exact Chow parameters problem? The problem is easily seen to lie in $NP^{PP}$, and
we are not aware of a better upper bound. We believe that the problem is intractable; in fact, we conjecture it is $PP$-hard.

\item Is there an FPTAS for the problem, i.e. an algorithm running in $\poly(n/\eps)$ time? (Note that this would be best possible, assuming that the
exact problem is intractable; in this sense our attained upper bound is close to optimal.) We believe so; in fact, we showed this is the case for $\poly(n)$ integer weight LTFs.
(Note however that the arguments of Section~\ref{sec:nearoptchow} imply that our algorithm does {\em not} run in $\poly(n/\eps)$ time for general LTFs, and indeed imply that no algorithm that outputs a $\poly(n/\eps)$-weight LTF can succeed for this problem.)

\item What is the optimal bound in Theorem~\ref{thm:dchow-vs-dh}? Any improvement would yield an improved running time for our algorithm.

\item Our algorithmic approach is quite general. As was shown in \cite{Feldman:12colt}, this approach can also be used to learn small-weight low-degree PTFs. In addition, essentially the same algorithm was more recently used~\cite{DDS:12icalp} to solve a problem in social choice theory. Are there any other applications of our boosting-based approach?

\item Does our structural result generalize to degree-$d$ PTFs? A natural generalization of ChowÕs theorem holds in this setting; more precisely, Bruck~\cite{Bruck:90} has shown that the Fourier coefÞcients of degree at most $d$ uniquely specify any degree-$d$ PTF within the space of all Boolean or even bounded functions.
Is there a ``robust version'' of Bruck's theorem? We consider this to be a challenging open problem.
(Note that our algorithmic machinery generalizes straightforwardly to this setting, hence a robust such result would immediately yield an efficient algorithm in this generalized setting.)

\end{itemize}

\bibliography{allrefs}

\bibliographystyle{alpha}

\appendix

\section{Near-Optimality of Lemma~\ref{lem:1}} \label{ap:optimalbeta}

The following lemma shows that in any statement like Lemma~\ref{lem:1} in which the
hyperplane $\h'$ passes through \emph{all} the points in $S$, the distance bound on $\beta$
can be no larger than $n^{-1/2}$ as a function of $n$.  This implies that the result obtained
by taking $\kappa = 1/2^{n+1}$ in Lemma~\ref{lem:1}, which gives a distance bound of $n^{-(1/2 + o(1))}$ as
a function of $n$, is optimal up to the $o(1)$ in the exponent.

\begin{lemma}
\label{lem:nearopt}
Fix $\eps > 8n^{-1/2}.$
There is a hyperplane $\h \in \mathbb{R}^n$ and a set $S \subseteq \{-1,1\}^n$ such that $|S| \ge {\frac \eps 8} 2^n$ and the following properties both hold:

\begin{itemize}

\item For every $x \in S$ we have  $d(x,\h) \le 2 \eps n^{-1/2}$; and
\item There is no hyperplane $\h'$ which passes through all the points in $S$.
\end{itemize}
\end{lemma}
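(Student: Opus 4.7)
The plan is to construct a slightly tilted hyperplane $\h$ whose $\beta$-neighborhood (for $\beta = 2\eps/\sqrt{n}$) contains (a) many points on the face $\{x_1 = +1\}$ of the hypercube and (b) at least one point on the opposite face $\{x_1 = -1\}$, so that the captured set $S$ is forced to affinely span all of $\R^n$. Concretely, fix a suitable absolute constant $c$ (one can take $c = 3$), set $\delta = c/\sqrt{n}$, and take the unit vector
\[
w = \bigl(\sqrt{1-\delta^2},\; \delta/\sqrt{n-1},\; \ldots,\; \delta/\sqrt{n-1}\bigr) \in \R^n,
\]
threshold $\theta = \sqrt{1-\delta^2}$, hyperplane $\h = \{x : w\cdot x = \theta\}$, and $S = \{x \in \bn : |w\cdot x - \theta| \le \beta\}$ (by Fact~\ref{fac:hyper-dist} this is exactly the set of hypercube points within Euclidean distance $\beta$ of $\h$).

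For the cardinality bound I would restrict attention to $S^+ := S \cap \{x_1 = 1\}$: for $x_1 = 1$ one has $w\cdot x - \theta = (\delta/\sqrt{n-1})\,s$ with $s := \sum_{i\ge 2} x_i$, so $S^+$ consists of all $x$ with $x_1 = 1$ and $|s| \le T := (2\eps/c)\sqrt{n-1}$. Applying Berry-Ess\'{e}en (Theorem~\ref{thm:be}) to the Rademacher sum $s/\sqrt{n-1}$, together with the elementary lower bound $2\Phi(t) - 1 \ge t\sqrt{2/\pi}\,e^{-t^2/2}$ and (for the boundary regime $\eps\downarrow 8/\sqrt{n}$) the central-binomial asymptotic $\binom{n-1}{\lfloor(n-1)/2\rfloor} = \Theta(2^{n-1}/\sqrt{n-1})$, one checks that for $c = 3$,
\[
|S^+|/2^{n-1} \;\ge\; \eps/4
\]
for all $\eps$ in the intended range, so $|S| \ge |S^+| \ge (\eps/8)\,2^n$.

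To argue that no hyperplane contains $S$, first verify that $S^- := S \cap \{x_1 = -1\}$ is nonempty: for $x_1 = -1$ the slab condition becomes $|(\delta/\sqrt{n-1})\,s - 2\sqrt{1-\delta^2}| \le \beta$, which forces $s$ to lie in an interval of length of order $\eps\sqrt{n}$ centered near $2\sqrt{1-\delta^2}\sqrt{n-1}/\delta \approx (2/c)n$; for $c = 3$ this center lies strictly below $n-1$ and the interval has length at least $2$ (using $\eps > 8/\sqrt{n}$), so it contains an achievable integer $s$ of the correct parity. Then, supposing for contradiction that some $\h' = \{v\cdot x = c'\}$ contains all of $S$, I would exploit $T \ge 2$ (again because $\eps > 8/\sqrt{n}$) to exhibit pairs of points in $S^+$ differing (i) by a two-coordinate swap within $\{2,\ldots,n\}$ that preserves $s$, forcing $v_i = v_j$ for all $i,j \ge 2$, and (ii) by a single-coordinate flip within $\{2,\ldots,n\}$ that changes $s$ by $\pm 2$ while keeping both endpoints in $S^+$, forcing the common value $v_2 = \cdots = v_n$ to equal $0$. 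Consequently $\h'$ must be of the form $\{x_1 = c'/v_1\}$ with $v_1 \ne 0$, which cannot simultaneously contain a point of $S^+$ and a point of $S^-$, a contradiction.

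The main technical hurdle will be picking the absolute constant $c$ so that the three requirements---the lower bound $|S^+|/2^{n-1} \ge \eps/4$, the reachability $(2/c)n < n-1$ of the far-side cluster, and the threshold $T \ge 2$ needed for the span argument---all hold uniformly over every $\eps$ in the range addressed by the lemma. The tightest regime is $\eps \downarrow 8/\sqrt{n}$, where $T$ degenerates to the bounded constant $\approx 16/c$ and the CLT estimate must be replaced by a direct verification that $\sum_{|s|\le T}\binom{n-1}{(s+n-1)/2} \ge (\eps/8)\,2^n = 2^n/\sqrt{n}$; this reduces to the standard asymptotic for the central binomial coefficient together with the fact that the interval $[-T,T]$ contains at least one achievable level.
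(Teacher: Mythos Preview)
Your construction is correct and the argument goes through, but it differs substantially from the paper's. The paper splits coordinates into a ``head'' block of size $K = 4/\eps^2$ (weight $1$) and a ``tail'' block (weight $2/(n-K)$), takes $\h = \{x : \sum_{i\le K} x_i + \tfrac{2}{n-K}\sum_{i>K} x_i = 0\}$, and obtains the size bound directly from $\Pr[\sum_{i\le K} x_i = 0] \ge (2\sqrt{K})^{-1} = \eps/4$ together with a trivial tail estimate; no CLT or boundary case analysis is needed. For spanning, the paper exhibits an explicit extra point $z$ on $\h$ and argues that $A \cup \{z\}$ (where $A$ is the natural ``balanced'' subset) affinely spans all of $\R^n$.

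Your single-heavy-coordinate tilt and the swap/flip argument on $S^+$ are arguably more intuitive, and the reduction to ``$\h'$ must be $\{x_1=\text{const}\}$, contradicted by $S^-\ne\emptyset$'' is clean. The price is that your size bound goes through Berry--Ess\'een and genuinely needs the separate verification at $\eps \downarrow 8/\sqrt{n}$ (where the normal approximation error is comparable to the target probability); with $c=3$ and $n\ge 64$ the numbers work, but the paper's two-block construction sidesteps this entirely and yields the constant $\eps/8$ with no case analysis. Both approaches are elementary; the paper's is tighter on constants, yours is more geometric.
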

\begin{proof}
Without loss of generality, let us assume $K = 4/\epsilon^2$ is an even integer; note that by assumption $K < n/2.$  Now let us define the hyperplane
$\h$  by
$$
\h = \left\{x \in \mathbb{R}^n : (x_1 + \ldots +x_K)  + \frac{2(x_{K+1} + \ldots +x_n)}{(n-K)}  = 0\right\}
$$
Let us define $S = \{x \in \{-1,1\}^n  : d(x,\h) \le 4/ \sqrt{K (n-K)} \}$. It is easy to verify that
every $x \in S$ indeed satisfies $d(x,\h) \leq 2 \eps n^{-1/2}$ as claimed. Next, let us define $A$ as follows:
$$
A = \{x \in \{-1,1\}^n  : x_1 + \ldots + x_K=0$$ and  $$|x_{K+1} + \ldots +x_n | \le 2 \sqrt{n-K}\}.$$
It is easy to observe that $A \subseteq S$.  Also, we have
 $$\Pr_{x_1,\ldots,x_K} [x_1+\ldots +x_K=0] \ge (2\sqrt{K})^{-1}$$ and
 $$\Pr_{x_{K+1} , \ldots, x_n} [ |x_{K+1} + \ldots +x_n | \le 2 \sqrt{n-K} ] \ge 1/2.$$
Hence we have that $|S| \ge \epsilon 2^n/8.$
We also observe that the point $z \in \{-1,1\}^n$ defined as
\begin{equation}
\label{eq:z}
z:= (1,1,\underbrace{1,-1,\ldots, 1,-1}_{K-2}, -1,\ldots,-1)
\end{equation}
(whose first two coordinates are 1, next $K-2$ coordinates
alternate between $1$ and $-1$, and final $n-K$ coordinates
are $-1$)
lies on $\h$ and  hence $z \in S$.

We next claim that the dimension of the affine span of the points in $A \cup z$ is $n$. This obviously implies that there is no hyperplane which passes through all points in $A \cup z$, and hence no hyperplane which passes through all points in $S$.  Thus to prove the
lemma it remains only to prove the following claim:

\begin{claim}
The dimension of the affine span of the elements of $A \cup z$ is $n$.
\end{claim}
To prove the claim, we observe that if we let $Y$ denote
the affine span of elements in $A \cup z$ and $Y'$ denote
the linear space underlying $Y$,
then it suffices to show that the dimension of $Y'$ is $n$.  Each element
of $Y'$ is obtained as the difference of two elements in $Y$.

First, let $y \in \{-1,1\}^n$ be such that $$\sum_{i\le K} y_i  = \sum_{K+1 \le i \le n} y_i =0.$$  Let $y^{\oplus i} \in \{-1,1\}^n$
be obtained from $y$ by flipping the $i$-th bit.  For each $i \in
\{K + 1,\dots, n\}$ we have that $y$ and $y^{\oplus i}$ are both in $A$,
so subtracting the two elements, we get that the basis vector $e_i$ belongs
to $Y'$ for each $i \in \{K+1,\dots,n\}.$

Next, let $i \neq j \leq K$ be positions such that $y_i =1$ and
$y_j =-1$.  Let $y^{ij}$ denote the vector which is the same as $y$
except that the signs are flipped at coordinates $i$ and $j$.
Since $y^{ij}$ belongs to $A$, by subtracting $y$ from $y^{ij}$ we get that
for every vector $e_{ij}$ ($i \not = j \le K$) which has 1 in coordinate
$i$, $-1$ in coordinate $j$, and 0 elsewhere, the vector
$e_{ij}$ belongs to $Y'$.

The previous two paragraphs are easily seen to imply that the linear
space $Y'$ contains all vectors $x \in \R^n$ that satisfy the condition
$x_1 + \cdots + x_K = 0.$  Thus to show that the dimension of $Y'$
is $n$, it suffices to exhibit any vector in $Y'$ that
does not satisfy this condition.  But it is
easy to see that the vector $y - z$ (where $z$ is defined in (\ref{eq:z}))
is such a vector.  This concludes the proof of the claim and of
Lemma~\ref{lem:nearopt}.
\end{proof}

\ignore{
Finally, define $z' \in \{-1,1\}^n$ as the vector
$$
z':= (-1,1,1,-1,\ldots,1,-1)
$$
whose coordinates alternate between $-1$ and $1.$ {\huge UP TO HERE} Subtracting $z'$ from $z$,
we get a vector $z'' \in Y'$ which is of the following form : $z''$ has
$1$ in the first position, $0$ in the next $K-1$ positions and $1$ in the odd numbered positions of the next $n-K$ positions. Subtracting vectors of the form $e_i$ (for odd $i$'s in the interval $(K,n-K]$), we get that $e_1=(1,0,\ldots, 0)$ is in $Y'$. It is now easy to see that the dimension of the linear span of $e_1$, $e_i$ (for $K<i \le n$) and $e^{ij}$ for $i \not = j \le K$ is $n$ proving our claim.
}

\section{Useful variants of Goldberg's theorems} \label{sec:usefulgoldberg}

For technical reasons we require an extension of
Theorem~\ref{thm:goldberg-thm3} (Theorem~3 of \cite{Goldberg:06b}) which
roughly speaking is as follows: the hypothesis is that not only does the
set $S \subset \{-1,1\}^n$ lie close to hyperplane $\h$ but so also does
a (small) set $R$ of points in $\{0,1\}^n$; and the conclusion is that
not only does ``almost all'' of $S$ (the subset $S^*$) lie on $\h'$
\emph{but so also does all of $R$}. To obtain this extension we need a
corresponding extension of an earlier result of Goldberg (Theorem~2 of
\cite{Goldberg:06b}), which he uses to prove his Theorem~3; similar to
our extension of Theorem~\ref{thm:goldberg-thm3} our extension of
Theorem~2 of \cite{Goldberg:06b} deals with points from both
$\{-1,1\}^n$ and $\{0,1\}^n.$ The simplest approach we have found to
obtain our desired extension of Theorem~2 of \cite{Goldberg:06b} uses
the ``Zeroth Inverse Theorem'' of Tao and Vu \cite{TaoVu:annals}.  We
begin with a useful definition from their paper:


\begin{definition} Given a vector ${w}=(w_1,\dots,w_k)$ of real
values, the \emph{cube $S({w})$} is the subset of $\R$ defined as
\footnote{In \cite{TaoVu:annals} the cube is defined only allowing
$\eps_i \in \{-1,1\}$ but this is a typographical error; their proof
uses the $\eps_i \in \{-1,0,1\}$ version that we state.} $$
S({w}) = \left\{ \sum_{i=1}^k \epsilon_i w_i : (\epsilon_1,
\ldots, \epsilon_n) \in \{-1,0,1\}^n\right\}. $$ \end{definition}

The ``Zeroth Inverse Theorem'' of \cite{TaoVu:annals} is as follows:

\begin{theorem}\label{thm:taovu}
Suppose $w \in \mathbb{R}^n$, $d \in \mathbb{N}$ and  $\theta \in \mathbb{R}$
satisfy $\Pr_{x \in \{-1,1\}^n} [ w \cdot x =\theta ] > 2^{-d-1}$. Then
there exists a $d$-element subset $A=\{i_1,\dots,i_d\} \subset [n]$
such that for ${v} = (w_{i_1},\dots,w_{i_d})$ we have
 $\{w_1, \ldots, w_n\} \subseteq S({v})$.
\end{theorem}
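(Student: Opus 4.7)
The plan is to prove the theorem by a clean extremal/maximality argument that produces the required $d$-element set directly, with no induction or heavy additive combinatorics. Set $T \eqdef \{x \in \{-1,1\}^n : w \cdot x = \theta\}$, so the hypothesis reads $|T| > 2^{n-d-1}$. For a subset $C \subseteq [n]$ write $\pi_C : \{-1,1\}^n \to \{-1,1\}^C$ for the coordinate projection.

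The first step is to choose $B \subseteq [n]$ to be \emph{maximal} (with respect to inclusion) subject to the constraint that the restriction $\pi_{[n] \setminus B}\big|_T$ is injective. Such a $B$ exists: the family of subsets with this property is nonempty (it contains $B = \emptyset$, since $\pi_{[n]}$ is the identity) and is a subfamily of the finite power set $2^{[n]}$. Injectivity immediately forces $|T| \leq 2^{n - |B|}$, and combining with $|T| > 2^{n-d-1}$ yields $|B| \leq d$. I would then take $A$ to be $B$ padded with arbitrary additional indices up to size exactly $d$; enlarging $B$ can only enlarge the cube $S(v)$, so it suffices to verify $w_j \in S\bigl((w_i)_{i \in B}\bigr)$ for every $j \in [n]$.

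For $j \in B$ this is trivial (take coefficient $1$ on $w_j$ and $0$ on the other chosen coordinates). For $j \notin B$, maximality of $B$ says that $\pi_{[n] \setminus (B \cup \{j\})}\big|_T$ is \emph{not} injective, so there exist distinct $x, y \in T$ with $x_i = y_i$ for every $i \notin B \cup \{j\}$. Injectivity of $\pi_{[n] \setminus B}\big|_T$ then forces $x_j \neq y_j$ (otherwise $x$ and $y$ would agree on all of $[n] \setminus B$, contradicting $x \neq y$). Expanding $w \cdot (x - y) = 0$, the terms indexed outside $B \cup \{j\}$ vanish and we are left with
$$ w_j \cdot (x_j - y_j) \;=\; -\sum_{i \in B} w_i \cdot (x_i - y_i).$$
Since $x_j - y_j \in \{-2, +2\}$ and each $(x_i - y_i)/2 \in \{-1, 0, 1\}$, dividing by $x_j - y_j$ exhibits $w_j$ as a $\{-1, 0, 1\}$-combination of $(w_i)_{i \in B}$; that is, $w_j \in S\bigl((w_i)_{i \in B}\bigr) \subseteq S(v)$.

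I do not anticipate any serious obstacle: the entire argument is essentially a one-shot pigeonhole, using only that $x - y \in \{-2, 0, 2\}^n$ for $x, y \in \{-1,1\}^n$ and that Odlyzko-style injectivity bounds control the rank of the projection. The only minor bookkeeping point is the padding step when $|B| < d$, which is harmless since adding indices only grows the cube $S(v)$.
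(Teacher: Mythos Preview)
Your proof is correct and is a genuinely different (and arguably cleaner) argument than the one in the paper. The paper proceeds by contradiction: assuming no $d$-element subset works, a greedy construction produces indices $i_1,\dots,i_{d+1}$ with each $w_{i_k}$ outside the cube of its predecessors; an averaging step then restricts the hypothesis $\Pr[w\cdot x=\theta]>2^{-d-1}$ to these $d+1$ coordinates, and a pigeonhole collision there yields a nontrivial $\{-1,0,1\}$-relation among $w_{i_1},\dots,w_{i_{d+1}}$, contradicting the greedy property. Your approach instead builds the set $B$ directly by maximizing subject to the injectivity of $\pi_{[n]\setminus B}|_T$; the size bound on $B$ drops out of the pigeonhole $|T|\leq 2^{n-|B|}$, and maximality immediately manufactures, for each $j\notin B$, the pair $x,y\in T$ whose difference expresses $w_j$ as a $\{-1,0,1\}$-combination of $(w_i)_{i\in B}$.

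The two arguments share the same combinatorial core (a collision in $T$ yields a $\{-1,0,1\}$-relation), but your version avoids both the contradiction wrapper and the separate averaging step, and it is constructive in the sense that $B$ is explicitly characterized. The paper's route has the minor advantage of making the ``dissociated set'' structure explicit, which connects to the standard Tao--Vu language; your route is shorter and more self-contained. The padding remark at the end is fine (it tacitly uses $n\geq d$, but the theorem statement already presumes a $d$-element subset of $[n]$ exists).
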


For  convenience of the reader, we include the proof here.
\begin{proof}[Proof of Theorem~\ref{thm:taovu}]
Towards a contradiction, assume that there is no ${v} = (w_{i_1},\dots,w_{i_d})$ such that
 $\{w_1,$ $\ldots,$ $w_n\} \subseteq S({v})$. Then an obvious greedy argument shows that there are
 distinct integers $i_1, \ldots, i_{d+1} \in [n]$ such that $w_{i_1}, \ldots, w_{i_{d+1}}$ is \emph{dissociated},
i.e. there does not exist $j \in [n]$ and $\epsilon_{i} \in \{-1,0,1\}$ such that
 $ w_j = \littlesum_{i \not =j} \epsilon_i w_i$.

Let $v =(w_{i_1}, \ldots, w_{i_{d+1}})$. By an averaging argument, it is easy to see that
if $\Pr_{x \in \{-1,1\}^n} [ w \cdot x =\theta ] > 2^{-d-1}$, then $\exists \nu \in \mathbb{R}$ such that
$\Pr_{x \in \{-1,1\}^{d+1}} [ v \cdot x =\nu ] > 2^{-d-1}$. By the pigeon hole principle, this means that there exist $x,y \in\{-1,1\}^{d+1}$ such that $x\not =y$ and $v \cdot ((x-y)/2)=0$. Since entries of $(x-y)/2$ are in $\{-1,0,1\}$, and not all the entries in $(x-y)/2$ are zero, this means that $v$ is not dissociated resulting in a contradiction.
\end{proof}

Armed with this result, we now prove the extension of Goldberg's
Theorem~2 that we will need later:

\begin{theorem}\label{thm:goldberg2}
Let $w \in \mathbb{R}^n$ have $\Vert w \Vert_2=1$ and let
$\theta \in \mathbb{R}$ be such that $\Pr_{x \in \{-1,1\}^n} [ w\cdot x =
\theta] = \alpha$. Let $\h$ denote the hyperplane
$\h = \{ x  \in \mathbb{R}^n \mid  w \cdot x = \theta \}$.
Suppose that $\span(
\h \cap( \{-1,1\}^n \cup \{0,1\}^n)) = \h$,
i.e. the affine span of the points in
$\{-1,1\}^n \cup \{0,1\}^n$ that lie on $\h$ is $\h$.
Then all entries of $w$ are integer multiples of $f(n,\alpha)^{-1}$, where
$$
f(n,\alpha) \leq (2n)^{\lfloor \log(1/\alpha) \rfloor + 3/2} \cdot (\lfloor \log(1/\alpha) \rfloor)!
$$
\end{theorem}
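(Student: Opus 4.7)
My plan is to combine the Zeroth Inverse Theorem (Theorem~\ref{thm:taovu}) with the spanning hypothesis to reduce the problem to a small integer linear system, and then extract a rational bound on the $w_j$'s via a Cramer-style determinant argument. The argument follows the overall shape of Goldberg's original Theorem~2 proof, but the $\{0,1\}^n$ points cause no extra difficulty because all that is ever used about the spanning points is that their coordinates lie in $\{-1,0,1\}$.

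First, I would apply Theorem~\ref{thm:taovu} with $d := \lfloor \log(1/\alpha) \rfloor$. Since $\alpha > 2^{-(d+1)}$, this gives indices $i_1,\dots,i_d\in[n]$ and coefficients $\epsilon_{j,k}\in\{-1,0,1\}$ such that $w_j=\sum_{k=1}^d \epsilon_{j,k}\,w_{i_k}$ for every $j\in[n]$. This collapses the unknowns $(w_1,\dots,w_n,\theta)$ to the $d+1$ unknowns $(w_{i_1},\dots,w_{i_d},\theta)$. Next, by the spanning hypothesis I pick $n$ affinely independent points $x^{(1)},\dots,x^{(n)}\in \h\cap(\{-1,1\}^n\cup\{0,1\}^n)$; each yields the equation $\sum_j x^{(\ell)}_j w_j=\theta$. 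Substituting the Tao--Vu expressions converts this into $\sum_k c^{(\ell)}_k w_{i_k}-\theta=0$, where $c^{(\ell)}_k=\sum_j \epsilon_{j,k}x^{(\ell)}_j$ is an integer of magnitude at most $n$ (since each $x^{(\ell)}_j$ and each $\epsilon_{j,k}$ lies in $\{-1,0,1\}$). So $(w_{i_1},\dots,w_{i_d},\theta)$ lies in the kernel of an $n\times(d+1)$ integer matrix $M$ with entries bounded by $n$, and the last column of $M$ is the all-$(-1)$ vector.

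Second, I would argue that $M$ has rank exactly $d$. One direction is clear: $(w_{i_1},\dots,w_{i_d},\theta)$ is nonzero, so the kernel has dimension at least one. For the reverse, any $(u,v)\in\ker(M)$ induces via $w'_j=\sum_k \epsilon_{j,k}u_k$ and $\theta'=v$ a pair $(w',\theta')$ whose associated hyperplane contains every $x^{(\ell)}$; because the $x^{(\ell)}$ span $\h$ affinely, $(w',\theta')$ must be a scalar multiple of $(w,\theta)$, and hence $(u,v)$ is a scalar multiple of $(w_{i_1},\dots,w_{i_d},\theta)$. Thus the kernel is one-dimensional, and picking a $d\times(d+1)$ rank-$d$ submatrix of $M$ gives a single (up to sign) spanning null vector $y=(y_1,\dots,y_{d+1})\in\mathbb{Z}^{d+1}$ whose entries are $\pm$ the $d\times d$ minors of that submatrix; each minor has magnitude at most $d!\,n^d$.

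Finally, $(w_{i_1},\dots,w_{i_d},\theta)=\lambda\,y$ for some $\lambda>0$, so every $w_j=\lambda\sum_k\epsilon_{j,k}y_k$ is $\lambda$ times an integer. To bound $\lambda^{-1}$ I use $\|w\|_2=1$: since $|\sum_k \epsilon_{j,k}y_k|\le d\cdot d!\,n^d$, we get
\[
\lambda^{-2}=\sum_{j=1}^n\Bigl(\sum_k\epsilon_{j,k}y_k\Bigr)^2\le n\cdot(d\cdot d!\,n^d)^2,
\]
hence $\lambda^{-1}\le d\cdot d!\,n^{d+1/2}$, which is at most $(2n)^{d+3/2}\cdot d!$ since $d\le 2^{d+3/2}\,n$ for all $d\ge 0$, $n\ge 1$. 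So every $w_j$ is an integer multiple of $\lambda$, with $1/\lambda\le f(n,\alpha)$ as required.

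The main obstacle I expect is the rank/uniqueness argument in the middle step: one has to be careful that the reduction from $(w,\theta)\in\mathbb{R}^{n+1}$ to $(w_{i_1},\dots,w_{i_d},\theta)\in\mathbb{R}^{d+1}$ preserves the one-dimensionality of solutions to the hyperplane equations. Everything else is routine: the Tao--Vu step is a single invocation, and the final Cramer/normalization bookkeeping is clean because the coefficients produced by $\{-1,0,1\}$-valued entries never exceed $n$.
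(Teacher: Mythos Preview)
Your proposal is correct and follows essentially the same route as the paper's proof: apply the Zeroth Inverse Theorem to reduce to $d=\lfloor\log(1/\alpha)\rfloor$ ``base'' weights, substitute into the linear equations coming from the spanning points (whose coordinates lie in $\{-1,0,1\}$, giving integer coefficients bounded by $O(n)$), and finish with a Cramer/minor bound plus the normalization $\|w\|_2=1$; the paper's only cosmetic difference is that it eliminates $\theta$ by taking differences $w'\cdot(x-y)=0$ (coefficients bounded by $2n$) and normalizes the largest base weight to $1$ before Cramer's rule, whereas you keep $\theta$ and use the null-vector-of-minors formulation. Your flagged concern about one-dimensionality is handled exactly by choosing the canonical representation $\epsilon_{i_k,k'}=\delta_{k,k'}$ (always available), which makes the map $(u,v)\mapsto(w',\theta')$ injective and closes the rank argument.
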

\begin{proof}
We first observe that $w \cdot( x-y) =0$ for any two points $x,y$ that both lie on $\h.$
Consider the system of homogeneous linear equations in variables $w'_1,\dots,w'_n$ defined by
\begin{equation} \label{eq:system}
w' \cdot (x-y) =0 \quad \quad \mbox{for all~}x,y \in \h \cap( \{-1,1\}^n \cup \{0,1\}^n).
\end{equation}
Since $\span(\h \cap( \{-1,1\}^n \cup
\{0,1\}^n))$ is by assumption the entire hyperplane $\h$, the system (\ref{eq:system}) must
have rank $n-1$; in other words, every solution $w'$ that satisfies (\ref{eq:system}) must be some rescaling
$w' = cw$ of the vector $w$ defining $\h$.

Let $A$ denote a subset of $n-1$ of the equations comprising (\ref{eq:system}) which has rank $n-1$
(so any solution to $A$ must be a vector $w' = cw$ as described above).  We note that each coefficient
in each equation of $A$ lies in $\{-2,-1,0,1,2\}.$  Let us define $d = \lfloor \log(1/\alpha) \rfloor +1$. By Theorem~\ref{thm:taovu},
there is some $w_{i_1}, \ldots, w_{i_{d'}}$ with $d' \leq d$ such that for ${v} \eqdef
(w_{i_1},\ldots, w_{i_{d'}})$, we have  $\{w_1, \ldots, w_n\} \subseteq
S({v})$; in other words, for all $j \in [n]$ we have
$w_j = \sum_{\ell =1}^{d'} \epsilon_{\ell,j} w_{i_\ell}$ where each $\eps_{\ell,j}$
belongs to $\{-1,0,1\}.$
%
Substituting these relations into the system $A$, we  get a new system of homogenous
linear equations, of rank $d'-1$, in the variables $w'_{i_1},
\ldots, w'_{i_{d'}}$, where all coefficients of all variables in all equations of the system are
 integers of magnitude at most $2n.$

Let $M$ denote a subset of $d'-1$ equations from this new system which has rank $d'-1.$
In other words, viewing $M$ as a $d' \times (d' - 1)$
matrix, we have the equation
$M \cdot {v}^T =0$ where all entries in the matrix $M$
are integers in $[-2n,2n]$. Note that at least one of the values
$w_{i_1}, \ldots , w_{i_{d'}}$ is non-zero (for if all of them were 0, then since
 $\{w_1, \ldots, w_n\} \subseteq S({v})$ it would have to be the case that $w_1 = \cdots = w_n = 0.$).
Without loss of generality we may suppose that $w_{i_1}$ has the largest magnitude among
$w_{i_1},\dots,w_{i_{d'}}$.  We now fix the scaling constant $c$, where $w'=cw$, to be such that
$w'_{i_1} = 1.$  Rearranging the system $M (c {v})^T = M (1,w'_{i_2},\dots,w'_{i_{d'}})^T = 0$,
we get a new system of $d'-1$ linear equations $M' (w'_{i_2},\dots,w'_{i_{d'}})^T = b$
where $M'$ is a $(d'-1) \times (d'-1)$ matrix whose entries are integers in $[-2n,2n]$
and $b$ is a vector whose entries are integers in $[-2n,2n].$

%
%

We now use Cramer's rule to solve the system $$M' (w'_{i_2},\dots,w'_{i_{d'}})^T = b.$$
This gives us that
$w'_{i_j} = \det(M'_j)/\det(M')$ where $M'_j$ is the matrix obtained by
replacing the $j^{th}$ column of $M'$ by $b$. So each
$w'_{i_j}$ is an integer multiple of $1/\det(M')$ and is bounded by $1$
(by our earlier assumption about $w_{i_1}$ having the largest magnitude).
Since $\{w'_1, \ldots, w'_n \} \subseteq S({v})$,
we get that each value $w'_i$ is an integer multiple of $1/\det(M')$, and each
$|w'_i| \leq n.$  Finally, since  $M'$ is a $(d'-1)
\times (d'-1)$
matrix where every entry is an integer of magnitude at most $2n$, we have  that
$|\det(M') | \le (2n)^{d'-1} \cdot (d'-1) ! \leq (2n)^{d-1} \cdot (d-1)!$.  Moreover, the
$\ell_2$ norm of the vector $w'$ is bounded by $n^{3/2}$. So renormalizing (dividing by $c$)
to obtain the unit vector $w$ back from $w' = c w$,  we see that
every entry of $w$ is an integer multiple of $1/N$, where $N$ is a quantity at most
$(2n)^{d + 1/2}
\cdot d!$. Recalling that $d = \lfloor \log(1/\alpha) \rfloor +1$, the theorem is proved.
\end{proof}

We next  prove the extension of Theorem~3 from \cite{Goldberg:06b} that we require. The proof is almost identical to the proof in \cite{Goldberg:06b} except for the use of Theorem~\ref{thm:goldberg2} instead of Theorem~2 from \cite{Goldberg:06b} and a few other syntactic changes. For the sake of clarity and completeness, we give the complete proof here.

\begin{theorem} \label{thm:newgoldberg3}
Given any hyperplane $\h$ in $\mathbb{R}^n$ whose $\beta$-neighborhood contains a subset $S$ of vertices of $\{-1,1\}^n$ where $S=\alpha \cdot 2^n$, there exists a hyperplane which passes through all the points of $(\{-1,1\}^n \cup \{0,1\}^n)$ that are contained in the $\beta$-neighborhood of $\h$ provided that
$$
0 \le \beta \le \left( (2/\alpha) \cdot n^{5 + \lfloor \log(n/\alpha)\rfloor }   \cdot (2 + \lfloor \log(n/\alpha) \rfloor)! \right)^{-1}.
$$
\end{theorem}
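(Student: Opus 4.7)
I will closely follow the structure of Goldberg's proof of Theorem~3 in \cite{Goldberg:06b}, substituting Theorem~\ref{thm:goldberg2} wherever the original argument invokes Theorem~2 of \cite{Goldberg:06b}. The key syntactic change is that throughout the argument, whenever the original proof considers the affine span of hypercube points on a candidate hyperplane, I work instead with the affine span of the points of $\{-1,1\}^n \cup \{0,1\}^n$ lying on that hyperplane, which is exactly the hypothesis Theorem~\ref{thm:goldberg2} is tailored to handle.

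\textbf{Setup and reduction.} Write $\h = \{x : w \cdot x = \theta\}$ with $\|w\|_2 = 1$, and let $T \eqdef \{y \in \{-1,1\}^n \cup \{0,1\}^n : |w \cdot y - \theta| \le \beta\}$ denote the full set of $0/1$-cube and $\pm 1$-cube points in the $\beta$-neighborhood of $\h$. Producing a hyperplane $\h'$ containing $T$ is equivalent to showing that $\dim(\mathrm{span}(T)) \le n-1$, since any proper affine subspace lies inside a hyperplane. By hypothesis the subset $S \subseteq T \cap \{-1,1\}^n$ has size $\alpha \cdot 2^n$.

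\textbf{Iterative construction of $\h'$.} I will inductively produce candidate hyperplanes $\h_0, \h_1, \dots, \h_k$ such that (i) each $\h_i$ is close to $\h$ in a quantitatively controlled sense, (ii) the set $T_i \eqdef T \cap \h_i$ of points of $T$ lying exactly on $\h_i$ strictly grows, and (iii) $T_i \cap \{-1,1\}^n$ remains a sufficiently large fraction of $\{-1,1\}^n$ so that Theorem~\ref{thm:goldberg2} can be applied. Concretely, start from the hyperplane $\h_0$ obtained by picking any point of $T$ and translating $\h$ so as to pass through it; at step $i$, if $T_i$ already contains all of $T$, we are done and set $\h' = \h_i$. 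Otherwise, pick $y \in T \setminus T_i$ and let $\h_{i+1}$ be the hyperplane obtained by rotating $\h_i$ about the affine span of $T_i$ (restricted if necessary) so that it passes through $T_i \cup \{y\}$. Because $y$ lies within distance $\beta$ of $\h$ and $\h_i$ is close to $\h$, the rotation moves the hyperplane only slightly. Once $T_i$ affinely spans $\h_i$ (which happens after at most $n$ rotations), Theorem~\ref{thm:goldberg2} forces the unit normal $w_i$ to have entries that are integer multiples of $1/f(n, \alpha_i)$, where $\alpha_i = |T_i \cap \{-1,1\}^n|/2^n$.

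\textbf{Closing the argument and main obstacle.} The final step is a rigidity argument: all the $\h_i$'s lie in a narrow ``tube'' around $\h$, yet their normals must lie in a fixed rational lattice at scale $1/f(n, \alpha)$. Because two distinct hyperplanes with normals in this lattice are separated by at least $\Omega(1/f(n,\alpha))$ in the appropriate geometric sense, the upper bound $\beta \le ((2/\alpha) \cdot n^{5 + \lfloor \log(n/\alpha) \rfloor} \cdot (2 + \lfloor \log(n/\alpha) \rfloor)!)^{-1}$ assumed in the statement is small enough to force the iteration to stabilize at a hyperplane $\h'$ that already contains every point of $T$. The main obstacle, as in Goldberg's original argument, is the careful bookkeeping at the inductive step: one must verify that after each rotation the fraction $\alpha_i$ of $\{-1,1\}^n$-points still on the current hyperplane does not degrade by more than a polynomial factor, so that the discretization scale $1/f(n, \alpha_i)$ guaranteed by Theorem~\ref{thm:goldberg2} remains at least the displayed upper bound on $\beta$ times a polynomial in $n$. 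The extension from the $\{-1,1\}^n$-only case to points in $\{-1,1\}^n \cup \{0,1\}^n$ requires no new ideas once Theorem~\ref{thm:goldberg2} is in hand, since the affine-span hypothesis there already permits points from both cubes.
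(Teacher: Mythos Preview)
Your plan has the right overall skeleton (iterate to a hyperplane spanned by cube points, then invoke Theorem~\ref{thm:goldberg2}), but it misses the two load-bearing steps of Goldberg's argument, and as written the iteration does not give the bound in the statement.

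\textbf{The missing rounding/pigeonhole step.} The paper does \emph{not} start by translating $\h$ through a single point of $T$. It first rounds each coordinate of the unit normal $w$ to the nearest multiple of $\beta$, obtaining $w'$. Because every $x\in S'=T$ has $\{-1,0,1\}$ entries, $w'\cdot x$ takes at most $n+2$ distinct values on $S'$; pigeonholing then gives a value $t'$ with at least $\alpha 2^n/(n+2)$ points of $S\subseteq\{-1,1\}^n$ satisfying $w'\cdot x=t'$ exactly. This yields a starting hyperplane $\h_1$ and an affine span $A_1\subseteq\h_1$ with $\dim(A_1)\ge n-\log(n+2)-\log(1/\alpha)$ by Fact~\ref{fac:affine}. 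This step is what makes the iteration short: only $k\le\log(n+2)+\log(1/\alpha)$ more rotations are needed, not $n$. Your translation start gives no lower bound on $\dim(\span(T_0))$, so your stated bound of ``at most $n$ rotations'' is tight in the worst case and is fatal for the distance estimate below.

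\textbf{Distance control and choice of pivot.} In the paper's inductive step, one chooses a codimension-$2$ flat $A'_j$ with $A_j\subseteq A'_j\subsetneq\h_j$ and then takes $x_j=\arg\max_{x\in S'}d(x,A'_j)$; the argmax is essential, since the inequality $d(x,\h_{j+1})\le d(x,\h_j)+d(x_j,\h_j)$ uses $\|x_{\perp A'_j}\|\le\|(x_j)_{\perp A'_j}\|$. This gives $d(x,\h_j)\le 2^j n\beta$, and with $k\le\log((n+2)/\alpha)$ one lands at $d(x,\h_k)\le (n/\alpha)(n+2)\beta$. Your proposal picks an arbitrary $y\in T\setminus T_i$, which does not yield this bound; and even if it did, with up to $n$ rotations you would get $d(x,\h_n)\lesssim 2^n n\beta$, far too large to close.

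\textbf{The closing step.} The paper applies Theorem~\ref{thm:goldberg2} exactly once, to the terminal $\h_k$: since $\span(\h_k\cap(\{-1,1\}^n\cup\{0,1\}^n))=\h_k$ and $|\h_k\cap\{-1,1\}^n|\ge\alpha 2^n/(n+2)$, the normal of $\h_k$ has entries in $E^{-1}\mathbb Z$ with $E\le(2n)^{\lfloor\log((n+2)/\alpha)\rfloor+3/2}(\lfloor\log((n+2)/\alpha)\rfloor)!$. Hence any $x\in\{-1,1\}^n\cup\{0,1\}^n$ with $d(x,\h_k)<E^{-1}$ actually lies on $\h_k$; the hypothesis on $\beta$ is exactly what makes $(n/\alpha)(n+2)\beta<E^{-1}$. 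There is no ``lattice of hyperplanes'' rigidity argument, and the fraction $\alpha$ does not degrade during the iteration at all (the initial $A_1$ is contained in every $\h_j$), so the bookkeeping you flag as the main obstacle is not where the difficulty lies.
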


Before giving the proof, we note that the hypothesis of our theorem is the same as the
hypothesis of Theorem~3 of \cite{Goldberg:06b}. The only difference in the conclusion is that while Goldberg proves that all points of $\{-1,1\}^n$ in the $\beta$-neighborhood of $\h$ lie on the new hyperplane, we prove this for all the points of  $(\{-1,1\}^n \cup \{0,1\}^n)$  in the $\beta$-neighborhood of $\h$.

\begin{proof}
Let $\h = \{x \mid w\cdot x - t =0\}$ with $\Vert w \Vert =1$.  Also, let $S = \{x \in \{-1,1\}^n \mid d(x,\h) \le \beta\}$ and $S' = \{x \in (\{-1,1\}^n \cup \{0,1\}^n)  \mid d(x,\h) \le \beta\}$.
For any $x \in S'$ we have that $w\cdot x  \in [t-\beta, t+\beta]$. Following \cite{Goldberg:06b} we create a new weight vector $w' \in \mathbb{R}^n$ by rounding each coordinate $w_i$ of $w$ to the nearest integer multiple of $\beta$ (rounding up in case of a tie).  Since every $x \in S'$ has entries from $\{-1,0,1\}$, we can deduce that for any $x \in S'$, we have
$$
t - \beta -n \beta/2 <
 w \cdot x - n \beta/2 < w' \cdot x < w \cdot x + n\beta/2 \le t + \beta + n \beta/2.
$$
Thus  for every $x \in S'$, the value $w' \cdot x$ lies in a semi-open interval of length $\beta (n+2)$;
moreover, since it only takes values which are integer multiples of $\beta$,
there are at most $n+2$ possible values that $w' \cdot x$ can take for $x \in S'.$
Since $S \subset S'$ and $|S| \ge \alpha 2^n$, there must be at least one value $t' \in (t - n \beta/2 - \beta , t + n \beta/2 + \beta]$ such that at least $\alpha2^n/(n+2)$ points in $S$ lie on the hyperplane $\h_1$ defined as $\h_1 = \{x : w' \cdot x = t'\}$.
We also let $A_1 = \span \{x \in S' : w' \cdot x =t'\}$. It is clear that $A_1 \subset \h_1$.  Also, since at least $\alpha2^n/(n+2)$ points of $\{-1,1\}^n$ lie on $A_1$, by Fact~\ref{fac:affine} we get that $\dim(A_1) \ge n - \log(n+2) -\log(1/\alpha)$.

It is easy to see that $\Vert w' - w \Vert \le \sqrt{n} \beta/2$, which implies that $\Vert w' \Vert \ge 1 - \sqrt{n} \beta/2$. Note that for any $x \in S'$ we have $|w' \cdot x -t'| \le (n+2) \beta$.  Recalling Fact~\ref{fac:hyper-dist}, we get that for any $x \in S'$ we have $d(x, \h_1) \le (\beta (n+2))/(1- \sqrt{n} \beta/2)$.  Since $\sqrt{n} \beta \ll 1$, we get that $d(x, \h_1) \le 2n \beta$ for every $x \in S'.$

At this point our plan for the rest of the proof of Theorem~\ref{thm:newgoldberg3} is as follows:  First we will construct a hyperplane $\h_k$ (by an inductive construction) such that $\span(\h_k \cap (\{-1,1\}^n \cup \{0,1\}^n)) = \h_k$, $A_1 \subseteq \h_k$, and all points in $S'$ are very close to $\h_k$ (say within
Euclidean distance $\gamma$).  Then we will apply Theorem~\ref{thm:goldberg2} to conclude that any point $\{-1,1\}^n \cup \{0,1\}^n$ which is not on $\h_k$ must have Euclidean distance at least some $\gamma'$ from $\h_k$. If $\gamma' > \gamma$ then  we can infer that every point in $S'$ lies on $\h_k$, which proves the theorem. We now describe the construction that gives $\h_k.$

If $\dim(A_1) = n-1$, then we let $k=1$ and stop the process, since as desired we have $\span(\h_k \cap (\{-1,1\}^n \cup \{0,1\}^n)) = \h_k$, $A_1 = H_k$, and $d(x,\h_k) \leq 2 n \beta$ for every $x \in S'.$
 Otherwise, by an inductive hypothesis, we may assume that for some $j \ge 1$ we have an affine space $A_j$ and a hyperplane $\h_j$ such that

 \begin{itemize}

 \item $A_1 \subseteq A_j \subsetneq \h_j$;
 \item $\dim(A_j) = \dim(A_1) + j-1$, and
 \item for all $x \in S'$ we have $d(x,\h_j) \le 2^{j} n \beta$.

 \end{itemize}

 Using this inductive hypothesis, we will construct an affine space $A_{j+1}$ and a hyperplane $\h_{j+1}$ such that $A_1 \subset A_{j+1} \subseteq \h_{j+1},$  $\dim(A_{j+1}) = \dim(A_1) + j$, and for all $x \in S'$ we have $$d(x,\h_{j+1}) \le 2^{j+1} n \beta.$$ If $A_{j+1} = \h_{j+1}$, we stop the process, else we continue.

We now describe the inductive construction. Since $A_j \subsetneq \h_j$, there must exist an affine
subspace $A'_j$ such that $A_j \subseteq A'_j \subsetneq \h_j$ and $\dim(A'_j) =n-2$. Let $x_j$ denote $\arg \max_{x \in S'} d(x, A'_j)$. (We assume that $ \max_{x \in S'} d(x, A'_j)>0$; if not, then choose $x_j$ to be an arbitrary point in $\{-1,1\}^n$ not lying on $A'_j$. In this case, the properties of the inductive construction will trivially hold.)  Define $\h_{j+1} = \span(A'_j \cup x_j)$.  It is clear that $\h_{j+1}$ is a hyperplane. We claim that
for $x \in S'$ we have
$$
d(x,\h_{j+1}) \le d(x,\h_j) + d(x_j,\h_j) \le 2^{j} n \beta + 2^j n \beta = 2^{j+1} n \beta.
$$
To see this, observe that without loss of generality we may assume that $\h_j$ passes through the origin and thus $A'_j$ is a linear subspace. Thus we have that $\Vert x_{\perp A'_j} \Vert  \le
\Vert (x_j)_{\perp A'_j} \Vert$ for all $x\in S'$, where for a point $z \in \R^n$ we write $z_{\perp A'_j}$ to denote the component of $x$ orthogonal to $A'_j$. Let $r = \Vert x_{\perp A'_j} \Vert$ and $r_1 = \Vert x_{j,\perp A'_j} \Vert$, where $r_1\ge r$.  Let $\theta$ denote the angle that $x_{\perp A'_j}$ makes with $\h_j$ and let $\phi$ denote the angle that $x_{\perp A'_j}$ makes with  $(x_j)_{\perp A'_j}$.  Then it is easy to see that $d(x,\h_{j+1}) = |r \cdot \sin(\theta -\phi)|$, $d(x,\h_{j}) = |r \cdot \sin(\theta)|$ and $d(x_j,\h_{j}) = |r_1 \cdot \sin(\phi)|$. Thus, we only need to check that if $r_1\ge r$, then $|r \cdot \sin(\theta -\phi)| \le |r \cdot \sin(\theta)| + |r_1 \cdot \sin(\phi)|$ which is straightforward to check.

Let $A_{j+1} = \span(A_j \cup x_j)$ and note that $A_1\subset A_{j+1} \subseteq \h_{j+1}$ and $\dim(A_{j+1}) = \dim(A_j) + 1 $. As shown above,  for all $x \in S'$ we have $d(x, \h_{j+1}) \le 2^{j+1} n \beta$. This completes the inductive construction.

Since $\dim(A_1) \ge n - \log(n+2) - \log(1/\alpha)$, the process must terminate for some $k \le \log(n+2) + \log(1/\alpha)$. When the process terminates, we have a hyperplane $\h_k$ satisfying the following properties:
\begin{itemize}
\item $\span(\h_k \cap (\{-1,1\}^n \cup \{0,1\}^n)) = \h_k$; and
\item $|\h_k \cap S| \ge \alpha2^n/(n+2)$; and
\item for all $x\in S'$ we have $d(x, \h_k) \le 2^k n \beta \le (1/\alpha) n(n+2) \beta$.
\end{itemize}
We can now apply Theorem~\ref{thm:goldberg2} to the hyperplane $\h_k$ to get that if $\h_k = \{x  \mid v \cdot x - \nu=0\}$ with $\Vert v \Vert =1$, then all the entries of $v$ are integral multiples of a quantity $E^{-1}$ where
$$
E \leq  (2n)^{\lfloor \log((n+2)/\alpha)\rfloor + 3/2}   \cdot (\lfloor \log((n+2)/\alpha) \rfloor)! .
$$
Consequently $v \cdot x$ is an integral multiple of $E^{-1}$ for every $x \in (\{-1,1\}^n \cup \{0,1\}^n)$.  Since there are points of $\{-1,1\}^n$ on $\h_k$, it must be the case that $\nu$ is also an integral multiple of $E$.  So if any $x \in  (\{-1,1\}^n \cup \{0,1\}^n)$ is such that $d(x, \h_k) < E$, then $d(x,\h_k)=0$
and hence $x$ actually lies on $\h_k$.  Now recall that for any $x\in S'$ we have $d(x,\h_k) \le (n/\alpha) (n+2) \beta$.  Our upper bound on $\beta$ from the theorem statement ensures that  $(n/\alpha) (n+2) \beta <E^{-1}$, and consequently every $x \in S'$ must lie on $\h_k$, proving the theorem.
\end{proof}

\ignore{OLD STUFF BELOW:}

\end{document}